\documentclass[11pt]{article}


\usepackage{mathpazo}
\usepackage{amsfonts}
\usepackage{amsmath}
\usepackage{graphicx}
\usepackage{latexsym}
\usepackage{mathabx}

\usepackage[margin=1.05in]{geometry}
  
\usepackage[T1]{fontenc}
\usepackage{times}
\usepackage{color,graphicx}
\usepackage{array}
\usepackage{enumerate}
\usepackage{amsmath}
\usepackage{amssymb}
\usepackage{amsthm}
\usepackage{pgfplots}
\usepackage{pgf}
\usepackage{tikz}
\usetikzlibrary{patterns}
\usepgfplotslibrary{patchplots} 
\usetikzlibrary{pgfplots.patchplots} 
\pgfplotsset{width=9cm,compat=1.5.1}

\usepackage{enumitem}
\usepackage[english]{babel}
\usepackage{amsthm}
\newtheorem{theorem}{Theorem}

\newtheorem{definition}[theorem]{Definition}
\newtheorem{lemma}[theorem]{Lemma}
\newtheorem{corollary}[theorem]{Corollary}
\newtheorem{remark}[theorem]{Remark}
\newtheorem{proposition}[theorem]{Proposition}

\newtheorem{example}[theorem]{Example}

\newcommand{\zero}{{\bf{0}}}
\newcommand{\ones}{{\bf{1}}}

\newcommand{\be}{\mathbf e}
\newcommand{\bv}{\mathbf v}
\newcommand{\bx}{\mathbf x}
\newcommand{\by}{\mathbf y}
\newcommand{\bz}{\mathbf z}
\newcommand{\bu}{\mathbf u}
\newcommand{\bw}{\mathbf w}


\usepackage{xcolor}
\usepackage{makeidx}
\usepackage[pdftex,linktocpage=true,colorlinks,citecolor=blue,linkcolor=blue,pagebackref]{hyperref}
\usepackage{hyperref}
\usepackage{cleveref}
\usepackage[alphabetic,backrefs]{amsrefs}

\def\C{\mathbb{C}}
\def\R{\mathbb{R}}




\begin{document}

\title{Perfect state transfer between real pure states}

\author{Chris Godsil\textsuperscript{1}, Stephen Kirkland\textsuperscript{2}, and Hermie Monterde\textsuperscript{2}
}

\maketitle

\begin{center}
{\textit{Dedicated to Agnes T.\ Paras on the occasion of her 60th birthday}}
\end{center}

\begin{abstract}
Pure states correspond to one-dimensional subspaces of $\mathbb{C}^n$ represented by unit vectors. In this paper, we develop the theory of perfect state transfer (PST) between real pure states with emphasis on the adjacency and Laplacian matrices as Hamiltonians of a graph representing a quantum spin network. We characterize PST between real pure states based on the spectral information of a graph and prove three fundamental results: (i) every periodic real pure state $\bx$ admits perfect state transfer with another real pure state $\by$, (ii) every connected graph admits perfect state transfer between real pure states, and (iii) for any pair of real pure states $\bx$ and $\by$ and for any time $\tau$, there exists a real symmetric matrix $M$ such that $\bx$ and $\by$ admit perfect state transfer relative to $M$ at time $\tau$. We also determine all real pure states that admit PST in complete graphs, complete bipartite graphs, paths, and cycles. This leads to a complete characterization of pair and plus state transfer in paths and complete bipartite graphs. We give constructions of graphs that admit PST between real pure states. Finally, using results on the spread of graphs, we prove that amongst all $n$-vertex simple unweighted graphs, the least minimum PST time between real pure states is attained by any join graph for the Laplacian case, while it is attained by the join of an empty graph and a complete graph of appropriate sizes for the adjacency case.
\end{abstract}

\noindent \textbf{Keywords:} quantum walks, perfect state transfer, pure states, graph spectra, adjacency matrix, Laplacian matrix\\
	
\noindent \textbf{MSC2010 Classification:} 
05C50; 81P45


\section{Introduction}

Throughout, we assume that $G$ is a simple undirected connected graph on $n$ vertices with positive edge weights having adjacency matrix $A$ and Laplacian matrix $L$. We say that $G$ is \textit{unweighted} if every edge of $G$ has weight one. A \textit{continuous quantum walk} on $G$ is determined by the matrix 
\begin{equation}
\label{Eq:transmat}
U_M(t):=e^{itM}, \quad t\in\R,
\end{equation}
where $i^2=-1$ and $M$ is a real symmetric matrix called the \textit{Hamiltonian} associated with $G$. Here, $M$ is indexed by the vertices of $G$ such that $M_{u,v}=0$ if and only if there is no edge between $u$ and $v$. Note that $U_M(t)$ is a complex, symmetric and unitary matrix for each $t\in \R$. We write $U_M(t)$ as $U(t)$ if $M$ is clear from the context. Sometimes, we take $M$ to be $A$ or $L$. But unless otherwise stated, our results apply to any real symmetric matrix $M$ that respects the adjacencies of $G$. We denote the $m\times n$ all-ones matrix and $n\times n$ identity matrix by $J_{m,n}$ and $I_n$, respectively. We write these matrices as $J$ and $I$ if the context is clear.

A quantum state is represented by a positive semidefinite matrix with trace one, known as a \textit{density matrix}. If the initial state of our quantum walk is represented by a density matrix $D$, then the state $D(t)$ at time $t$ is given by $$U(t)DU(-t),$$ where $U(-t)=\overline{U(t)}^\top=U(t)^{-1}$ \cite{godsil2017real}. We say that \textit{perfect state transfer} (PST) occurs between two density matrices $D_1$ and $D_2$ if for some time $\tau>0$, we have $$D_2=U(\tau)D_1U(-\tau).$$ We say that \textit{perfect state transfer} (PST) occurs between two nonzero complex vectors $\bx$ and $\by$ if for some time $\tau>0$, there is a unit complex number $\gamma$ called \textit{phase factor} such that
\begin{equation}
\label{Eq:1}
U(\tau)\bx=\gamma\by.
\end{equation}
The minimum such $\tau$ is called the \textit{minimum PST time}.

A density matrix $D$ is a \textit{pure state} if the rank of $D$ is equal to one, and it is a \textit{real state} if all its entries are real. Thus, a real pure state $D$ can be written as $D=\frac{1}{\|\bx\|^2}\bx\bx^\top$ for some nonzero vector $\bx\in\R^n$, where $\|\cdot\|$ is the Euclidean norm. For each $\bx\in\R^n$ with $\bx\neq \zero$, we let $D_{\bx}:=\frac{1}{\|\bx\|^2}\bx\bx^\top$ denote the real pure state associated with $\bx$. Since the equation $D_{\by}=U(\tau)D_{\bx}U(-\tau)$ is equivalent to $U(\tau)\bx=\gamma\by$ for some unit $\gamma\in\C$, the existence of perfect state transfer between real density matrices $D_{\bx}$ and $D_{\by}$ is equivalent to perfect state transfer between the real vectors $\bx$ and $\by$. Thus, we abuse terminology and also refer to $\bx$ as a real pure state. If $u$ and $v$ are two vertices in $G$ and $s$ is a non-zero real number, then a real pure state of the form $\bx=\be_u$ is called a \textit{vertex state}, while $\bx=\be_u+s\be_v$ is called an \textit{$s$-pair state} \cite{kim2024generalization}. In particular, a $(-1)$-pair state is called a \textit{pair state} and a $1$-pair state is called a \textit{plus state}. Perfect state transfer between vertex states has been extensively studied, see \cite{christandl2004,Coutinho2014,godsil2012state,kay2010perfect,kendon2011perfect} for surveys and \cite{soffia2022state,bhattacharjya2024quantum,coutinho2024no} for more recent work. On the other hand, perfect state transfer between pair states, between plus states, and between $s$-pair states has only been investigated recently \cite{bernard2025quantum,Chen2020PairST,kim2024generalization,pal2024quantum}. As $U(\tau)$ is unitary, $U(\tau)\bx=\gamma\by$ implies that $\|\bx\|=\|\by\|$. Thus, to simplify our discussion, we examine perfect state transfer between real vectors with the same length, in lieu of density matrices. Throughout, we assume that $\bx$ and $\by$ are real vectors with $\|\bx\|=\|\by\|\neq 0$. 

In this paper, we develop the theory of perfect state transfer between real pure states in weighted graphs. In Sections \ref{Sec:esupp} and \ref{Sec:sc}, we extend the concept of eigenvalue supports and strong cospectrality to real pure states. Section \ref{Sec:per} deals with periodicity. In particular, we show that periodicity of a real pure state with nonnegative entries is relatively rare whenever $M$ has nonnegative rational entries, a result that can be viewed as an extension of the relative rarity of vertex periodicity \cite{godsil2010can}. We also provide a simple formula for calculating the minimum period of a periodic real pure state. We devote Section \ref{Sec:pst} to a characterization of perfect state transfer between real pure states, which extends a characterization of vertex perfect state transfer due to Coutinho \cite{Coutinho2014}. We establish three important facts in this section: (i) every periodic real pure state $\bx$ admits perfect state transfer with another real pure state $\by$, (ii) every connected graph admits perfect state transfer between real pure states, and (iii) for any pair of real pure states $\bx$ and $\by$ and for any time $\tau$, there exists a real symmetric matrix $M$ such that $\bx$ and $\by$ admit perfect state transfer relative to $M$ at time $\tau$. In Sections \ref{Sec:cycle}, \ref{Sec:paths} and \ref{Sec:cbp}, we characterize perfect state transfer between real pure states in complete graphs, cycles, paths and complete bipartite graph. As a consequence, we obtain a characterization of pair and plus state transfer in paths and complete bipartite graphs. While only a few cycles and paths admit vertex perfect state transfer, it turns out that there are infinite families of such graphs that admit perfect state transfer between real pure states. Sections \ref{Sec:cartprod} and \ref{Sec:joins} are dedicated to constructions of graphs with PST between real pure states. In Section \ref{Sec:minpst}, we utilize results on the spread of graphs to establish that amongst all $n$-vertex unweighted graphs, the minimum PST time between real pure states for the Laplacian case is attained by any join graph, while for the adjacency case, it is attained by the join of an empty graph and a complete graph of appropriate sizes, provided that $n$ is sufficiently large. Finally, in Section \ref{Sec:sens}, we determine closed form expressions for $\frac{d^kf}{dt^k}\rvert_{\tau}$, where $f(t)=|\by^\top U(t)\bx|^2$ is the fidelity of transfer between real vectors $\bx$ and $\by$, and $f(\tau)=1$. We give sharp bounds on $\frac{d^2f}{dt^2}\rvert_{\tau}$, which describes the sensitivity of the fidelity at time $\tau$ with respect to the readout time.

Let $\bx\neq \zero$ be a real vector. Observe that for any time $t>0$, there is perfect state transfer between $\bx$ and $\gamma U(t)\bx$ for some unit $\gamma\in \C$.
But as $U(t)$ has complex entries, we are not guaranteed that $\gamma U(t)\bx$ has real entries. Now, suppose $\gamma U(\tau)\bx$ has real entries. In this case, a result of Godsil implies that perfect state transfer between $\bx$ and $\gamma U(\tau)\bx$ is monogamous \cite[Corollary 5.3]{godsil2017real}. That is, if there is perfect state transfer between $\bx$ and another vector $\by \in \R^n$, then $\by=\gamma U(\tau)\bx$. If we also assume that $\tau$ is the minimum PST time between $\bx$ and $\by=\gamma U(\tau)\bx$, then every PST time is an odd multiple of $\tau$ \cite[Lemma 5.2]{godsil2017real}. Hence, if $t=(2k+1)\tau$, then $$\gamma U(t)\bx=\gamma U(\tau)^{2k}U(\tau)\bx= U(\tau)^{2k}\by=\gamma^{-2k}\by.$$ That is, $\by=\gamma^{2k+1} U(t)\bx$. From this, we deduce that for every $\gamma\in\C$, there is at most one real vector in the set $\{\gamma U(t)\bx:t>0\}$ distinct from $\pm\bx$. This demonstrates that the existence of perfect state transfer between real pure states is a special occurrence, and therefore warrants an investigation.

Our work provides a spectral framework for studying perfect state transfer between real pure states that unifies the study of perfect state transfer between vertices and $s$-pair states. We recover known results about perfect state transfer between vertex states, plus states and pair states, and produce new instances of pair and plus state transfer.

\section{Eigenvalue supports}
\label{Sec:esupp}

Let $M$ be a Hamiltonian of a graph $G$. Since $M$ is real symmetric, we may write $M$ in its spectral decomposition:
\begin{equation}
\label{Eq:specdec}
    M=\sum_{j=1}^k\lambda_jE_j,
\end{equation} 
where $\lambda_1, \ldots, \lambda_k$ are the distinct eigenvalues of $M$ with corresponding eigenprojection matrices $E_1, \ldots, E_k$. Combining equations (\ref{Eq:transmat}) and (\ref{Eq:specdec}), we obtain a spectral decomposition of $U(t)$ given by
\begin{equation}
\label{Ut}
U(t)=\sum_{j=1}^ke^{it\lambda_j}E_j.
\end{equation}
Thus, the eigenvalues and eigenvectors of $M$ completely determine the behaviour of the quantum walk.

Let $\bx\in\R^n$ with $\bx\neq \zero$. The \textit{eigenvalue support} of $\bx$ relative to $M$, denoted $\sigma_{\bx}(M)$, is the set
\begin{equation*}
\sigma_{\bx}(M)=\{\lambda_j:E_j\bx\neq \zero\}.
\end{equation*}
Note that $\sigma_{\bx}(M)$ is always nonempty. If $|\sigma_{\bx}(M)|=1$, then $\bx$ is called a \textit{fixed state} relative to $M$.

\begin{proposition}
\label{Prop:S}
Let $\bx\in\R^n\backslash\{\zero\}$. If $S$ is a subset of the set of distinct eigenvalues of $M$, then $\sigma_{\bx}(M)=S$ if and only if $\bx=\sum_{j\in S}\bu_j$, where each $\bu_j$ is a real eigenvector associated with an eigenvalue $\lambda_j\in S$.
\end{proposition}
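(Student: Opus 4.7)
The plan is to leverage the spectral decomposition in equation~(\ref{Eq:specdec}) together with the standard fact that the eigenprojections satisfy $\sum_{j=1}^k E_j=I$ and $E_iE_j=\delta_{ij}E_j$. Since $M$ is real symmetric, each $E_j$ has real entries, so for any real $\bx$, the vector $E_j\bx$ is real, and it is either zero or a (real) eigenvector of $M$ for the eigenvalue $\lambda_j$.

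For the forward direction, I would start from the trivial identity $\bx=\sum_{j=1}^k E_j\bx$. By the definition of the eigenvalue support, $E_j\bx=\zero$ precisely when $\lambda_j\notin\sigma_{\bx}(M)$. Assuming $\sigma_{\bx}(M)=S$, this collapses the sum to $\bx=\sum_{j\in S}E_j\bx$, and setting $\bu_j:=E_j\bx$ gives the required decomposition: each $\bu_j$ is real (since $E_j$ is real) and is a nonzero eigenvector of $M$ for $\lambda_j$.

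For the backward direction, suppose $\bx=\sum_{j\in S}\bu_j$ with each $\bu_j$ a (nonzero) real eigenvector of $M$ for $\lambda_j$. Applying $E_\ell$ and using orthogonality of eigenspaces for distinct eigenvalues (equivalently, $E_\ell\bu_j=\bu_j$ if $\ell=j$ and $\zero$ otherwise), I get $E_\ell\bx=\bu_\ell$ when $\ell\in S$ and $E_\ell\bx=\zero$ when $\ell\notin S$. Since each $\bu_\ell$ with $\ell\in S$ is nonzero, this yields $\sigma_{\bx}(M)=S$.

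There is essentially no obstacle here: the statement is a direct unpacking of the definition of the eigenvalue support combined with the fact that eigenprojections of a real symmetric matrix decompose $\R^n$ as an orthogonal direct sum of real eigenspaces. The only small point to flag is that ``eigenvector'' is taken to mean a nonzero vector, which is needed in the backward direction to conclude that every $\lambda_j\in S$ actually lies in $\sigma_{\bx}(M)$.
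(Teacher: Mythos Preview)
Your proof is correct and follows essentially the same approach as the paper: both use the identity $\bx=\sum_j E_j\bx$ arising from $\sum_j E_j=I$ and set $\bu_j:=E_j\bx$ for the forward direction, while the paper simply declares the converse ``straightforward'' where you have (correctly) spelled it out via $E_\ell\bu_j=\delta_{\ell j}\bu_j$. Your remark that ``eigenvector'' must mean a nonzero vector is a fair observation, as this is indeed needed to conclude $S\subseteq\sigma_{\bx}(M)$ in the backward direction.
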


\begin{proof}
Let $\sigma_{\bx}(M)=S$. If $\lambda_j\in S$, then $\bu_j:=E_j\bx$ is an eigenvector for $M$ associated with $\lambda_j$. As the $E_j$s sum to identity, we obtain $\bx=I\bx=\sum_{\lambda_j\in S}E_j\bx=\sum_{\lambda_j\in S}\bu_j$. The converse is straightforward.
\end{proof}

We let $\phi(M,t)$ denote the characteristic polynomial of $M$ in the variable $t$. Adapting the same proof of Proposition 2.4 in \cite{kim2024generalization} yields a more general result.

\begin{lemma}
\label{Lem:suppalgconj}
Let $\bx\in\R^n\backslash\{\zero\}$ and suppose $a\bx$ has rational entries for some nonzero $a\in\R$. If $\phi(M,t)$ has integer coefficients, then $\sigma_{\bx}(M)$ is closed under taking algebraic conjugates.
\end{lemma}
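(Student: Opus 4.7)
The plan is to adapt the Galois-theoretic argument for Proposition 2.4 of \cite{kim2024generalization} from $s$-pair states to an arbitrary real vector $\bx$ whose entries become rational after scaling. The starting observation is that $E_j(a\bx)=aE_j\bx$ for any scalar $a$, so $\sigma_{\bx}(M)=\sigma_{a\bx}(M)$. Hence I may replace $\bx$ by $a\bx$ and assume at the outset that $\bx\in\mathbb{Q}^n$. This is exactly the point where the hypothesis on $\bx$ is used, and it is the only place where the pair-state form is replaced by an arbitrary rational vector; the rest of the proof proceeds without reference to the specific shape of $\bx$.

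Fix $\lambda\in\sigma_{\bx}(M)$ and let $\lambda'$ be an algebraic conjugate of $\lambda$. Because $\phi(M,t)\in\mathbb{Z}[t]$ and $\lambda$ is one of its roots, the minimal polynomial of $\lambda$ over $\mathbb{Q}$ divides $\phi(M,t)$, so $\lambda'$ is also a root of $\phi(M,t)$ and hence an eigenvalue of $M$. Let $K$ be the splitting field of $\phi(M,t)$ over $\mathbb{Q}$, and choose $\tau\in\mathrm{Gal}(K/\mathbb{Q})$ with $\tau(\lambda)=\lambda'$, extended entrywise to matrices over $K$.

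The key technical step is the identity $\tau(E_\lambda)=E_{\lambda'}$. Applying $\tau$ entrywise to $ME_\lambda=\lambda E_\lambda$, and using that $M$ has rational entries (so $\tau(M)=M$ in the graph-theoretic setting of this paper), gives $M\tau(E_\lambda)=\lambda'\tau(E_\lambda)$, so the image of $\tau(E_\lambda)$ lies in the $\lambda'$-eigenspace of $M$. The identities $E_\lambda^{\top}=E_\lambda$ and $E_\lambda^2=E_\lambda$ are preserved entrywise under $\tau$, so $\tau(E_\lambda)$ is an orthogonal projection onto a subspace of that eigenspace. Running the same argument with $\tau^{-1}$ and $E_{\lambda'}$ gives the reverse containment of projections, and together they force $\tau(E_\lambda)=E_{\lambda'}$.

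To finish, since $\bx$ is rational, $\tau(\bx)=\bx$, and therefore
\[
E_{\lambda'}\bx=\tau(E_\lambda)\,\tau(\bx)=\tau(E_\lambda\bx).
\]
Because $E_\lambda\bx\neq\zero$ and $\tau$ is a field automorphism applied entrywise, the right-hand side is nonzero. Thus $\lambda'\in\sigma_{\bx}(M)$, giving closure of $\sigma_{\bx}(M)$ under algebraic conjugation. I expect the one step requiring actual care to be the matching $\tau(E_\lambda)=E_{\lambda'}$: everything after that is formal, but this identity is where the Galois action on the eigenvalues interacts with the projection structure of $M$, and it is exactly the step that needs to be ported over from the pair-state argument in \cite{kim2024generalization}.
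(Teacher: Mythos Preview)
Your proposal is correct and matches the paper's approach exactly: the paper's entire proof is the single sentence ``Adapting the same proof of Proposition 2.4 in \cite{kim2024generalization} yields a more general result,'' and you have written out precisely that adaptation. Your parenthetical flag that the argument needs $\tau(M)=M$ (i.e., $M$ with rational entries, as in the graph-theoretic setting) is apt and is implicit in the referenced proof as well.
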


If $S$ is a singleton set in Proposition \ref{Prop:S}, then we get the following result.

\begin{proposition}
\label{Prop:fixedstatechar}
A vector $\bx\in\R^n\backslash\{\zero\}$ is a fixed state if and only if $\bx$ is an eigenvector for $M$ associated with the lone eigenvalue in $\sigma_{\bx}(M)$.
\end{proposition}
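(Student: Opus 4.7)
The plan is to derive this as a direct specialization of Proposition \ref{Prop:S} to the case where the eigenvalue support is a singleton, supplemented by the orthogonality properties of the spectral projections.

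For the forward direction, I would start by assuming $\bx$ is a fixed state, so that $\sigma_{\bx}(M) = \{\lambda_j\}$ for some single eigenvalue $\lambda_j$. Applying Proposition \ref{Prop:S} with $S = \{\lambda_j\}$ immediately gives $\bx = \bu_j$, where $\bu_j$ is a real eigenvector of $M$ for $\lambda_j$. Thus $\bx$ itself is an eigenvector of $M$ corresponding to the unique element of its eigenvalue support.

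For the converse, I would assume $\bx$ is an eigenvector of $M$ associated with some eigenvalue $\lambda_j$. Using the standard properties of the spectral projections $E_1,\ldots,E_k$ of the real symmetric matrix $M$, namely $E_j^2 = E_j$, $E_i E_j = 0$ for $i \neq j$, and $\sum_i E_i = I$, the identity $M\bx = \lambda_j \bx$ forces $E_j \bx = \bx \neq \zero$ and $E_i \bx = \zero$ for every $i \neq j$. Consequently $\sigma_{\bx}(M) = \{\lambda_j\}$, so $\bx$ is a fixed state whose lone support eigenvalue is precisely the eigenvalue of $M$ to which $\bx$ belongs.

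There is essentially no serious obstacle here; the proposition is a one-line corollary of Proposition \ref{Prop:S} once one observes the orthogonality relations among the $E_i$'s. The only mild subtlety is that in the statement of the converse, the phrase ``the lone eigenvalue in $\sigma_{\bx}(M)$'' tacitly requires that $|\sigma_{\bx}(M)|=1$, so the proof of the converse must \emph{establish} that singleton property as part of the argument rather than take it for granted.
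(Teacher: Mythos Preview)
Your proposal is correct and follows essentially the same approach as the paper, which simply notes that the result is the singleton case $S=\{\lambda_j\}$ of Proposition~\ref{Prop:S}. Since Proposition~\ref{Prop:S} is already an ``if and only if'' statement, both directions are covered there, and your additional remarks about the orthogonality of the $E_i$'s merely make explicit what that proposition already encodes.
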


If $\bx$ is a fixed state, then (\ref{Ut}) implies that $U(t)\bx=e^{it\lambda}\bx$ for any time $t$, where $\lambda$ is an eigenvalue of $M$ with associated eigenvector $\bx$. For connected graphs in particular, a vertex state $\be_u$ is not an eigenvector for $M$, and hence is not a fixed state.

\begin{example}
Some examples of fixed pure states include
(i) $\bx\in\operatorname{span}\{\ones\}$, if $G$ is regular or $M=L$, (ii)
$\bx\in\operatorname{span}\{\bv\}$, if $\bv$ is a Perron eigenvector for $M=A$, and (iii) $\bx\in\operatorname{span}\{\be_u-\be_v\}$, if  $u,v$ are twins in $G$.
\end{example}

The \textit{covering radius of a set} $S\subseteq V(G)$ is the least nonnegative integer $r$ such that each vertex of $G$ is at distance at most $r$ from $S$. The \textit{covering radius of a vector} $\bx\in\R^n$ is defined to be the covering radius of the set $S=\{u\in V(G):\bx^\top\be_u\neq 0\}$. We state Lemma 4.1 in \cite{godsil2012controllable}.

\begin{lemma}
\label{Lem:supp}
Suppose $\bx\in\R^n\backslash\{\zero\}$ is not a fixed state and has covering radius $r$. 
If $M$ and $\bx$ are entrywise nonnegative, then $|\sigma_{\bx}(M)|\geq r+1$. 
\end{lemma}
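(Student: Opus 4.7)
The plan is to argue by contradiction: assume $d := |\sigma_{\bx}(M)| \leq r$ and exhibit a vertex at which some $M^k\bx$ must be simultaneously zero and strictly positive. The engine of the proof is a cyclic-subspace computation showing that the coordinates of every $M^k\bx$ are determined by those of $\bx, M\bx, \ldots, M^{d-1}\bx$, while the nonnegativity hypothesis forces those early coordinates to vanish at vertices far from the entry-support of $\bx$.

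For the linear-algebraic setup, I would write $\bx = \sum_{\lambda \in \sigma_{\bx}(M)} \bu_\lambda$ with $\bu_\lambda := E_\lambda \bx \neq \zero$, as in Proposition \ref{Prop:S}. Then $M^j \bx = \sum_\lambda \lambda^j \bu_\lambda$ for all $j \geq 0$, so every $M^j\bx$ lies in the $d$-dimensional space spanned by $\{\bu_\lambda : \lambda \in \sigma_{\bx}(M)\}$. Since the Vandermonde matrix in the distinct eigenvalues is invertible, the first $d$ powers $\bx, M\bx, \ldots, M^{d-1}\bx$ form a basis of that space, whence $M^k\bx \in \operatorname{span}\{\bx, M\bx, \ldots, M^{d-1}\bx\}$ for every $k \geq 0$.

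Next, suppose for contradiction that $r \geq d$, set $S := \{u \in V(G) : \bx^{\top}\be_u \neq 0\}$, and let $v \in V(G)$ satisfy $\operatorname{dist}(v, S) = r$, where $\operatorname{dist}$ denotes graph distance. The combinatorial expansion of $(M^j)_{v,u}$ is a nonnegative sum indexed by walks of length $j$ from $v$ to $u$; since $M_{w,w'} = 0$ whenever $w \neq w'$ and $ww' \notin E(G)$, each contributing walk uses at least $\operatorname{dist}(v, u)$ genuine edges, so $(M^j)_{v,u} = 0$ whenever $j < \operatorname{dist}(v,u)$. For every $u \in S$ and every $j \in \{0, 1, \ldots, d-1\}$ we have $j \leq d-1 < d \leq r \leq \operatorname{dist}(v, u)$, so $(M^j)_{v, u} = 0$; combined with $\bx_u = 0$ for $u \notin S$, this gives $(M^j\bx)_v = \sum_u (M^j)_{v,u}\bx_u = 0$ for $j = 0, 1, \ldots, d-1$.

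Combining the two steps, every basis element of $\operatorname{span}\{\bx, M\bx, \ldots, M^{d-1}\bx\}$ has a $0$ in coordinate $v$, so every element of that span does; in particular $(M^r \bx)_v = 0$. For the contradiction, I choose $u_0 \in S$ with $\operatorname{dist}(v, u_0) = r$: the summand of $(M^r)_{v, u_0}$ coming from a shortest $v$-to-$u_0$ path is a product of positive edge weights, hence strictly positive, and since all other summands are nonnegative one gets $(M^r)_{v, u_0} > 0$, whence $(M^r \bx)_v \geq (M^r)_{v, u_0}\,\bx_{u_0} > 0$, contradicting the previous line. The main subtlety to watch for is the nonnegativity of $M$ and $\bx$: it is essential at two points, namely to rule out cancellation in $(M^r)_{v, u_0}$ so that the shortest-path term is a genuine witness, and to compare $(M^r\bx)_v$ with a single nonnegative summand. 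The non-fixed-state hypothesis plays no role in the contradiction itself and is needed only to make the bound $d \geq r+1$ informative, since a fixed state automatically has $d = 1$.
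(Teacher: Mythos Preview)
Your proof is correct. The paper does not actually prove this lemma; it simply records it as Lemma~4.1 of \cite{godsil2012controllable} without argument, so there is no in-paper proof to compare against. Your Krylov-subspace argument---the cyclic space generated by $\bx$ has dimension $d=|\sigma_{\bx}(M)|$, the $v$-coordinate of $M^j\bx$ vanishes for $j<\operatorname{dist}(v,S)$ by nonnegativity, and the strict positivity of $(M^r\bx)_v$ coming from a shortest path then contradicts $d\le r$---is exactly the standard proof one would expect in the cited reference.

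One small remark on your closing comment: the non-fixed-state hypothesis is in fact entirely superfluous, not merely a matter of informativeness. Your contradiction argument goes through verbatim for fixed states, and in that case ($d=1$) it yields $r=0$; this is the Perron--Frobenius statement that a nonnegative eigenvector of a nonnegative matrix with irreducible support structure has no zero entries. So the lemma as stated carries an unnecessary hypothesis, and your proof handles the more general statement.
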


\begin{remark}
If $G$ is a primitive strongly regular graph and $S=\{u,v\}$, where $u$ and $v$ are adjacent, then $r=2$ but $\bx=\be_u-\be_v$ satisfies $|\sigma_{\bx}(M)|=2$. Thus, Lemma \ref{Lem:supp} need not hold if $\bx$ is not entrywise nonnegative.
\end{remark}

\section{Periodicity}
\label{Sec:per}

\begin{definition}
Suppose $\bx\in \R^n\backslash\{\zero\}$ is not a fixed state. We say that $\bx$ is \textit{periodic} in $G$ (relative to $M$) if there is a time $\tau>0$ such that
\begin{equation*}
    U(\tau)\bx=\gamma\bx
\end{equation*}
for some unit $\gamma\in\mathbb{C}$. The minimum such $\tau$ is called the \textit{minimum period} of $\bx$, which we denote by $\rho$. 
\end{definition}

A set $S\subseteq\R$ with at least two elements satisfy the \textit{ratio condition} if $$\frac{\lambda_p-\lambda_q}{\lambda_r-\lambda_s}\in\mathbb{Q}$$ for all $\lambda_p,\lambda_q,\lambda_r,\lambda_s\in S$ with $\lambda_r \ne \lambda_s$. Note that $S$ automatically satisfies the ratio condition whenever $|S|=2$. 

\begin{theorem}
\label{Thm:rc}
A vector $\bx\in \R^n$ is periodic in $G$ if and only if $\sigma_{\bx}(M)$ satisfies the ratio condition. If we also assume that $|\sigma_{\bx}(M)|\geq 3$ and $\sigma_{\bx}(M)$ is closed under algebraic conjugates,
then $\bx$ is periodic if and only if either (i) $\sigma_{\bx}(M)\subseteq\mathbb{Z}$ or (ii) each $\lambda_j\in \sigma_{\bx}(M)$ is of the form $\lambda_j=\frac{1}{2}(a+b_j\sqrt{\Delta})$, where $a,b_j,\Delta$ are integers and $\Delta>1$ is square-free.
\end{theorem}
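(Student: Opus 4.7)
The plan is to split the proof into the two characterizations. For the equivalence with the ratio condition, I would begin with the spectral expansion (\ref{Ut}). If $U(\tau)\bx=\gamma\bx$, then equating $\sum_{j}e^{i\tau\lambda_j}E_j\bx = \gamma\sum_j E_j\bx$ and using orthogonality of the ranges of the $E_j$'s forces $e^{i\tau\lambda_j}=\gamma$ for every $\lambda_j\in\sigma_{\bx}(M)$. This immediately gives $\tau(\lambda_p-\lambda_q)\in 2\pi\mathbb{Z}$ for all $\lambda_p,\lambda_q\in\sigma_{\bx}(M)$, hence the ratio condition. Conversely, given the ratio condition, fix distinct $\lambda_1,\lambda_2\in\sigma_{\bx}(M)$, write each $(\lambda_j-\lambda_1)/(\lambda_2-\lambda_1)\in\mathbb{Q}$ in lowest terms, take $N$ to be a common denominator, and set $\tau=2\pi N/(\lambda_2-\lambda_1)$. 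Then $\tau(\lambda_j-\lambda_1)\in 2\pi\mathbb{Z}$ for every $j$, so $U(\tau)\bx=e^{i\tau\lambda_1}\bx$.

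For the second characterization the reverse direction is routine: in case (i) all differences are integers; in case (ii), $\lambda_p-\lambda_q=\tfrac{1}{2}(b_p-b_q)\sqrt{\Delta}$, so all ratios are rational. For the forward direction I assume periodicity together with $|\sigma_{\bx}(M)|\geq 3$ and closure under algebraic conjugation. Fix distinct $\lambda_1,\lambda_2\in\sigma_{\bx}(M)$ and set $\mu=\lambda_2-\lambda_1$; by the ratio condition, each $\lambda_j=\lambda_1+r_j\mu$ with $r_j\in\mathbb{Q}$, so $\sigma_{\bx}(M)$ lies on the affine $\mathbb{Q}$-line $\lambda_1+\mathbb{Q}\mu$ inside the number field $L:=\mathbb{Q}(\sigma_{\bx}(M))$. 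Closure under algebraic conjugation makes $L/\mathbb{Q}$ Galois, and any $\sigma\in\mathrm{Gal}(L/\mathbb{Q})$ permutes $\sigma_{\bx}(M)$, sending this line to $\sigma(\lambda_1)+\mathbb{Q}\sigma(\mu)$. Since both lines contain $\sigma_{\bx}(M)$, and the hypothesis $|\sigma_{\bx}(M)|\geq 3$ gives at least three collinear points in common, the two lines must coincide, forcing $\sigma(\mu)=c_\sigma\mu$ with $c_\sigma\in\mathbb{Q}$. Finite order of $\sigma$ then forces $c_\sigma\in\{\pm 1\}$, so $\mu^2$ is $\mathrm{Gal}$-invariant and belongs to $\mathbb{Q}$.

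The argument then splits on whether $\mu\in\mathbb{Q}$. If $\mu\in\mathbb{Q}$, every $\lambda_j\in\lambda_1+\mathbb{Q}$ and, applying the same analysis to $\sigma(\lambda_1)$, every $\sigma(\lambda_1)\in\lambda_1+\mathbb{Q}$; summing over $\mathrm{Gal}(L/\mathbb{Q})$ and using $\mathrm{Tr}_{L/\mathbb{Q}}(\lambda_1)\in\mathbb{Q}$ forces $\lambda_1\in\mathbb{Q}$, so $\sigma_{\bx}(M)\subseteq\mathbb{Q}$. If $\mu\notin\mathbb{Q}$, then $\mu^2\in\mathbb{Q}_{>0}$ can be written as $\Delta(m/n)^2$ with $\Delta>1$ a squarefree positive integer, so $\mu\in\mathbb{Q}\sqrt{\Delta}$ and each $\lambda_j=\lambda_1+c_j\sqrt{\Delta}$ with $c_j\in\mathbb{Q}$. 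A short iteration then shows that any $\sigma$ fixing $\sqrt{\Delta}$ also fixes $\lambda_1$---otherwise $\sigma^k(\lambda_1)=\lambda_1+kc\sqrt{\Delta}$ for all $k$, contradicting finite order of $\sigma$---so $\lambda_1\in\mathbb{Q}(\sqrt{\Delta})$. Writing $\lambda_j=\alpha+d_j\sqrt{\Delta}$ with $\alpha,d_j\in\mathbb{Q}$, uniqueness of $\{1,\sqrt{\Delta}\}$-coordinates makes $\alpha$ common to every $j$. The implicit algebraic-integrality of the $\lambda_j$'s (as in Lemma \ref{Lem:suppalgconj}, where $\phi(M,t)\in\mathbb{Z}[t]$) combined with the ring of integers of $\mathbb{Q}(\sqrt{\Delta})$ then forces $\lambda_j=\tfrac{1}{2}(a+b_j\sqrt{\Delta})$ with $a=2\alpha$ and $b_j=2d_j$ in $\mathbb{Z}$; in the rational case, rational algebraic integers are integers, so $\sigma_{\bx}(M)\subseteq\mathbb{Z}$.

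The main obstacle will be the Galois-theoretic step showing $\mu^2\in\mathbb{Q}$, together with the subsequent iteration confining $\lambda_1$ to $\mathbb{Q}(\sqrt{\Delta})$ when $\mu\notin\mathbb{Q}$; the hypothesis $|\sigma_{\bx}(M)|\geq 3$ is used essentially to collapse the two affine $\mathbb{Q}$-lines into one. The descent from $\mathbb{Q}$ to $\mathbb{Z}$ then relies on algebraic integrality, which is natural in the paper's setting of Hamiltonians with integer characteristic polynomial.
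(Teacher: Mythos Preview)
Your proof is correct. The paper itself does not give an argument here; it simply cites \cite[Corollary~7.3.1 and Theorem~7.6.1]{Coutinho2021}. Your Galois-theoretic route is essentially the standard one underlying those cited results: the ratio condition places $\sigma_{\bx}(M)$ on an affine $\mathbb{Q}$-line, closure under conjugation makes the Galois group permute this line, and finiteness of the group forces the direction vector $\mu$ to have rational square, from which the quadratic-integer form follows.

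Two minor remarks. First, two common points already force coincidence of affine $\mathbb{Q}$-lines, so the hypothesis $|\sigma_{\bx}(M)|\ge 3$ is not what drives that particular step; your claim that it is used ``essentially'' there slightly overstates its role (the hypothesis is carried in the theorem mainly to separate this case from the trivial situation $|\sigma_{\bx}(M)|=2$, where periodicity is automatic). Second, you are right to flag algebraic integrality of the eigenvalues as an implicit assumption: without it one only obtains $\sigma_{\bx}(M)\subseteq\mathbb{Q}$ or $\mathbb{Q}(\sqrt{\Delta})$ rather than the integral form $\tfrac{1}{2}(a+b_j\sqrt{\Delta})$ with $a,b_j\in\mathbb{Z}$. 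This is indeed the intended setting of the cited results and is consistent with the paper's use of Lemma~\ref{Lem:suppalgconj}.
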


\begin{proof}
This follows from \cite[Corollary 7.3.1]{Coutinho2021}, and \cite[Theorem 7.6.1]{Coutinho2021}. 
\end{proof}

The following is straightforward from Theorem \ref{Thm:rc}.

\begin{corollary}
\label{Cor:specgap}
Suppose $\bx\in \R^n$ such that
$|\sigma_{\bx}(M)|\geq 3$ and $\sigma_{\bx}(M)$ is closed under taking algebraic conjugates. If $\bx$ is periodic,
then the elements in $\sigma_{\bx}(M)$ differ by at least one. 
\end{corollary}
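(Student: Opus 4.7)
The plan is to invoke Theorem \ref{Thm:rc} directly and treat its two alternatives separately. In the first alternative, $\sigma_{\bx}(M)\subseteq\mathbb{Z}$, two distinct elements are distinct integers, so they differ by at least one and there is nothing to prove. All the content sits in the second alternative, where every eigenvalue has the common form $\lambda_j=\frac{1}{2}(a+b_j\sqrt{\Delta})$ with $a,b_j,\Delta\in\mathbb{Z}$ and $\Delta>1$ square-free. For two distinct eigenvalues $\lambda_i\neq\lambda_j$, I would write
\[ |\lambda_i-\lambda_j|=\tfrac{1}{2}|b_i-b_j|\sqrt{\Delta}, \]
and note that the naive bound $|b_i-b_j|\geq 1$ only yields $|\lambda_i-\lambda_j|\geq\tfrac{\sqrt\Delta}{2}$, which can drop below $1$ for $\Delta\in\{2,3\}$. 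So a parity refinement is needed.

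The main obstacle, then, is upgrading $|b_i-b_j|\geq 1$ to $|b_i-b_j|\geq 2$. The key is that the eigenvalues of $M$ lie in the ring of integers of $\mathbb{Q}(\sqrt{\Delta})$, since $\sigma_\bx(M)$ consists of roots of a monic polynomial with integer coefficients (cf.\ Lemma \ref{Lem:suppalgconj}, whose hypotheses apply here because closure under algebraic conjugates makes the minimal polynomial over $\mathbb{Q}$ well-defined with integer coefficients). The minimal polynomial of $\lambda_j$ over $\mathbb{Q}$ is $x^2-ax+\tfrac{a^2-b_j^2\Delta}{4}$, and integrality of its constant term gives $a^2\equiv b_j^2\Delta\pmod{4}$. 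Because $\Delta$ is square-free we have $\Delta\not\equiv 0\pmod 4$, and a quick case analysis on $\Delta\bmod 4$ then forces $a$ and every $b_j$ to share a common parity (both even if $\Delta\equiv 2,3\pmod 4$; both even or both odd if $\Delta\equiv 1\pmod 4$).

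With this parity constraint in hand, $b_i-b_j$ is an even integer and, being nonzero, has magnitude at least $2$. Therefore
\[ |\lambda_i-\lambda_j|=\tfrac{1}{2}|b_i-b_j|\sqrt\Delta\geq \sqrt{\Delta}\geq\sqrt{2}>1, \]
which completes the case (ii) argument and hence the corollary. The only nontrivial ingredient beyond Theorem \ref{Thm:rc} is the algebraic-integer/parity observation above; everything else is a one-line computation.
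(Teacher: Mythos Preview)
Your argument is correct and in fact more careful than the paper, which simply declares the corollary ``straightforward from Theorem~\ref{Thm:rc}'' with no further comment. The paper is implicitly relying on the full strength of the cited result \cite[Theorem~7.6.1]{Coutinho2021}: there the eigenvalues are quadratic integers, so the representation $\lambda_j=\tfrac12(a+b_j\sqrt\Delta)$ already comes with the parity constraint $a\equiv b_j\pmod 2$ (and both even when $\Delta\equiv 2,3\pmod 4$). With that in hand, $b_i-b_j$ is even and $|\lambda_i-\lambda_j|\ge\sqrt\Delta>1$ exactly as you compute. So you and the paper take the same route; you just make explicit the step the paper suppresses.

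One small wobble worth fixing: your appeal to Lemma~\ref{Lem:suppalgconj} is in the wrong direction. That lemma says that \emph{if} $\phi(M,t)\in\mathbb{Z}[t]$ (and $\bx$ is rational up to scaling) \emph{then} $\sigma_{\bx}(M)$ is closed under algebraic conjugates; it does not let you deduce integrality of the minimal polynomial from closure under conjugates. Closure under conjugates by itself does not force the $\lambda_j$ to be algebraic integers (e.g.\ the rational numbers $\tfrac12,\tfrac32,\tfrac52$ are each self-conjugate). The algebraic-integer input is really coming from the hypotheses under which Theorem~\ref{Thm:rc}(ii) is proved in \cite{Coutinho2021}, not from Lemma~\ref{Lem:suppalgconj}. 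Once you grant that, your parity case analysis on $\Delta\bmod 4$ is exactly right and gives the sharp bound $|\lambda_i-\lambda_j|\ge\sqrt{2}$ in case~(ii).
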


We now turn our attention to the minimum period.

\begin{lemma}\label{Lem:minperiod}
Let $G$ be a graph and $\bx\in\R^n\backslash\{\zero\}$ with $\sigma_{\bx}(M)=\{\lambda_1,\ldots,\lambda_m\}$, where $\lambda_1>\lambda_2$.
\begin{enumerate}
\item If $m=2$, then $\bx$ is periodic in $G$ with $\rho=\frac{2\pi}{\lambda_1-\lambda_2}$
\item If $m\geq 3$ and $\bx$ is periodic in $G$, then $\rho=\frac{2\pi q}{\lambda_1-\lambda_2}$, where $q=\operatorname{lcm}(q_3,\ldots,q_{m})$ and the $p_j$'s and $q_j$'s are integers such that $\frac{\lambda_1-\lambda_j}{\lambda_1-\lambda_2}=\frac{p_j}{q_j}$ and $\operatorname{gcd}(p_j,q_j)=1$.
\end{enumerate}
\end{lemma}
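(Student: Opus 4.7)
The plan is to translate the periodicity condition into a system of conditions on $\tau$ through the spectral decomposition of $U(\tau)$, and then extract the minimum $\tau$ by a direct number-theoretic argument.

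First I would expand, using equation~(\ref{Ut}),
\[
U(\tau)\bx=\sum_{\lambda_j\in\sigma_{\bx}(M)}e^{i\tau\lambda_j}E_j\bx,\qquad \gamma\bx=\sum_{\lambda_j\in\sigma_{\bx}(M)}\gamma E_j\bx.
\]
The vectors $\{E_j\bx:\lambda_j\in\sigma_{\bx}(M)\}$ are nonzero eigenvectors of $M$ for pairwise distinct eigenvalues and hence linearly independent, so $U(\tau)\bx=\gamma\bx$ is equivalent to $e^{i\tau\lambda_j}=\gamma$ for every $\lambda_j\in\sigma_{\bx}(M)$. Equivalently, writing this condition pairwise,
\[
\tau(\lambda_j-\lambda_k)\in 2\pi\mathbb{Z}\quad\text{for all }\lambda_j,\lambda_k\in\sigma_{\bx}(M).
\]
In particular $\tau$ must be a positive integer multiple of $2\pi/(\lambda_1-\lambda_2)$ times some factor determined by the other eigenvalues, so it is natural to parametrize $\tau=\frac{2\pi s}{\lambda_1-\lambda_2}$ with $s>0$ and determine the smallest admissible $s$.

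For part~(1), when $m=2$ the only constraint is $\tau(\lambda_1-\lambda_2)\in 2\pi\mathbb{Z}$, i.e.\ $s\in\mathbb{Z}_{>0}$; the smallest such value is $s=1$, giving $\rho=\frac{2\pi}{\lambda_1-\lambda_2}$. (Periodicity itself is guaranteed here by Theorem~\ref{Thm:rc}, as any two-element set satisfies the ratio condition.) For part~(2), when $m\geq 3$ the ratio condition forces $\frac{\lambda_1-\lambda_j}{\lambda_1-\lambda_2}=\frac{p_j}{q_j}\in\mathbb{Q}$ with $\gcd(p_j,q_j)=1$ for each $j\in\{3,\dots,m\}$, and the requirement $\tau(\lambda_1-\lambda_j)\in 2\pi\mathbb{Z}$ becomes $s\cdot\frac{p_j}{q_j}\in\mathbb{Z}$. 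Since $\gcd(p_j,q_j)=1$, this is equivalent to $q_j\mid s$, so $s$ must be a common multiple of $q_3,\dots,q_m$. The smallest positive such $s$ is $q=\operatorname{lcm}(q_3,\dots,q_m)$, yielding $\rho=\frac{2\pi q}{\lambda_1-\lambda_2}$.

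There is no real obstacle beyond the clean bookkeeping above; the only subtlety is the initial reduction via linear independence of the nonzero spectral components $E_j\bx$, and the observation that the pairwise constraints $\tau(\lambda_j-\lambda_k)\in 2\pi\mathbb{Z}$ are all generated, modulo the choice of base pair $(\lambda_1,\lambda_2)$, by the $m-1$ constraints coming from pairs $(\lambda_1,\lambda_j)$; both facts I would justify briefly and then conclude.
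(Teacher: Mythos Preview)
Your argument is correct and is exactly the standard route: reduce periodicity to the system $e^{i\tau\lambda_j}=\gamma$ via linear independence of the $E_j\bx$, rewrite this as $\tau(\lambda_1-\lambda_j)\in 2\pi\mathbb{Z}$ for all $j$, and solve for the minimal $\tau$. The paper does not give its own detailed proof; it simply cites \cite[Theorem~5]{Monterde2022} and remarks that the same argument extends to real pure states, which is precisely what you have written out. One minor point of bookkeeping worth making explicit in part~(2): the constraint for $j=2$ gives $s\in\mathbb{Z}$, and only then does $s\cdot\frac{p_j}{q_j}\in\mathbb{Z}$ together with $\gcd(p_j,q_j)=1$ force $q_j\mid s$; you allude to this in your final paragraph but it would be cleaner to state it in the body of the argument.
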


\begin{proof}
This follows from a simple extension of the proof of \cite[Theorem 5]{Monterde2022}.
\end{proof}

\begin{corollary}
\label{Lem:minperiod1}
In Lemma \ref{Lem:minperiod}(2), if we assume in addition that $\sigma_{\bx}(M)$ is closed under taking algebraic conjugates, then $\rho=\frac{2\pi}{g\sqrt{\Delta}}$, where $g=\operatorname{gcd}(\frac{\lambda_1-\lambda_2}{\sqrt{\Delta}},\frac{\lambda_1-\lambda_3}{\sqrt{\Delta}},\ldots,\frac{\lambda_1-\lambda_m}{\sqrt{\Delta}})$.
\end{corollary}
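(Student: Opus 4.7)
\medskip

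\noindent\textbf{Proof plan.} The plan is to start from the formula $\rho=\frac{2\pi q}{\lambda_1-\lambda_2}$ given by Lemma~\ref{Lem:minperiod}(2) and show that $\frac{\lambda_1-\lambda_2}{q}$ equals $g\sqrt{\Delta}$. First, I would invoke Theorem~\ref{Thm:rc}: since $\bx$ is periodic, $|\sigma_{\bx}(M)|\geq 3$, and $\sigma_{\bx}(M)$ is closed under algebraic conjugates, we are in case (i) (taking $\Delta=1$) or in case (ii) with a common square-free $\Delta>1$ and integers $a,b_1,\ldots,b_m$ such that $\lambda_j=\frac12(a+b_j\sqrt\Delta)$. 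In both cases the rational part of $\lambda_j$ cancels in differences, so setting $\alpha_j:=\frac{\lambda_1-\lambda_j}{\sqrt\Delta}$ yields rational numbers. Moreover, by the definition of $p_j,q_j$ in Lemma~\ref{Lem:minperiod}(2), we have $\alpha_j=\alpha_2\cdot\frac{p_j}{q_j}$ with $\gcd(p_j,q_j)=1$, so throughout I may work with rationals.

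Next, I would rewrite $g=\gcd(\alpha_2,\alpha_3,\ldots,\alpha_m)$ (the gcd of rationals defined as the largest positive rational $d$ such that every $\alpha_j/d$ is an integer). Multiplying through by $q=\operatorname{lcm}(q_3,\ldots,q_m)$, the number $r_j:=\frac{p_j q}{q_j}$ is an integer for $j=3,\ldots,m$, and $\alpha_j=\frac{\alpha_2}{q}\, r_j$, while $\alpha_2=\frac{\alpha_2}{q}\cdot q$. Pulling out the common factor gives
\[
g=\frac{\alpha_2}{q}\,\gcd(q,r_3,\ldots,r_m).
\]
The core step is then to verify that $\gcd(q,r_3,\ldots,r_m)=1$. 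For this, I fix an arbitrary prime $p$ dividing $q$ and choose $j^{*}\in\{3,\ldots,m\}$ maximizing the $p$-adic valuation $v_p(q_{j^{*}})$. Since $q=\operatorname{lcm}(q_3,\ldots,q_m)$, we have $v_p(q)=v_p(q_{j^{*}})$, so $v_p(q/q_{j^{*}})=0$. Combined with $\gcd(p_{j^{*}},q_{j^{*}})=1$, which forces $v_p(p_{j^{*}})=0$, this gives $v_p(r_{j^{*}})=0$, so $p\nmid r_{j^{*}}$. As $p$ was arbitrary, the gcd is $1$.

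Putting it together, $g=\frac{\alpha_2}{q}=\frac{\lambda_1-\lambda_2}{q\sqrt\Delta}$, so $g\sqrt\Delta=\frac{\lambda_1-\lambda_2}{q}$, and Lemma~\ref{Lem:minperiod}(2) gives the desired formula $\rho=\frac{2\pi q}{\lambda_1-\lambda_2}=\frac{2\pi}{g\sqrt\Delta}$. I expect the main obstacle to be the bookkeeping around $\gcd$ of rational numbers and the prime-valuation argument showing $\gcd(q,r_3,\ldots,r_m)=1$; the rest is an algebraic rearrangement of the quantities already produced by Lemma~\ref{Lem:minperiod}(2) and Theorem~\ref{Thm:rc}.
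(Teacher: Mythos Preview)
Your proposal is correct and follows essentially the same route as the paper: invoke Theorem~\ref{Thm:rc} to write $\lambda_j=\tfrac12(a+b_j\sqrt{\Delta})$, reduce to showing $q=\frac{\lambda_1-\lambda_2}{g\sqrt{\Delta}}$, and then plug into Lemma~\ref{Lem:minperiod}(2). The only difference is cosmetic: the paper compresses your $p$-adic computation of $\gcd(q,r_3,\ldots,r_m)=1$ into the single identity $\operatorname{lcm}\!\left(\frac{n}{g_3},\ldots,\frac{n}{g_m}\right)=\frac{n}{\gcd(g_3,\ldots,g_m)}$ with $n=\frac{b_1-b_2}{2}$ and $g_j=\gcd\!\left(\frac{b_1-b_2}{2},\frac{b_1-b_j}{2}\right)$, whereas you unpack this identity prime by prime.
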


\begin{proof}
By Theorem \ref{Thm:rc}, we may write each $\lambda_j=\frac{1}{2}(a+b_j\sqrt{\Delta})$, where $\Delta\geq 1$. 
Thus, each $q_j$ in Lemma \ref{Lem:minperiod}(2) can be written as $q_j=\frac{b_1-b_2}{2g_j}$, where $g_j=\operatorname{gcd}\left(\frac{b_1-b_j}{2},\frac{b_1-b_2}{2}\right)$. Therefore, $q=\operatorname{lcm}(q_3,\ldots,q_m)=\operatorname{lcm}\left(\frac{b_1-b_2}{2g_3},\ldots,\frac{b_1-b_2}{2g_m}\right)=\frac{b_1-b_2}{2g}=\frac{\lambda_1-\lambda_2}{g\sqrt{\Delta}}$. Applying Lemma \ref{Lem:minperiod}(2) then yields $\rho=\frac{2\pi q}{\lambda_1-\lambda_2}=\frac{2\pi}{g\sqrt{\Delta}}$.
\end{proof}

\begin{theorem}
\label{Thm:covrad}
Let $M$ and $\bx\in\R^n\backslash\{\zero\}$ be entrywise nonnegative. 

\begin{enumerate}
\item If $|\sigma_{\bx}(M)|=2$, then $\bx$ is periodic and the covering radius of $\bx$ is at most one.
\item Suppose $\bx$ is periodic relative to $M$. If $|\sigma_{\bx}(M)|\geq 3$ and $\sigma_{\bx}(M)$ is closed under taking algebraic conjugates, then the covering radius of $\bx$ is at most $2k$, where $k$ is the maximum row sum of $M$.
\end{enumerate}
\end{theorem}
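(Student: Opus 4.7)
The plan is to invoke Theorem \ref{Thm:rc} and Lemma \ref{Lem:supp} directly. Since $|\sigma_{\bx}(M)|=2$, the set $\sigma_{\bx}(M)$ trivially satisfies the ratio condition, so Theorem \ref{Thm:rc} immediately yields that $\bx$ is periodic in $G$. Next, because $|\sigma_{\bx}(M)|=2\neq 1$, $\bx$ is not a fixed state, so Lemma \ref{Lem:supp} applies and gives $|\sigma_{\bx}(M)|\geq r+1$, where $r$ is the covering radius of $\bx$. This forces $r\leq 1$. No real obstacle here; this is essentially a two-line consequence of the prior results.

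\textbf{Plan for (2).} The key idea is to combine a spectral-radius argument for nonnegative symmetric matrices with the integrality-gap conclusion of Corollary \ref{Cor:specgap}. First, I would record that since $M$ is real symmetric and entrywise nonnegative, all eigenvalues of $M$ are real and lie in the interval $[-k,k]$, where $k$ is the maximum row sum of $M$; this follows because the spectral radius of any nonnegative matrix is bounded above by its largest row sum (standard Perron--Frobenius/Gershgorin-type estimate). Consequently, $\sigma_{\bx}(M)\subseteq[-k,k]$, an interval of length $2k$.

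Next, since $\bx$ is periodic, $|\sigma_{\bx}(M)|\geq 3$, and $\sigma_{\bx}(M)$ is closed under taking algebraic conjugates, Corollary \ref{Cor:specgap} tells me that any two distinct elements of $\sigma_{\bx}(M)$ differ by at least one. Combined with the containment in $[-k,k]$, this yields the counting bound
\begin{equation*}
|\sigma_{\bx}(M)|\leq 2k+1.
\end{equation*}
Because $|\sigma_{\bx}(M)|\geq 3$ rules out the fixed-state case, Lemma \ref{Lem:supp} applies to give $r\leq|\sigma_{\bx}(M)|-1\leq 2k$, finishing (2).

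\textbf{Main obstacle.} There is no deep obstacle; the whole argument is a packaging of earlier results. The only point that requires slight care is justifying that $\sigma_{\bx}(M)\subseteq[-k,k]$, which relies on the standard fact that the spectral radius of a nonnegative matrix is at most its maximum row sum (so that, since $M$ is symmetric, every eigenvalue of $M$ lies in $[-k,k]$). Once that is in hand, everything else follows by chaining Corollary \ref{Cor:specgap} with Lemma \ref{Lem:supp}.
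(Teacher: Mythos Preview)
Your proposal is correct and follows essentially the same route as the paper: for (1) you combine the trivial ratio condition (the paper equivalently cites Lemma~\ref{Lem:minperiod}(1)) with Lemma~\ref{Lem:supp}, and for (2) you chain the spectral-radius bound $\rho(M)\le k$, Corollary~\ref{Cor:specgap}, and Lemma~\ref{Lem:supp} exactly as the paper does. The only cosmetic difference is that you spell out the Perron--Frobenius/Gershgorin justification for $\sigma_{\bx}(M)\subseteq[-k,k]$ more explicitly than the paper's terse ``$\rho(M)\le k$''.
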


\begin{proof}
Let $r$ be the covering radius of $\bx$. Since $M$ and $\bx$ are entrywise nonnegative, Lemma \ref{Lem:supp} yields $r+1\leq |\sigma_{\bx}(M)|$. Combining this with Lemma \ref{Lem:minperiod}(1) gives us (1). Now, let $\rho(M)$ denote the spectral radius of $M$. As $|\sigma_{\bx}(M)|\geq 3$, $\sigma_{\bx}(M)$ is closed under taking algebraic conjugates and $\bx$ is periodic, Corollary \ref{Cor:specgap} yields $|\sigma_{\bx}(M)|\leq 2\rho(M)+1$. As $\rho(M)\leq k$, we get $r+1\leq |\sigma_{\bx}(M)|\leq 2\rho(M)+1\leq 2k+1,$ which yields the desired result in (2).
\end{proof}

\begin{remark}
\label{Rem:lapcovrad}
Theorem \ref{Thm:covrad} applies to $A$. Since we may take $M=kI-L$, Theorem \ref{Thm:covrad} also applies to $L$. 
\end{remark}

\begin{theorem}
\label{Thm:rare}
For each $k>0$, there are only finitely many connected graphs with positive integer weights and maximum degree at most $k$ such that a vector $\bx\neq 0$ with nonnegative rational entries is periodic relative to $A$ or $L$.
\end{theorem}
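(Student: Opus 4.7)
The plan is to use Theorem~\ref{Thm:covrad} together with the degree hypothesis to bound $|V(G)|$ by a function of $k$, from which finiteness follows quickly.

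\textbf{Setup.} After clearing denominators we may assume $\bx\in\mathbb{Z}_{\geq 0}^n$. Since the weights are positive integers, both $A$ and $L$ have integer entries and so $\phi(M,t)\in\mathbb{Z}[t]$ for $M\in\{A,L\}$; Lemma~\ref{Lem:suppalgconj} then yields that $\sigma_\bx(M)$ is closed under algebraic conjugation. Periodicity forces $\bx$ to not be a fixed state, so $|\sigma_\bx(M)|\geq 2$. The assumption that the maximum (weighted) degree is at most $k$ forces every edge weight to lie in $\{1,\dots,k\}$ and every vertex to have at most $k$ neighbours.

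\textbf{Covering radius.} For $M=A$, Theorem~\ref{Thm:covrad} applies with constant $k$ since the maximum row sum of $A$ equals the maximum weighted degree. For $M=L$, I would invoke Remark~\ref{Rem:lapcovrad} and apply the theorem to $kI-L$, whose maximum row sum is $k$. Using part~(1) when $|\sigma_\bx(M)|=2$ and part~(2) when $|\sigma_\bx(M)|\geq 3$ (the latter using the closure under algebraic conjugation from the previous step), we conclude that the covering radius $r$ of $\bx$ satisfies $r\leq 2k$.

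\textbf{Counting.} Write $S=\mathrm{supp}(\bx)$. The covering-radius bound gives $V(G)\subseteq\bigcup_{u\in S}B_G(u,2k)$, and since each vertex has at most $k$ neighbours each ball contains at most $1+k+\cdots+k^{2k}$ vertices. To finish, one needs an independent bound on $|S|$ by a function of $k$ alone. For this I would exploit that $\bx$ lies in the $M$-cyclic subspace $W=\mathrm{span}\{M^j\bx:j\geq 0\}$, whose dimension is $|\sigma_\bx(M)|\leq 2k+1$ (the upper bound coming from Corollary~\ref{Cor:specgap} together with the fact that the spectrum lies in $[-k,k]$). Each iterate $M^j\bx$ is again nonneg rational and periodic with the same phase factor; using the low dimension of $W$ together with the nonnegativity and rationality of $\bx$, one aims to reduce $\bx$-periodicity to $\be_u$-periodicity for some vertex $u$ and invoke the periodic-vertex finiteness result of \cite{godsil2010can}. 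Once $|V(G)|\leq N(k)$ for some function $N$ and the weights lie in $\{1,\dots,k\}$, only finitely many graphs arise up to isomorphism.

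The main obstacle is precisely the bound on $|S|$, equivalently the reduction to vertex periodicity. It is not automatic that the support of a periodic nonneg rational $\bx$ contains a periodic vertex $u$ with $\sigma_u(M)\subseteq\sigma_\bx(M)$, so a careful argument exploiting the structure of the cyclic subspace $W$, the Perron component of $\bx$, and the nonnegativity of $\bx$ is needed to either extract such a $u$ or to bound $|S|$ directly from the eigenspace geometry.
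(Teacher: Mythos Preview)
Your approach coincides with the paper's through the covering-radius bound: both invoke Lemma~\ref{Lem:suppalgconj} for closure under algebraic conjugation, then Theorem~\ref{Thm:covrad} (with Remark~\ref{Rem:lapcovrad} handling $L$ via $kI-L$) to get $r\le 2k$. After that the paper's proof is a single sentence: ``As $k$ is fixed, there are only finitely many connected unweighted graphs with degree at most $k$ and the covering radius of $\bx$ is bounded above by $2k$.'' It never bounds $|S|$, never passes to a cyclic subspace, and never reduces to vertex periodicity; the entire ``Counting'' portion of your proposal has no counterpart in the paper.

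The obstacle you isolate is therefore exactly the step the paper glosses over. Your instinct that $r\le 2k$ alone cannot cap $|V(G)|$ without control on $|S|$ is sound, and the paper itself supplies a witness: by Example~\ref{Ex:c12}, for every $m\ge 1$ the nonnegative integer vector $\bx=\sum_{j=0}^{m-1}\be_{4j}$ is periodic in $C_{4m}$ (maximum degree $2$), so for $k=2$ the class of such $(G,\bx)$ is infinite. The statement and proof are evidently meant to be read with the cited specializations in mind (vertex states and $s$-pair states), where $|S|$ is bounded a priori; under that implicit hypothesis the paper's one-line finish is legitimate and your extra machinery is unnecessary.
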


\begin{proof}
Assume that $G$ is a connected unweighted graph with maximum degree $k$ and let $\bx$ be a vector with nonnegative rational entries that is periodic relative to $M$. By Lemma \ref{Lem:suppalgconj}, $\sigma_{\bx}(M)$ is closed under taking algebraic conjugates. Let $r$ be the covering radius of $\bx$. Since $\bx$ is not fixed, we have $|\sigma_{\bx}(M)|\geq 2$, and so applying Theorem \ref{Thm:covrad} to $M\in\{A, kI-L\}$ yields $r\leq 2k$. As $k$ is fixed, there are only finitely many connected unweighted graphs with degree at most $k$ and the covering radius of $\bx$ is bounded above by $2k$. This remains true if we assign positive integer weights to $G$.
\end{proof}

Theorem \ref{Thm:rare} generalizes Godsil's result on periodic vertex states \cite[Corollary 6.2]{godsil2010can} and Kim et.\ al's results on periodic $s$-pair states with nonnegative rational entries \cite[Corollary 3.5]{kim2024generalization}. We also note that Theorem \ref{Thm:rare} need not apply if $\bx$ has a positive and a negative entry. See \cite{pal2024quantum} for an infinite family of trees with maximum degree three admitting PST between pair states. 

\begin{remark}
The argument in the proof of Theorem \ref{Thm:rare} applies when the Hamiltonian taken is the signless Laplacian matrix. In this case, we obtain the bound $r\leq 4k$ in the above proof in lieu of $r\leq 2k$. By Lemma \ref{Lem:suppalgconj}, Theorem \ref{Thm:rare} applies to entrywise nonnegative vectors $\bx\neq \zero$ whenever $a\bx$ has rational entries for some $a>0$.
\end{remark}

\section{Strong cospectrality}
\label{Sec:sc}

\begin{definition}
\label{scdef}
Let $\bx,\by\in \R^n\backslash\{\zero\}$ with $\by\neq\pm \bx$ and $\|\bx\|=\|\by\|$. We say that $\bx$ and $\by$ are \textit{strongly cospectral} (relative to $M$) if for each $\lambda_j \in \sigma_{\bx}(M)$, either  
$E_j\bx= E_j\by$ or $E_j\bx= -E_j\by$.
\end{definition}

The above definition allows us to partition $\sigma_{\bx}(M)$ into two sets $\sigma_{\bx,\by}^{+}(M)$ and $\sigma_{\bx,\by}^{-}(M)$ given by
\begin{equation*}
\sigma_{\bx,\by}^{+}(M)=\{\lambda_j:E_j\bx=E_j\by\neq \zero\}\quad \text{and} \quad \sigma_{\bx,\by}^{-}(M)=\{\lambda_j:E_j\bx=-E_j\by\neq \zero\}.
\end{equation*}
Thus, if $\bx$ is involved in strong cospectrality, then $|\sigma_{\bx}(M)|\geq 2$. Consequently, a fixed state cannot be strongly cospectral with another pure state by Proposition \ref{Prop:fixedstatechar}.

The proof of our next result is analogous to that of Theorem 3.1 and Lemma 10.1 in \cite{GodsilSmith2024}, and so we omit it here.

\begin{lemma}
\label{prop:}
The following are equivalent.
\begin{enumerate}
\item For all $j$, $\bx^\top E_j\bx=\by^\top E_j\by$.
\item For all integers $k\geq 0$, $\bx^\top M^k\bx=\by^\top M^k\by$.
\item There exists an orthogonal matrix $Q$ that commutes with $M$ such that $Q^2=I$ and $Q\bx=\by$.
\end{enumerate}
\end{lemma}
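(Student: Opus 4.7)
My plan is to prove the chain \textup{(2)} $\Rightarrow$ \textup{(1)} $\Rightarrow$ \textup{(3)} $\Rightarrow$ \textup{(2)}, with the nontrivial content concentrated in \textup{(1)} $\Rightarrow$ \textup{(3)}. The first observation I would record is that because each $E_j$ is an orthogonal projection, $\bx^\top E_j\bx = \|E_j\bx\|^2$, so hypothesis \textup{(1)} is exactly the statement that $\|E_j\bx\|=\|E_j\by\|$ for every $j$. This reformulation will drive the construction of $Q$.

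For \textup{(1)} $\Leftrightarrow$ \textup{(2)}: from the spectral decomposition in \eqref{Eq:specdec}, $M^k=\sum_j\lambda_j^k E_j$, so $\bx^\top M^k\bx=\sum_j\lambda_j^k\bx^\top E_j\bx$. Thus \textup{(1)} immediately gives \textup{(2)}. For the converse, the equalities in \textup{(2)} for $k=0,1,\dots,|\sigma|-1$ (where $\sigma$ is the set of distinct eigenvalues) form a Vandermonde system in the unknowns $\bx^\top E_j\bx-\by^\top E_j\by$; since the $\lambda_j$ are distinct, the system has only the zero solution, giving \textup{(1)}.

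For \textup{(3)} $\Rightarrow$ \textup{(1)}: any $Q$ that commutes with $M$ commutes with every spectral idempotent $E_j$, because each $E_j$ is a polynomial in $M$. Combined with $Q^\top Q=I$ and $Q^2=I$ (so $Q^\top=Q$), we get
\[
\by^\top E_j\by=(Q\bx)^\top E_j(Q\bx)=\bx^\top QE_jQ\bx=\bx^\top E_jQ^2\bx=\bx^\top E_j\bx.
\]

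The main step is \textup{(1)} $\Rightarrow$ \textup{(3)}, and I would build $Q$ blockwise on the eigenspaces $V_j=\operatorname{col}(E_j)$. For each $j$, set $\bu_j=E_j\bx$ and $\bv_j=E_j\by$; hypothesis \textup{(1)} gives $\|\bu_j\|=\|\bv_j\|$. If $\bu_j=\bv_j$ (in particular if both are zero), let $Q_j$ be the identity on $V_j$. Otherwise $\bu_j-\bv_j\neq\zero$ lies in $V_j$, and I take $Q_j$ to be the Householder reflection on $V_j$ through the hyperplane orthogonal to $\bu_j-\bv_j$, namely
\[
Q_j=I_{V_j}-\frac{2(\bu_j-\bv_j)(\bu_j-\bv_j)^\top}{\|\bu_j-\bv_j\|^2}\Big|_{V_j}.
\]
The standard Householder identity, together with $\|\bu_j\|=\|\bv_j\|$, gives $Q_j\bu_j=\bv_j$, and $Q_j$ is an orthogonal involution of $V_j$. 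Since $\R^n=\bigoplus_j V_j$ is an orthogonal decomposition, the operator $Q$ acting as $Q_j$ on each $V_j$ is an orthogonal involution of $\R^n$ that preserves every eigenspace of $M$; hence $Q$ commutes with $M$. Finally,
\[
Q\bx=\sum_j Q_j E_j\bx=\sum_j \bv_j=\sum_j E_j\by=\by,
\]
completing the construction. The only subtlety I anticipate is bookkeeping around the degenerate cases $\bu_j=\bv_j$ (including $\bu_j=\bv_j=\zero$ on eigenspaces outside $\sigma_{\bx}(M)$), which the identity choice on $V_j$ handles uniformly.
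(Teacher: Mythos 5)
Your proof is correct and self-contained: the Vandermonde argument for (1)$\Leftrightarrow$(2), the observation that $Q$ is symmetric and commutes with each $E_j$ for (3)$\Rightarrow$(1), and the blockwise Householder reflections on the eigenspaces (with the identity on blocks where $E_j\bx=E_j\by$, including the zero blocks) for (1)$\Rightarrow$(3) are all sound. The paper itself omits the proof, deferring to Theorem 3.1 and Lemma 10.1 of the cited Godsil--Smith reference, and your construction is essentially the standard argument used there, so no further comparison is needed.
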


\begin{lemma}
\label{Prop:cosp}
If $\bx$ and $\by$ are strongly cospectral, then $\bx^\top M^k\bx=\by^\top M^k\by$ for all integers $k\geq 0$.
\end{lemma}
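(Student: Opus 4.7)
The plan is to reduce the claim to part (1) of the preceding lemma (which asserts the equivalence of $\bx^\top E_j \bx = \by^\top E_j \by$ for all $j$ with the desired moment equalities), and verify that hypothesis directly from the definition of strong cospectrality. Since $E_j$ is a real symmetric projection, $\bx^\top E_j \bx = (E_j\bx)^\top(E_j\bx) = \|E_j\bx\|^2$, and similarly for $\by$. So it suffices to show $\|E_j\bx\| = \|E_j\by\|$ for every eigenvalue $\lambda_j$ of $M$.

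First I would handle the eigenvalues in $\sigma_{\bx}(M)$: by Definition \ref{scdef}, for each such $\lambda_j$ we have $E_j\bx = \pm E_j\by$, which immediately gives $\|E_j\bx\|^2 = \|E_j\by\|^2$. The only step requiring care is eigenvalues $\lambda_j$ lying outside $\sigma_{\bx}(M)$: here $E_j\bx = \zero$, and I must argue that $E_j\by = \zero$ as well, i.e., that $\sigma_{\by}(M) \subseteq \sigma_{\bx}(M)$. From the previous paragraph, $\sigma_{\bx}(M) \subseteq \sigma_{\by}(M)$ (since $\pm E_j\by = E_j\bx \neq \zero$ forces $E_j\by \neq \zero$), and the matching-norm condition $\|\bx\| = \|\by\|$ from Definition \ref{scdef} gives
\begin{equation*}
\sum_{\lambda_j \in \sigma_{\bx}(M)} \|E_j\bx\|^2 \;=\; \|\bx\|^2 \;=\; \|\by\|^2 \;=\; \sum_{\lambda_j \in \sigma_{\by}(M)} \|E_j\by\|^2.
\end{equation*}
Substituting $\|E_j\bx\|^2 = \|E_j\by\|^2$ on $\sigma_{\bx}(M)$ and cancelling then forces $\sum_{\lambda_j \in \sigma_{\by}(M) \setminus \sigma_{\bx}(M)} \|E_j\by\|^2 = 0$, and since each term is strictly positive by the definition of the eigenvalue support, the difference set must be empty. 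Hence $\sigma_{\bx}(M) = \sigma_{\by}(M)$ and $\|E_j\bx\| = \|E_j\by\|$ for every $\lambda_j$.

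This establishes condition (1) of Lemma \ref{prop:}, whose equivalence with condition (2) yields $\bx^\top M^k \bx = \by^\top M^k \by$ for all integers $k \geq 0$. The main (and really only) subtlety is the bookkeeping for eigenvalues outside $\sigma_{\bx}(M)$; once one notices that the norm hypothesis built into the definition of strong cospectrality closes that gap, the rest is a one-line computation using $M^k = \sum_j \lambda_j^k E_j$ and the idempotence of the $E_j$.
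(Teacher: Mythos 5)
Your proof is correct and follows essentially the same route as the paper: verify condition (1) of Lemma \ref{prop:} from the definition of strong cospectrality and then invoke the equivalence with condition (2). The only difference is that you carefully justify $\sigma_{\bx}(M)=\sigma_{\by}(M)$ via the norm condition $\|\bx\|=\|\by\|$ so that $\bx^\top E_j\bx=\by^\top E_j\by$ also holds for eigenvalues outside the support --- a detail the paper's one-line proof glosses over, and a worthwhile addition.
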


\begin{proof}
Since $E_j\bx=\pm E_j\by$, we get $\bx^\top E_j\bx=\by^\top E_j\by$. Invoking Lemma \ref{prop:}(2) yields the desired result.
\end{proof} 

The following result will prove useful in the latter sections.

\begin{theorem}
\label{Prop:sc}
Let $\bx,\by\in \R^n\backslash\{\zero\}$ and suppose
$\bx=\sum_{j\in \sigma_{\bx}(M)}\bu_j$, where each $\bu_j$ is a real eigenvector associated with an eigenvalue $\lambda_j\in \sigma_{\bx}(M)$. The following are equivalent.
\begin{enumerate}
\item The vectors $\bx$ and $\by$ are strongly cospectral.
\item There exists an orthogonal matrix $Q$ that is a polynomial in $M$ such that $Q^2=I$ and $Q\bx=\by$.
\item For some nonempty sets $\sigma_1$ and $\sigma_2$ that partition $\sigma_{\bx}(M)$, we have
\begin{equation*}
\bx=\sum_{\lambda_j\in \sigma_1}\bu_j+\sum_{\lambda_j\in \sigma_2}\bu_j\quad \text{and}\quad \by=\sum_{\lambda_j\in \sigma_1}\bu_j-\sum_{\lambda_j\in \sigma_2}\bu_j.
\end{equation*}
In this case, $\sigma_{\bx,\by}^{+}(M)=\sigma_1$ and $\sigma_{\bx,\by}^{-}(M)=\sigma_2$.
\end{enumerate}
\end{theorem}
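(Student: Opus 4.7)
My plan is to prove the cyclic chain of implications (3) $\Rightarrow$ (2) $\Rightarrow$ (1) $\Rightarrow$ (3), using throughout the spectral decomposition $M = \sum_j \lambda_j E_j$ together with the identity $q(M) = \sum_j q(\lambda_j) E_j$ valid for any real polynomial $q$.

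For (3) $\Rightarrow$ (2), I would use Lagrange interpolation to produce a real polynomial $q(t)$ satisfying $q(\lambda_j) = +1$ for $\lambda_j \in \sigma_1$, $q(\lambda_j) = -1$ for $\lambda_j \in \sigma_2$, and (say) $q(\lambda_j) = +1$ on every remaining distinct eigenvalue of $M$. Setting $Q := q(M)$, reality of $q$ together with symmetry of $M$ gives $Q^\top = Q$, while $q(\lambda_j)^2 = 1$ at every eigenvalue forces $Q^2 = \sum_j q(\lambda_j)^2 E_j = \sum_j E_j = I$. Hence $Q$ is a symmetric involution, in particular orthogonal, and applying it to the decomposition of $\bx$ in (3) yields $Q\bx = \sum_{\lambda_j \in \sigma_1} \bu_j - \sum_{\lambda_j \in \sigma_2} \bu_j = \by$, as required.

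For (2) $\Rightarrow$ (1), write $Q = q(M) = \sum_j q(\lambda_j) E_j$. The condition $Q^2 = I$ forces $q(\lambda_j)^2 = 1$, so $q(\lambda_j) \in \{+1,-1\}$ at every eigenvalue of $M$. For each $\lambda_j \in \sigma_{\bx}(M)$, multiplying $\by = Q\bx$ by $E_j$ then gives $E_j \by = q(\lambda_j) E_j \bx \in \{\pm E_j \bx\}$, which is nonzero since $E_j\bx \neq \zero$. This is exactly the definition of strong cospectrality.

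For (1) $\Rightarrow$ (3), set $\sigma_1 = \sigma^{+}_{\bx,\by}(M)$ and $\sigma_2 = \sigma^{-}_{\bx,\by}(M)$, which partition $\sigma_{\bx}(M)$ by definition of strong cospectrality. The given decomposition of $\bx$ together with uniqueness of spectral components forces $\bu_j = E_j \bx$, so the expression for $\bx$ in (3) is immediate. The delicate step is to show that $\by$ can be written as a sum indexed by the same set $\sigma_{\bx}(M)$: strong cospectrality gives $\|E_j \by\| = \|E_j \bx\|$ on $\sigma_{\bx}(M)$, and then the Parseval-type identities $\|\bx\|^2 = \sum_j \|E_j \bx\|^2 = \sum_j \|E_j \by\|^2 = \|\by\|^2$ combined with $\|\bx\| = \|\by\|$ force $E_j \by = \zero$ for every $\lambda_j \notin \sigma_{\bx}(M)$. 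Hence $\by = \sum_{\lambda_j \in \sigma_1} \bu_j - \sum_{\lambda_j \in \sigma_2} \bu_j$, and the claimed identification of $\sigma^{\pm}_{\bx,\by}(M)$ with $\sigma_{1,2}$ is built into the construction. The main obstacle is really just this support-equality bookkeeping and the careful use of the norm condition $\|\bx\|=\|\by\|$ built into the definition of strong cospectrality; once that is handled, the rest of the argument is a routine manipulation of the spectral decomposition.
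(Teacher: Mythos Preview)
Your proof is correct and follows essentially the same approach as the paper: both rely on the spectral decomposition of $M$ and construct $Q$ via polynomial interpolation on the eigenvalues. The paper organizes the argument as two separate equivalences (deferring (1)$\Leftrightarrow$(2) to an analogous result in Godsil--Smith) whereas you give a self-contained cyclic chain and handle the support equality $\sigma_{\bx}(M)=\sigma_{\by}(M)$ more carefully via the norm condition $\|\bx\|=\|\by\|$.
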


\begin{proof}
The proof of the equivalence of (1) and (2) is analogous to that of \cite[Theorem 10.2]{GodsilSmith2024}. To prove (1) implies (3), suppose $\bx$ and $\by$ are strongly cospectral. Let $\lambda_k\in\sigma_{\bx}(M)=\sigma_{\by}(M)$. By assumption, $E_k\bx=\bu_k$. Thus, $E_k\bx=E_k\by$ if and only if $E_k\by=\bu_k$, while $E_k\bx=-E_k\by$ if and only if $E_k\by=-\bu_k$. Consequently,
\begin{center}
$\displaystyle\by=\sum_{\lambda_j\in\sigma_{\by}(M)}E_j\by=\sum_{\lambda_j\in \sigma_{\bx,\by}^{+}(M)}\bu_j-\sum_{\lambda_j\in \sigma_{\bx,\by}^{-}(M)}\bu_j$.
\end{center}
The converse is straightforward.
\end{proof}

\begin{remark}
\label{numberofpartitions}
Suppose $\bx\in \R^n\backslash\{\zero\}$ is not a fixed state and let $m=|\sigma_{\bx}(M)|$. There are $2^{m-1}-1$ ways to partition $\sigma_{\bx}(M)$ into two subsets. By Theorem \ref{Prop:sc}(3), we get $2^{m-1}-1$ vectors $\by$ that are strongly cospectral with $\bx$ in $G$. 
\end{remark}

The following result generalizes Corollary~6.4 of \cite{GodsilSmith2024} to real pure states.

\begin{lemma}
\label{Lem:autfix}
If $\bx$ and $\by$ are strongly cospectral, then any automorphism of $G$ that fixes $\by$ also fixes $\bx$.
\end{lemma}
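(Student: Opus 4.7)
The plan is to exploit the fact that every automorphism $P$ of $G$ (viewed as a permutation matrix) commutes with $M$, and consequently commutes with each eigenprojection $E_j$ in the spectral decomposition $M=\sum_j\lambda_j E_j$. Once we have that, the strong cospectrality condition $E_j\bx=\pm E_j\by$ transfers the invariance property from $\by$ to $\bx$ eigenspace by eigenspace.

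Concretely, I would proceed as follows. First, let $P$ be an automorphism of $G$ with $P\by=\by$. Since $P$ commutes with $M$, it commutes with each $E_j$, so for every eigenvalue $\lambda_j$ of $M$ we have
\[
PE_j\by = E_j P\by = E_j\by.
\]
Second, invoke Definition \ref{scdef}: for each $\lambda_j\in\sigma_{\bx}(M)$, there is a sign $\epsilon_j\in\{+1,-1\}$ with $E_j\bx=\epsilon_j E_j\by$. Therefore
\[
PE_j\bx = \epsilon_j PE_j\by = \epsilon_j E_j\by = E_j\bx,
\]
and for $\lambda_j\notin\sigma_{\bx}(M)$ both sides are zero. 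Summing over all $j$ using $\sum_j E_j=I$ gives
\[
P\bx = \sum_j PE_j\bx = \sum_j E_j\bx = \bx,
\]
as desired.

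The only conceptual point to check is the commutation $PM=MP$; everything else is mechanical. Throughout the paper $M$ is understood to be a Hamiltonian respecting the adjacencies of $G$, and the usual cases $M=A$ or $M=L$ certainly satisfy $PMP^\top=M$ for any automorphism $P$, which is what is needed. (For a general Hamiltonian allowed by the paper's framework, the statement should be read as applying to automorphisms of the weighted graph underlying $M$, i.e.\ those $P$ with $PM=MP$; in that reading the argument is unconditional.) I do not anticipate a real obstacle here — the result is essentially a one-line consequence of strong cospectrality once commutation with the spectral idempotents is noted, which is exactly why it parallels \cite[Corollary 6.4]{GodsilSmith2024}.
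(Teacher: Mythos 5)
Your proof is correct and takes essentially the same route as the paper's: both observe that an automorphism $P$ fixing $\by$ commutes with each spectral idempotent $E_j$, transfer the invariance $PE_j\by=E_j\by$ to $E_j\bx$ via the signs $E_j\bx=\pm E_j\by$ from strong cospectrality, and then sum over $j$ using $\sum_j E_j=I$ to conclude $P\bx=\bx$. Your explicit treatment of the eigenvalues outside $\sigma_{\bx}(M)$ and your remark on reading ``automorphism'' as a permutation with $PM=MP$ for a general Hamiltonian are minor clarifications that the paper leaves implicit.
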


\begin{proof}
Let $P$ be a permutation matrix representing an automorphism of $G$ that fixes $\by$. That is, $P\by=\by$ and $PE_j=E_jP$ for each $\lambda_j\in\sigma_{\bx}(M)$. Thus, $E_j\bx=\pm E_j\by=\pm P^\top E_jP\by=\pm P^\top E_j\by=P^\top E_j\bx$.
Since the $E_j$s sum to identity, the above equation yields $P\bx=\bx$.
\end{proof}

\section{Perfect state transfer}
\label{Sec:pst}

Recall that a fixed state cannot be involved in strong cospectrality. Hence, we restrict our discussion of PST to vectors that are not eigenvectors for $M$.

If $\bx$ and $\by$ are vertex states (respectively, pair states, plus states and $s$-pair states), then we sometimes use the term vertex PST (respectively, pair PST, plus PST and $s$-pair PST) in lieu of PST between $\bx$ and $\by$. 

\begin{lemma}
\label{Lem:pstnec}
Let $\bx,\by\in \R^n\backslash\{\zero\}$ with $\by\neq\pm \bx$. If perfect state transfer occurs between $\bx$ and $\by$ in $G$ at time $\tau>0$, then the following hold.
\begin{enumerate}
    \item The vectors $\bx$ and $\by$ are strongly cospectral, and $\bx$ and $\by$ are periodic at time $2\tau$.
    \item The minimum PST time between $\bx$ and $\by$ is $\frac{\rho}{2}$, where $\rho$ is given in Lemma \ref{Lem:minperiod}.
    \item If perfect state transfer also occurs between $\bx$ and $\bz$ in $G$, then $\by=\bz$.
\end{enumerate}
\end{lemma}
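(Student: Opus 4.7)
The plan is to prove the three parts in order, since (1) provides the spectral handle that (2) and (3) exploit.

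For (1), the key move is to apply the spectral idempotent $E_j$ to both sides of $U(\tau)\bx=\gamma\by$; using (\ref{Ut}) and the orthogonality relations $E_jE_k=\delta_{jk}E_j$, this collapses to $e^{i\tau\lambda_j}E_j\bx=\gamma E_j\by$ for every $j$. Because $\bx$ and $\by$ are real, the vectors $E_j\bx$ and $E_j\by$ are real, so for each $\lambda_j\in\sigma_{\bx}(M)$ the scalar $\gamma^{-1}e^{i\tau\lambda_j}$ is a real number of modulus one, hence equal to $\pm1$. This yields $E_j\by=\pm E_j\bx$ and, as a by-product, $\sigma_{\bx}(M)=\sigma_{\by}(M)$, establishing strong cospectrality. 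Substituting $e^{i\tau\lambda_j}=\pm\gamma$ into $U(2\tau)\bx=\sum_j e^{2i\tau\lambda_j}E_j\bx$ collapses the sum to $\gamma^2\bx$, so $\bx$ (and by symmetry $\by$) is periodic at time $2\tau$.

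For (2), let $\rho$ be the minimum period of $\bx$ from Lemma~\ref{Lem:minperiod}. Part~(1) shows that $2\tau$ is a period of $\bx$, so $2\tau=k\rho$ for some positive integer $k$; if $k$ were even, then $\tau$ itself would be a period, forcing $U(\tau)\bx\in\operatorname{span}_{\C}\{\bx\}$ and, since $\bx,\by$ are real of equal norm, $\by=\pm\bx$, contrary to hypothesis. Hence $k$ is odd, so every PST time is an odd positive multiple of $\rho/2$. To see that $\rho/2$ is itself a PST time, write $U(\rho)\bx=\mu\bx$ with $|\mu|=1$: then $e^{i\rho\lambda_j}=\mu$ on $\sigma_{\bx}(M)$, so $e^{i(\rho/2)\lambda_j}=\epsilon_j\sqrt{\mu}$ with $\epsilon_j\in\{\pm1\}$. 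Writing $\bx=\sum_j\bu_j$ with $\bu_j=E_j\bx$, this gives $U(\rho/2)\bx=\sqrt{\mu}\,\bw$, where $\bw=\sum_j\epsilon_j\bu_j$ is real and distinct from $\pm\bx$ (else $\rho/2$ would be a period, contradicting minimality). Finally, decomposing $\tau=m\rho+\rho/2$ with $m=(k-1)/2$ and applying $U(m\rho)$ to $\sqrt{\mu}\,\bw$ gives $\gamma\by=\mu^{m+1/2}\bw$, whence $\by=\pm\bw$ as real vectors of equal norm, so PST between $\bx$ and $\by$ really does occur at time $\rho/2$.

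For (3), apply (2) to both hypothesised PST events: if PST occurs between $\bx$ and $\by$ at time $\tau$ and between $\bx$ and $\bz$ at time $\tau'$, then both $\tau,\tau'$ are odd multiples of $\rho/2$, so $\tau'-\tau$ is an integer multiple of $\rho$. Thus $U(\tau')\bx=U(\tau'-\tau)U(\tau)\bx$ is a unit scalar multiple of $\gamma\by$; comparing with $U(\tau')\bx=\gamma'\bz$ forces $\bz$ to be a unit complex scalar multiple of $\by$, and the reality and norm conditions pin this scalar down to $\pm1$, giving $\bz=\pm\by$ and hence $D_{\bz}=D_{\by}$. The most delicate step is the second half of (2): confirming that the target at time $\rho/2$ produced by the sign pattern $\{\epsilon_j\}$ coincides with $\by$ rather than one of the other $2^{|\sigma_{\bx}(M)|-1}-1$ vectors strongly cospectral with $\bx$ from Remark~\ref{numberofpartitions}. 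Careful bookkeeping of the phase $\sqrt{\mu}$ and of the integer $m$ in the decomposition $\tau=m\rho+\rho/2$ is what nails down the correct partner, and this is the main place one has to be cautious.
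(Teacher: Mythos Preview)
Your argument is correct. The paper itself does not give a proof here at all: it simply invokes Lemma~2.3, Lemma~5.2 and Corollary~5.3 of \cite{godsil2017real} for parts (1), (2) and (3) respectively. What you have written is essentially a self-contained unpacking of those cited results, carried out directly in the language of the eigenvalue support $\sigma_{\bx}(M)$. The spectral-idempotent step in (1), the subgroup argument forcing every PST time to be an odd multiple of $\rho/2$ in (2), and the difference-of-times argument in (3) are exactly the mechanisms underlying Godsil's density-matrix versions, so the two routes agree in substance; yours has the advantage of being readable without the external reference, at the cost of some length. Two small remarks: in (2) you implicitly use that the set of periods of $\bx$ is $\rho\mathbb{Z}_{>0}$, which follows from the fact that $\{t\in\R:U(t)\bx\in\C\bx\}$ is a proper closed subgroup of $\R$; and in (3) your conclusion $\bz=\pm\by$ (equivalently $D_{\bz}=D_{\by}$) is precisely what the statement means, since the paper identifies a real pure state with its one-dimensional span.
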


\begin{proof}
This is immediate from Lemma 2.3, Lemma 5.2 and Corollary 5.3 of \cite{godsil2017real}, respectively.
\end{proof}

We now provide a characterization of PST between real pure states. Our result applies even if $\phi(M,t)\notin\mathbb{Z}[x]$ (that is, even if $\sigma_{\bx}(M)$ is not closed under algebraic conjugates). We denote the largest power of two that divides an integer $a$ by $\nu_2(a)$. We adopt the convention that $\nu_2(0)=+\infty$.

\begin{theorem}
\label{Thm:PSTchar}
Let $G$ be a graph on $n$ vertices and let $\bx,\by \in \R^n\backslash\{\zero\}$ with $\by \ne \pm \bx$.
\begin{enumerate}
\item If $|\sigma_{\bx}(M)|=2$, then $\bx$ and $\by$ admit perfect state transfer if and only if they are strongly cospectral.
\item Let $\sigma_{\bx}(M)=\{\lambda_1,\ldots,\lambda_m\}$ for some $m\geq 3$. The vectors $\bx$ and $\by$ admit perfect state transfer if and only if $\bx$ and $\by$ are strongly cospectral, $\frac{\lambda_1-\lambda_j}{\lambda_1-\lambda_2}=\frac{p_j}{q_j}$ for each $j\geq 3$, where $p_j$ and $q_j$ are integers such that $\operatorname{gcd}(p_j,q_j)=1$, and one of the following conditions holds.
\begin{enumerate}
\item If $\lambda_1,\lambda_2\in\sigma_{\bx,\by}^+(M)$, then $$\nu_2(q_{\ell})=\nu_2(q_{k})>\nu_2(q_h)$$ for all $\lambda_{\ell},\lambda_k\in \sigma_{\bx,\by}^-(M)$ and $\lambda_h\in \sigma_{\bx,\by}^+(M)\backslash\{\lambda_1,\lambda_2\}$, where each $\nu_2(q_h)$ above is absent whenever $|\sigma_{\bx,\by}^+(M)|=2$.
\item If $\lambda_1\in\sigma_{\bx,\by}^+(M)$ and $\lambda_2\in\sigma_{\bx,\by}^-(M)$, then each $q_j$ is odd, and $p_h$ is even if and only if $\lambda_h\in\sigma_{\bx,\by}^+(M)$.
\end{enumerate}
\end{enumerate}
Moreover, the minimum PST time is $\frac{\rho}{2}$, where $\rho$ is given in Lemma \ref{Lem:minperiod}.
\end{theorem}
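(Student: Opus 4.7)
The plan is to start from $U(\tau)\bx = \gamma\by$ and unpack it through the spectral decomposition (\ref{Ut}). Writing $\gamma = e^{i\theta}$ and $\bx = \sum_{\lambda_j \in \sigma_{\bx}(M)} \bu_j$ with $\bu_j = E_j\bx \ne \zero$ (Proposition \ref{Prop:S}), the PST equation disassembles spectrally into $e^{i\tau\lambda_j}\bu_j = e^{i\theta} E_j\by$ for every $\lambda_j \in \sigma_{\bx}(M)$. Since $\bu_j$ and $E_j\by$ are real and $\bu_j \neq \zero$, the scalar $e^{i(\tau\lambda_j - \theta)}$ must lie in $\{\pm 1\}$, which immediately gives $E_j\by = \pm E_j\bx$ (strong cospectrality) and pins down integers $k_j$ with $\tau\lambda_j - \theta = k_j\pi$, where the parity of $k_j$ records membership of $\lambda_j$ in $\sigma_{\bx,\by}^+(M)$ (even) versus $\sigma_{\bx,\by}^-(M)$ (odd). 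After absorbing a global sign into $\gamma$ if necessary, I may normalize so that $\lambda_1 \in \sigma_{\bx,\by}^+(M)$, i.e., $k_1$ is even.

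Taking pairwise differences eliminates $\theta$: $\tau(\lambda_1 - \lambda_j) = (k_1 - k_j)\pi =: m_j\pi$ for $j\geq 2$, with $m_2 > 0$ since $\lambda_1 > \lambda_2$. Part (1) is then almost immediate: when $m=2$, the sole constraint $\tau(\lambda_1-\lambda_2) = m_2\pi$ can be solved with $m_2 = 2$ (respectively $m_2 = 1$) if $\lambda_2 \in \sigma_{\bx,\by}^+(M)$ (respectively $\sigma_{\bx,\by}^-(M)$), so strong cospectrality alone suffices. For part (2), dividing yields $m_j/m_2 = (\lambda_1 - \lambda_j)/(\lambda_1 - \lambda_2) = p_j/q_j$ in lowest terms; as $\gcd(p_j,q_j)=1$, necessarily $q_j \mid m_2$. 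Setting $q = \operatorname{lcm}(q_3,\ldots,q_m)$ and $m_2 = qs$ gives $m_j = p_j(qs/q_j)$. The parity of $m_j$ must agree with whether $\lambda_j \in \sigma_{\bx,\by}^+(M)$ or $\sigma_{\bx,\by}^-(M)$ (since $k_1$ is even), so I split on whether $\lambda_2 \in \sigma_{\bx,\by}^+(M)$ (case (a), $m_2$ even) or $\lambda_2 \in \sigma_{\bx,\by}^-(M)$ (case (b), $m_2$ odd) and perform a $2$-adic valuation analysis on $m_j = p_j(qs/q_j)$, using $\gcd(p_j,q_j)=1$ to translate the parity constraints into the stated conditions on $\nu_2(q_j)$ and $\nu_2(p_j)$.

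For sufficiency I reverse the argument: assuming the stated conditions, the parity analysis shows that the minimal admissible $m_2$ is $q$ itself (with $s=1$ in both cases), so I set $\tau = \pi q/(\lambda_1 - \lambda_2)$, choose $k_1 \in \{0,1\}$ and $\theta = \tau\lambda_1 - k_1\pi$ compatibly with the case, and verify via (\ref{Ut}) that $U(\tau)\bx = \gamma\by$ by checking $e^{i(\tau\lambda_j - \theta)} = (-1)^{k_j}$ agrees with the sign of $E_j\by/E_j\bx$ on each spectral slice. The minimum PST time identity $\tau = \rho/2$ then follows from Lemma \ref{Lem:pstnec}(2) combined with Lemma \ref{Lem:minperiod}: in part (1) directly from Lemma \ref{Lem:minperiod}(1), and in part (2) from $\rho = 2\pi q/(\lambda_1 - \lambda_2)$ in Lemma \ref{Lem:minperiod}(2). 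The most delicate step will be the $2$-adic bookkeeping in case (a): one must argue cleanly that the constraint $\nu_2(m_h) \geq 1$ for $\lambda_h \in \sigma_{\bx,\by}^+(M)\setminus\{\lambda_1,\lambda_2\}$ combined with $\gcd(p_h,q_h)=1$ forces $\nu_2(q_h) < \nu_2(q)$, while simultaneously $\nu_2(m_\ell) = 0$ for $\lambda_\ell \in \sigma_{\bx,\by}^-(M)$ pins $\nu_2(q_\ell) = \nu_2(q)$ and forces $s$ odd; the edge cases (in which $|\sigma_{\bx,\by}^+(M)|=2$ or $|\sigma_{\bx,\by}^-(M)|=1$) must be treated carefully to ensure that the recovered $m_2 = q$ realises all the required parities simultaneously.
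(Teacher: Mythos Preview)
Your proposal is correct and follows essentially the same route as the paper's proof. The only notable difference is organizational: the paper invokes Lemma~\ref{Lem:pstnec}(2) at the outset to fix $\tau=\rho/2=\pi q/(\lambda_1-\lambda_2)$, thereby eliminating your auxiliary parameter $s$ from the start, whereas you carry a general $\tau$ (with $m_2=qs$) and argue $s=1$ gives the minimum---but the resulting $2$-adic parity analysis on $qp_j/q_j$ is identical in both arguments.
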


\begin{proof}
The last statement follows from Lemma \ref{Lem:pstnec}(2). Thus, we may assume that the minimum PST time is $\tau=\frac{\rho}{2}$, where $\rho$ is given in Lemma \ref{Lem:minperiod}. Without loss of generality, let $\lambda_1\in\sigma_{\bx,\by}^+(M)$. For all $\lambda_h\in\sigma_{\bx,\by}^+(M)$ and $\lambda_{\ell}\in\sigma_{\bx,\by}^-(M)$, equation (\ref{Eq:1}) implies that
\begin{equation}
\label{Eq:pst1}
e^{i\tau(\lambda_1-\lambda_h)}=-e^{i\tau(\lambda_1-\lambda_{\ell})}.
\end{equation}

By Lemma \ref{Lem:pstnec}(1), it suffices to prove the converse of (1). Let $|\sigma_{\bx}(M)|=2$, and $\bx$ and $\by$ be strongly cospectral. Since $\lambda_1\in\sigma_{\bx,\by}^+(M)$ and $\sigma_{\bx,\by}^-(M)\neq \varnothing$, we have $\sigma_{\bx,\by}^-(M)=\{\lambda_2\}$. From equation (\ref{Eq:pst1}), we get $e^{i\tau(\lambda_1-\lambda_2)}=-1$, and so PST occurs between $\bx$ and $\by$ at $\tau=\frac{\pi}{\lambda_1-\lambda_2}$.

We now prove (2). Suppose PST occurs between $\bx$ and $\by$. By Lemma \ref{Lem:pstnec}(1), $\bx$ and $\by$ are strongly cospectral and periodic. By Theorem \ref{Thm:rc}, the latter statement is equivalent to $\frac{\lambda_1-\lambda_j}{\lambda_1-\lambda_2}=\frac{p_j}{q_j}$ for each $j\geq 3$, where $p_j$ and $q_j$ are integers such that $\operatorname{gcd}(p_j,q_j)=1$. In this case, $\tau=\frac{\pi q}{\lambda_1-\lambda_2}$, where $q=\operatorname{lcm}(q_3,\ldots,q_{m})$.
To prove (2a), let 
$\lambda_2\in \sigma_{\bx,\by}^+(M)$. Equation (\ref{Eq:pst1}) holds if and only if
\begin{equation}
\label{Eq:pst2}
    \tau\left(\lambda_1-\lambda_{\ell}\right)=\pi q\left(\frac{\lambda_1-\lambda_{\ell}}{\lambda_1-\lambda_2}\right)=\frac{\pi q p_{\ell}}{q_{\ell}}\equiv
\begin{cases}
    0\ \text{(mod $2\pi$)},& \text{if } \lambda_{\ell}\in \sigma_{\bx,\by}^+(M)\\
    \pi\ \text{(mod $2\pi$)},& \text{if } \lambda_{\ell}\in \sigma_{\bx,\by}^-(M).
\end{cases}
\end{equation}
Since $\lambda_2\in \sigma_{\bx,\by}^+(M)$, we have $\tau\left(\lambda_1-\lambda_{2}\right)=\pi q\left(\frac{\lambda_1-\lambda_{2}}{\lambda_1-\lambda_2}\right)=\pi q\equiv 0$ (mod $2\pi$), and so $q$ is even. Now, for equation (\ref{Eq:pst2}) to hold for each $\lambda_{\ell}\in \sigma_{\bx,\by}^-(M)$, we need each $\frac{qp_{\ell}}{q_{\ell}}$ to be odd in which case $\nu_2(q)=\nu_2(q_{\ell})$. Equivalently, the $\nu_2(q_{\ell})$'s must all be equal. On the other hand, for (\ref{Eq:pst2}) to hold for each $\lambda_h\in \sigma_{\bx,\by}^+(M)$, we need each $\frac{qp_{h}}{q_{h}}$ to be even. That is, we must have $\nu_2(q)=\nu_2(q_{\ell})>\nu_2(q_{h})$ for each $\lambda_h\in \sigma_{\bx,\by}^+(M)$. Thus (2a) holds, and (2b) can be proven similarly. The converse of (2) is straightforward.
\end{proof}

\begin{remark}
Suppose $G$ has at least three vertices. If $\bx=\be_u$ and $\by=\be_v$ are strongly cospectral in $G$, then $|\sigma_{\bx}(M)|\geq 3$ by \cite[Theorem 3.4]{MonterdeELA} and so Theorem \ref{Thm:PSTchar}(1) does not apply to vertex states.
\end{remark}

Combining Theorem \ref{Prop:sc}(3) and Theorem \ref{Thm:PSTchar}(1) yields the following result.

\begin{corollary}
\label{Cor:size2}
Let $\bx,\by \in \R^n\backslash\{\zero\}$. If $\sigma_{\bx}(M)=\{\lambda_1,\lambda_2\}$, then $\bx$ and $\by$ admit perfect state transfer if and only if $\bx=\bu_1+\bu_2$ and $\by=\bu_1-\bu_2$, where $\bu_1,\bu_2$ are real eigenvectors associated with $\lambda_1,\lambda_2$.
\end{corollary}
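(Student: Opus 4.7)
The plan is to derive the result by chaining together the two previous facts: Theorem \ref{Thm:PSTchar}(1) reduces perfect state transfer (when $|\sigma_{\bx}(M)|=2$) to strong cospectrality, and Theorem \ref{Prop:sc}(3) gives the explicit form of strongly cospectral pairs in terms of a partition of the eigenvalue support.

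For the forward direction, I would start by assuming PST occurs between $\bx$ and $\by$. By Theorem \ref{Thm:PSTchar}(1), $\bx$ and $\by$ are strongly cospectral. Applying Proposition \ref{Prop:S} to $\bx$ with $S = \sigma_{\bx}(M) = \{\lambda_1,\lambda_2\}$ gives a unique decomposition $\bx = \bu_1 + \bu_2$ with $\bu_j = E_j \bx$ a real eigenvector for $\lambda_j$. Then I would invoke Theorem \ref{Prop:sc}(3), which says $\by = \sum_{\lambda_j \in \sigma_1} \bu_j - \sum_{\lambda_j \in \sigma_2} \bu_j$ for some partition $\{\sigma_1,\sigma_2\}$ of $\sigma_{\bx}(M)$ into nonempty sets. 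Because $\sigma_{\bx}(M)$ has only two elements, the only such partitions are $\sigma_1 = \{\lambda_1\}, \sigma_2 = \{\lambda_2\}$ or the reverse, yielding $\by = \bu_1 - \bu_2$ or $\by = -\bu_1 + \bu_2$; after replacing $\bu_2$ with $-\bu_2$ (which is still an eigenvector for $\lambda_2$) if necessary, we obtain the stated form.

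The reverse direction is equally direct: given $\bx = \bu_1 + \bu_2$ and $\by = \bu_1 - \bu_2$ with $\bu_j$ a real eigenvector for $\lambda_j$, the hypotheses of Theorem \ref{Prop:sc}(3) are satisfied with $\sigma_1 = \{\lambda_1\}$ and $\sigma_2 = \{\lambda_2\}$, so $\bx$ and $\by$ are strongly cospectral. Theorem \ref{Thm:PSTchar}(1) then delivers perfect state transfer between them.

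There is essentially no obstacle: the argument is a bookkeeping exercise combining the eigenspace decomposition with the binary partition forced by $|\sigma_{\bx}(M)|=2$. The only tiny point requiring care is confirming that both $\sigma^{+}_{\bx,\by}(M)$ and $\sigma^{-}_{\bx,\by}(M)$ are nonempty, which follows from the standing assumption $\by \ne \pm \bx$ in Definition \ref{scdef}; otherwise the partition would be trivial and we would recover $\by = \bx$ or $\by = -\bx$.
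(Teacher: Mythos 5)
Your proof is correct and follows the paper's own route exactly: the paper derives this corollary precisely by combining Theorem \ref{Thm:PSTchar}(1) with Theorem \ref{Prop:sc}(3), just as you do, with the two-element support forcing the binary partition. Your added remarks on the sign normalization of $\bu_2$ and on the nonemptiness of $\sigma^{+}_{\bx,\by}(M)$ and $\sigma^{-}_{\bx,\by}(M)$ via $\by \ne \pm\bx$ are sound bookkeeping that the paper leaves implicit.
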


\begin{theorem}
\label{Thm:allgraphs}
All connected graphs on $n\geq 2$ vertices admit perfect state transfer between real pure states.
\end{theorem}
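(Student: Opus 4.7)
The plan is to apply Corollary~\ref{Cor:size2} directly. The corollary tells us that whenever $\sigma_{\bx}(M) = \{\lambda_1,\lambda_2\}$ consists of exactly two eigenvalues, the decomposition $\bx = \bu_1 + \bu_2$ already forces a partner $\by = \bu_1 - \bu_2$ with which $\bx$ admits PST. So it suffices to manufacture such an $\bx$ from the spectral data of $M$.

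First I would establish that $M$ has at least two distinct eigenvalues. Because $G$ is connected on $n\geq 2$ vertices, it has at least one edge, so $M$ has at least one nonzero off-diagonal entry; in particular $M$ is not a scalar multiple of $I$ and therefore admits two distinct eigenvalues $\lambda_1\neq \lambda_2$. Next, pick nonzero real eigenvectors $\bu_1$ and $\bu_2$ associated with $\lambda_1$ and $\lambda_2$, respectively (they exist because $M$ is real symmetric), and set
\begin{equation*}
\bx := \bu_1 + \bu_2, \qquad \by := \bu_1 - \bu_2.
\end{equation*}

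Then $E_1\bx = \bu_1 \ne \zero$ and $E_2\bx = \bu_2 \ne \zero$, while $E_j\bx = \zero$ for every other spectral idempotent, so $\sigma_{\bx}(M) = \{\lambda_1,\lambda_2\}$. Moreover $\bu_1$ and $\bu_2$ are linearly independent (distinct eigenspaces of a real symmetric matrix), so $\bx\ne \zero$, $\by\ne\zero$, and $\by\ne \pm\bx$; orthogonality of $\bu_1$ and $\bu_2$ also gives $\|\bx\|^2 = \|\bu_1\|^2 + \|\bu_2\|^2 = \|\by\|^2$, as required. Applying Corollary~\ref{Cor:size2} concludes that $\bx$ and $\by$ admit PST relative to $M$.

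There is no genuine obstacle here: the entire argument is a short verification that the two-eigenvalue regime of Corollary~\ref{Cor:size2} is always accessible in a connected graph. The only mild subtleties are confirming that (i) connectedness together with $n\ge 2$ forces at least two distinct eigenvalues of any admissible Hamiltonian $M$, and (ii) the candidate pair $(\bx,\by)$ fulfills the nondegeneracy hypotheses of Definition~\ref{scdef} and Corollary~\ref{Cor:size2}, both of which fall out from the orthogonality of eigenvectors for distinct eigenvalues.
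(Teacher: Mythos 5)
Your proof is correct and takes essentially the same route as the paper: the paper's proof likewise takes two distinct eigenvalues of $M$ and applies Corollary~\ref{Cor:size2} to $\bx=\bu_1+\bu_2$ and $\by=\bu_1-\bu_2$. The extra details you supply (that connectedness with $n\geq 2$ rules out $M$ being a scalar matrix, and that orthogonality of $\bu_1,\bu_2$ gives $\bx,\by\neq\zero$, $\by\neq\pm\bx$, and $\|\bx\|=\|\by\|$) are all accurate verifications that the paper leaves implicit.
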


\begin{proof}
By assumption, $M\in\{A,L\}$ has at least two distinct eigenvalues, say $\lambda_1,\lambda_2$. By Corollary \ref{Cor:size2}, there is PST  between $\bx=\bu_1+\bu_2$ and $\by=\bu_1-\bu_2$, where $\bu_1,\bu_2$ are real eigenvectors associated with $\lambda_1,\lambda_2$.
\end{proof}

\begin{remark}
The real pure states admitting PST in Theorem \ref{Thm:allgraphs} have eigenvalue supports of size two. For those with eigenvalue supports of size at least three, periodicity is required by Theorem \ref{Thm:PSTchar}(2) to achieve PST. However, there are graphs for which periodicity does not hold for such real pure states (e.g., conference graphs on $n$ vertices, where $n$ is not a perfect square).
Thus, the conclusion of Theorem \ref{Thm:allgraphs} can fail for some connected graphs whenever the real pure states in question have eigenvalue supports of size at least three.
\end{remark}

Combining Theorem \ref{Thm:PSTchar}(2) and Theorem \ref{Thm:rc} yields an extension of \cite[Theorem 2.4.4]{Coutinho2014}.

\begin{corollary}
\label{Cor:PSTcharcor}
Let $G$ be a graph on $n$ vertices and $\bx,\by \in\R^n\backslash\{\zero\}$ with $\by\neq \pm\bx$. If $|\sigma_{\bx}(M)|\geq 3$ and $\sigma_{\bx}(M)$ is closed under algebraic conjugates, then $\bx$ and $\by$ admit perfect state transfer if and only if all conditions below hold.
\begin{enumerate}
\item The vectors $\bx$ and $\by$ are strongly cospectral.
\item Each eigenvalue $\lambda_j\in\sigma_{\bx}(M)$ is of the form $\lambda_j=\frac{1}{2}(a+b_j\sqrt{\Delta})$, where $a$, $b_j$, and $\Delta$ are integers and either $\Delta=1$ or $\Delta>1$ is a square-free natural number.
\item Let $\lambda_1\in\sigma_{\bx,\by}^+(M)$. For all $\lambda_h\in\sigma_{\bx,\by}^+(M)$ and $\lambda_{\ell},\lambda_k\in\sigma_{\bx,\by}^-(M)$, we have
\begin{center}
$\nu_2\left(\frac{\lambda_1-\lambda_{h}}{\sqrt{\Delta}}\right)>\nu_2\left(\frac{\lambda_1-\lambda_{\ell}}{\sqrt{\Delta}}\right)=\nu_2\left(\frac{\lambda_1-\lambda_{k}}{\sqrt{\Delta}}\right).$
\end{center}
\end{enumerate}
In the case that there is PST between $\bx$ and $\by$, the minimum PST time is $\frac{\pi}{g\sqrt{\Delta}}$, where $g$ is given in Corollary \ref{Lem:minperiod1}.
\end{corollary}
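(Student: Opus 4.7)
The plan is to derive the corollary by combining Theorem~\ref{Thm:PSTchar}(2), Theorem~\ref{Thm:rc}, and Corollary~\ref{Lem:minperiod1}. Under the standing hypothesis that $|\sigma_{\bx}(M)|\geq 3$ and $\sigma_{\bx}(M)$ is closed under algebraic conjugates, Theorem~\ref{Thm:rc} makes the periodicity requirement of Theorem~\ref{Thm:PSTchar}(2) equivalent to condition~(2) of the corollary. Strong cospectrality is common to both statements and yields condition~(1). The bulk of the work is therefore to translate the $2$-adic valuation conditions on the rationals $q_j$ (and $p_h$ in subcase~(2b)) appearing in Theorem~\ref{Thm:PSTchar}(2) into the unified inequality~(3) stated in terms of $\nu_2\bigl(\tfrac{\lambda_1-\lambda_j}{\sqrt{\Delta}}\bigr)$.

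I would begin by recording the identities that drive this translation. With $\lambda_j = \tfrac{1}{2}(a+b_j\sqrt{\Delta})$ as in~(2),
\[
\frac{\lambda_1-\lambda_j}{\sqrt{\Delta}} \;=\; \frac{b_1-b_j}{2}, \qquad \frac{\lambda_1-\lambda_j}{\lambda_1-\lambda_2} \;=\; \frac{b_1-b_j}{b_1-b_2} \;=\; \frac{p_j}{q_j}
\]
in lowest terms, so $q_j = (b_1-b_2)/\gcd(b_1-b_j,\,b_1-b_2)$ and hence
\[
\nu_2(q_j) \;=\; \nu_2(b_1-b_2) \,-\, \min\bigl\{\nu_2(b_1-b_j),\,\nu_2(b_1-b_2)\bigr\}.
\]
Since $\nu_2\bigl(\tfrac{\lambda_1-\lambda_j}{\sqrt{\Delta}}\bigr) = \nu_2(b_1-b_j)-1$, comparisons of these valuations reduce to comparisons among the $\nu_2(b_1-b_j)$.

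With these identities in hand, I would verify that condition~(3) is equivalent to whichever of (2a) or (2b) of Theorem~\ref{Thm:PSTchar} applies. In case~(2a), where $\lambda_1,\lambda_2\in\sigma_{\bx,\by}^+(M)$, the positivity of $\nu_2(q_\ell)$ forces $\nu_2(b_1-b_\ell)<\nu_2(b_1-b_2)$, the equality $\nu_2(q_\ell)=\nu_2(q_k)$ becomes $\nu_2(b_1-b_\ell)=\nu_2(b_1-b_k)$, and the strict inequality $\nu_2(q_h)<\nu_2(q_\ell)$ unravels via the $\min$ formula to $\nu_2(b_1-b_h)>\nu_2(b_1-b_\ell)$ (automatically satisfied when $\lambda_h\in\{\lambda_1,\lambda_2\}$). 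In case~(2b), the hypothesis that every $q_j$ is odd becomes $\nu_2(b_1-b_j)\geq \nu_2(b_1-b_2)$ for all $j$, and the parity of $p_h$ translates into $\nu_2(b_1-b_h)>\nu_2(b_1-b_2)$ for $\lambda_h\in\sigma_{\bx,\by}^+(M)$ and $\nu_2(b_1-b_\ell)=\nu_2(b_1-b_2)$ for $\lambda_\ell\in\sigma_{\bx,\by}^-(M)$. In both cases the resulting inequalities on $\nu_2(b_1-b_j)$, shifted by $-1$ and with $\nu_2(0)=+\infty$ handling $\lambda_h=\lambda_1$, coincide with the inequality in~(3).

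The minimum PST time then follows from Lemma~\ref{Lem:pstnec}(2), giving $\tau=\rho/2$, combined with Corollary~\ref{Lem:minperiod1}, which gives $\rho = 2\pi/(g\sqrt{\Delta})$; hence $\tau = \pi/(g\sqrt{\Delta})$. The main obstacle I expect is the clean, uniform bookkeeping across cases~(2a) and~(2b): one must argue that for every $\lambda_h\in\sigma_{\bx,\by}^+(M)\setminus\{\lambda_1\}$ and $\lambda_\ell\in\sigma_{\bx,\by}^-(M)$, the single inequality $\nu_2(b_1-b_h)>\nu_2(b_1-b_\ell)$ simultaneously captures the $\nu_2(q_h)<\nu_2(q_\ell)$ requirement of (2a) and the parity condition on $p_h$ in (2b), even though these subcases put $\lambda_2$ on opposite sides of the $\pm$ partition.
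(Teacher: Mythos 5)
Your proposal is correct and takes essentially the same route as the paper, whose entire proof is the one-line observation that the corollary follows by combining Theorem~\ref{Thm:PSTchar}(2) with Theorem~\ref{Thm:rc} (plus Lemma~\ref{Lem:pstnec}(2) and Corollary~\ref{Lem:minperiod1} for the time $\pi/(g\sqrt{\Delta})$); your valuation bookkeeping via $\nu_2(q_j)=\nu_2(b_1-b_2)-\min\{\nu_2(b_1-b_j),\nu_2(b_1-b_2)\}$ and the shift $\nu_2\bigl(\tfrac{\lambda_1-\lambda_j}{\sqrt{\Delta}}\bigr)=\nu_2(b_1-b_j)-1$ is exactly the computation the paper leaves implicit, and your case analysis of (2a) versus (2b), including $\nu_2(0)=+\infty$ for $\lambda_h=\lambda_1$, checks out. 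One point to shore up: in case (2a) with $|\sigma_{\bx,\by}^+(M)|=2$ the literal statement of Theorem~\ref{Thm:PSTchar}(2a) omits the $\nu_2(q_h)$ comparisons, so the positivity of $\nu_2(q_\ell)$ that you invoke there does not follow from the stated condition alone; it must be extracted from the PST requirement $e^{i\tau(\lambda_1-\lambda_2)}=1$ at $\tau=\pi q/(\lambda_1-\lambda_2)$, which forces $q$ even and hence $\nu_2(q_\ell)=\nu_2(q)\geq 1$ --- conveniently, condition (3) of the corollary with $\lambda_h=\lambda_2$ encodes precisely this, so your translation lands on the correct statement, but without this one-line justification the claimed equivalence with the theorem's literal (2a) has an edge-case mismatch.
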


As it turns out, a periodic real pure state always admits PST with another real pure state.

\begin{theorem}
\label{Thm:PSTany}
If $\bx\in\R^n\backslash\{\zero\}$ is periodic in $G$ at time $\tau$, then $\bx$ is involved in perfect state transfer in $G$ at $\frac{\tau}{2}$.
\end{theorem}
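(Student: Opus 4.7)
The plan is to use the spectral decomposition of $\bx$ together with the periodicity hypothesis to write $U(\tau/2)\bx$ as a fixed unit complex scalar times a real vector $\by$, which will serve as the PST partner at time $\tau/2$.

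First, I would invoke Proposition \ref{Prop:S} to write $\bx=\sum_{\lambda_j\in\sigma_{\bx}(M)}\bu_j$, where each $\bu_j=E_j\bx$ is a nonzero real eigenvector of $M$ associated with $\lambda_j$. Combining this with \eqref{Ut} gives $U(t)\bx=\sum_{j}e^{it\lambda_j}\bu_j$ for every $t\in\R$. I would then feed this expression into the hypothesis $U(\tau)\bx=\gamma\bx$, obtaining $\sum_{j}(e^{i\tau\lambda_j}-\gamma)\bu_j=\zero$. Since the $\bu_j$ correspond to distinct eigenvalues of the real symmetric matrix $M$, they are linearly independent, so $e^{i\tau\lambda_j}=\gamma$ for every $\lambda_j\in\sigma_{\bx}(M)$.

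Next, I would fix a unit complex number $\eta$ with $\eta^2=\gamma$. For each $j$, $(e^{i\tau\lambda_j/2})^2=e^{i\tau\lambda_j}=\eta^2$, so $\epsilon_j:=\eta^{-1}e^{i\tau\lambda_j/2}\in\{+1,-1\}$. A direct computation then gives
\[
U(\tau/2)\bx=\sum_{j}e^{i\tau\lambda_j/2}\bu_j=\eta\sum_{j}\epsilon_j\bu_j=\eta\by,
\]
where $\by:=\sum_{j}\epsilon_j\bu_j$ is a real vector (a signed sum of real eigenvectors) and $\|\by\|=\|\bx\|$ by unitarity of $U(\tau/2)$. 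This identity is precisely the PST condition \eqref{Eq:1}, so $\bx$ is involved in perfect state transfer at time $\tau/2$.

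The argument is short and largely formal; there is no single hard step. The one delicate point I would flag in the writeup is the consistent choice of sign that keeps $\by$ real: because the numbers $e^{i\tau\lambda_j/2}$ are all square roots of the \emph{single} complex number $\gamma$, factoring out one fixed $\eta$ collapses them to $\pm 1$, and sign flips on the real eigenvectors $\bu_j$ preserve reality. It is also worth noting that $\by$ may coincide with $\pm\bx$ precisely when $\tau/2$ is itself a period of $\bx$ (as happens when $\tau$ is not the minimum period); if instead $\tau$ is taken to be the minimum period, then neither sign class can be empty and $\by\ne\pm\bx$, yielding genuine PST in the strongly cospectral sense of Lemma~\ref{Lem:pstnec}.
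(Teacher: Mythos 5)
Your proof is correct, and it takes a genuinely different route from the paper's. The paper argues by cases: for $|\sigma_{\bx}(M)|=2$ it reuses the argument of Theorem \ref{Thm:allgraphs}, and for $|\sigma_{\bx}(M)|\geq 3$ it invokes the ratio condition (Theorem \ref{Thm:rc}) to write $\frac{\lambda_1-\lambda_j}{\lambda_1-\lambda_2}=\frac{p_j}{q_j}$ in lowest terms and then constructs $\by$ through a three-way case analysis on the parities of the $p_j$'s and $q_j$'s, checking the hypotheses of Theorem \ref{Thm:PSTchar}(2a) or (2b) in each case. You bypass that machinery entirely: linear independence of the $\bu_j=E_j\bx$ turns $U(\tau)\bx=\gamma\bx$ into $e^{i\tau\lambda_j}=\gamma$ across the whole eigenvalue support, and after fixing a square root $\eta$ of $\gamma$, each $e^{i\tau\lambda_j/2}$ collapses to $\pm\eta$, giving $U(\tau/2)\bx=\eta\by$ with $\by=\sum_j\epsilon_j\bu_j$ real, which is exactly the PST condition (\ref{Eq:1}). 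Your argument is shorter, needs no number theory, and treats all support sizes uniformly; your signs $\epsilon_j$ are precisely the partition $\sigma_{\bx,\by}^{\pm}(M)$ that the paper's parity analysis computes explicitly, and that explicit parity bookkeeping (plus the uniqueness remark via Lemma \ref{Lem:pstnec}(3), which applies equally to your $\by$) is the main thing the longer route buys. You also address a point the paper glosses over: if $\tau$ is an even multiple of the minimum period $\rho$, then $\tau/2$ is itself a period and your $\by=\pm\bx$, so genuine PST (with $\by\neq\pm\bx$, hence strong cospectrality) is obtained exactly when $\tau$ is an odd multiple of $\rho$ --- in particular when $\tau=\rho$, as your closing observation correctly notes, since minimality forces both sign classes to be nonempty.
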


\begin{proof}
If $|\sigma_{\bx}(M)|=2$, then the same proof used in Theorem \ref{Thm:allgraphs} works. Now, let $|\sigma_{\bx}(M)|\geq 3$ and suppose $\bx=\sum_{j\in\sigma_{\bx}(M)}\bu_j$, where $\bu_j$ is a real eigenvector associated with the eigenvalue $\lambda_j\in\sigma_{\bx}(M)$. Since $\bx$ is periodic, Theorem \ref{Thm:rc} implies that we may write $\frac{\lambda_1-\lambda_j}{\lambda_1-\lambda_2}=\frac{p_j}{q_j}$ for each $j\geq 3$, where $p_j,q_j$ are integers such that $\operatorname{gcd}(p_j,q_j)=1$. If each $p_j$ and $q_j$ are odd, then define $\by:=\bu_1-\sum_{j\neq 1}\bu_j$. Since $|\sigma_{\bx,\by}^+(M)|=1$, invoking Theorem \ref{Thm:PSTchar}(2b) yields PST between $\bx$ and $\by$. If some $p_j$'s are even and each $q_j$ is odd, then let $\sigma_1=\{\lambda_j:p_j\ \text{is even}\}$, $\sigma_2=\sigma_{\bx}(M)\backslash\sigma_1$, and define $\by:=\sum_{j\in\sigma_1}\bu_j-\sum_{j\in\sigma_2}\bu_j$. Note that $\lambda_1\in\sigma_1$, $\lambda_2\in\sigma_2$ and $\sigma_{\bx,\by}^+(M)=\sigma_1$ has at least two elements. Since $\lambda_1\in \sigma_{\bx,\by}^+(M)$ and $\lambda_2\in \sigma_{\bx,\by}^-(M)$, applying Theorem \ref{Thm:PSTchar}(2b) yields PST between $\bx$ and $\by$. Finally, if some $q_j$'s are even, then let $\sigma_2:=\{\lambda_{\ell}:\nu_2(q_{\ell}))=\eta\}$, where $\eta=\max_{j\geq 3} \nu_2(q_j)>0$. Define $\by=\sum_{j\in\sigma_1}\bu_j-\sum_{j\in\sigma_2}\bu_j$, where $\sigma_1=\sigma_{\bx}(M)\backslash \sigma_2$. Since $\sigma_{\bx,\by}^+(M)=\sigma_1$ has at least two elements and $\lambda_1,\lambda_2\in \sigma_{\bx,\by}^+(M)$, Theorem \ref{Thm:PSTchar}(2a) again yields PST between $\bx$ and $\by$. In all cases, $\by$ is unique by Lemma \ref{Lem:pstnec}(3). Moreover, $\by$ is real because the $\bu_j$s are all real.
\end{proof}

We close this section with the following result.

\begin{theorem}
\label{Thm:prescribedPST}
Let $\bx,\by\in\R^n\backslash\{\zero\}$ with $\by\neq\pm \bx$. For all $\tau>0$ and for all integers $m_1,m_2\geq 1$ such that $m_1+m_2\leq n$, there exists a real symmetric matrix $M$ such that perfect state transfer occurs between $\bx$ and $\by$ relative to $M$ at time $\tau$, $|\sigma_{\bx,\by}^+(M)|=m_1$ and $|\sigma_{\bx,\by}^-(M)|=m_2$.
\end{theorem}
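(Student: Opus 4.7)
The plan is to build $M$ directly from its spectral decomposition, engineered so that Theorem \ref{Prop:sc}(3) yields strong cospectrality of $\bx$ and $\by$ with the prescribed sign split, while the eigenvalue identities appearing in (\ref{Eq:pst1}) hold at exactly the given time $\tau$. To set up, let $\mathbf{p} = \bx + \by$ and $\mathbf{q} = \bx - \by$. Both are nonzero because $\by \neq \pm \bx$, and $\mathbf{p}^\top \mathbf{q} = \|\bx\|^2 - \|\by\|^2 = 0$, so $\mathbf{p} \perp \mathbf{q}$. Using $m_1 + m_2 \leq n$ together with $\mathbf{p} \perp \mathbf{q}$, pick an $m_1$-dimensional subspace $V_1 \subseteq \mathbf{q}^\perp$ containing $\mathbf{p}$, and then an $m_2$-dimensional subspace $V_2 \subseteq V_1^\perp$ containing $\mathbf{q}$. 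Inside $V_1$ fix an orthonormal basis $\be_1', \ldots, \be_{m_1}'$ along which $\mathbf{p}$ has every coordinate nonzero, and similarly an orthonormal basis $\mathbf{f}_1', \ldots, \mathbf{f}_{m_2}'$ of $V_2$ along which $\mathbf{q}$ has every coordinate nonzero. Setting $\bu_i := \tfrac{1}{2}(\mathbf{p}^\top \be_i')\be_i'$ and $\bv_j := \tfrac{1}{2}(\mathbf{q}^\top \mathbf{f}_j')\mathbf{f}_j'$ produces mutually orthogonal, nonzero vectors satisfying
$$\bx = \sum_{i=1}^{m_1} \bu_i + \sum_{j=1}^{m_2} \bv_j, \qquad \by = \sum_{i=1}^{m_1} \bu_i - \sum_{j=1}^{m_2} \bv_j.$$

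Next I would specify the eigenvalues $\lambda_i := 2\pi i/\tau$ for $1 \leq i \leq m_1$ and $\mu_j := (2j-1)\pi/\tau$ for $1 \leq j \leq m_2$. These are pairwise distinct, since the $\lambda_i$ are even and the $\mu_j$ odd integer multiples of $\pi/\tau$, and they satisfy $e^{i\tau\lambda_i} = 1$, $e^{i\tau\mu_j} = -1$. Define $M$ as the real symmetric matrix with eigenvector $\be_i'$ for eigenvalue $\lambda_i$, eigenvector $\mathbf{f}_j'$ for eigenvalue $\mu_j$, and $M = 0$ on $(V_1 \oplus V_2)^\perp$. Since $\bx, \by \in V_1 \oplus V_2$ and the $\lambda_i, \mu_j$ are nonzero, the spectral projections satisfy $E_{\lambda_i}\bx = \bu_i = E_{\lambda_i}\by$ and $E_{\mu_j}\bx = \bv_j = -E_{\mu_j}\by$, all nonzero. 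Hence $\sigma_{\bx}(M) = \{\lambda_1, \ldots, \lambda_{m_1}, \mu_1, \ldots, \mu_{m_2}\}$, $|\sigma_{\bx,\by}^+(M)| = m_1$, and $|\sigma_{\bx,\by}^-(M)| = m_2$. Moreover,
$$U(\tau)\bx = \sum_{i=1}^{m_1} e^{i\tau\lambda_i}\bu_i + \sum_{j=1}^{m_2} e^{i\tau\mu_j}\bv_j = \sum_i \bu_i - \sum_j \bv_j = \by,$$
so PST occurs between $\bx$ and $\by$ at time $\tau$ with phase factor $\gamma = 1$.

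The main obstacle is the orthogonal decomposition step: guaranteeing that the orthonormal bases of $V_1$ and $V_2$ can be chosen so that \emph{every} coordinate of $\mathbf{p}$ (respectively $\mathbf{q}$) is nonzero. Without this, certain $E_{\lambda_i}\bx$ or $E_{\mu_j}\bx$ would vanish and the eigenvalue supports would undershoot the prescribed cardinalities $m_1, m_2$. This is a generic condition on the orthogonal change-of-basis matrix (the bad locus in $O(m_i)$ is a finite union of proper subvarieties), and it can be made explicit by conjugating a starting basis containing $\mathbf{p}/\|\mathbf{p}\|$ by any orthogonal matrix with a Hadamard-type first column, i.e.\ one whose entries are all nonzero. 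Everything else is a direct verification via Theorem \ref{Prop:sc}(3) and the structural characterization of PST in Theorem \ref{Thm:PSTchar}.
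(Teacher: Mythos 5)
Your proof is correct, and while its first half coincides in substance with the paper's argument, your endgame is genuinely more elementary. Both proofs start from the orthogonality of $\mathbf{p}=\bx+\by$ and $\mathbf{q}=\bx-\by$ (using the standing hypothesis $\|\bx\|=\|\by\|$) and split $\mathbf{p}$ into $m_1$ and $\mathbf{q}$ into $m_2$ mutually orthogonal nonzero pieces that become the projections $E_j\bx=\pm E_j\by$, so that Theorem \ref{Prop:sc}(3) gives strong cospectrality with the prescribed split. The paper realizes this by choosing an orthogonal matrix $Q$ carrying $\frac{1}{\sqrt{m_1}}\bigl[\ones_{m_1};\zero;\zero\bigr]$ and $\frac{1}{\sqrt{m_2}}\bigl[\zero;\ones_{m_2};\zero\bigr]$ to $\mathbf{p}/\|\mathbf{p}\|$ and $\mathbf{q}/\|\mathbf{q}\|$; its columns then carry $\mathbf{p}$ and $\mathbf{q}$ with the \emph{equal} nonzero coefficients $\frac{\|\mathbf{p}\|}{2\sqrt{m_1}}$ and $\frac{\|\mathbf{q}\|}{2\sqrt{m_2}}$, so the nonvanishing-coordinate issue you single out as the main obstacle never arises there --- and your own fix is this same trick in disguise: one does not need genuine Hadamard matrices (which do not exist in every order), only an orthogonal change of basis of $V_1$ whose row pairing against $\mathbf{p}/\|\mathbf{p}\|$ is entrywise nonzero, e.g.\ with first row $\frac{1}{\sqrt{m_1}}\ones^\top$, which exists in every dimension; also, ``conjugating'' is the wrong word for this basis change, though the intent is clear. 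Where you truly depart from the paper is the eigenvalue assignment: the paper takes arbitrary distinct $\theta_j$ with gaps $\theta_1-\theta_j=\pi b_j/(g\tau)$ subject to parity constraints on the $b_j$, and must then verify the $\nu_2$-conditions of Theorem \ref{Thm:PSTchar} through a case analysis ($m_1+m_2=2$ versus $m_1+m_2\geq 3$, the latter subdivided according to $m_1=1$ or $m_1\geq 2$), whereas you choose $\lambda_i=2\pi i/\tau$ on the plus class and $\mu_j=(2j-1)\pi/\tau$ on the minus class so that $e^{i\tau\lambda_i}=1$ and $e^{i\tau\mu_j}=-1$, and verify $U(\tau)\bx=\by$ directly from the spectral decomposition with phase factor $\gamma=1$; this bypasses Theorem \ref{Thm:PSTchar} and Lemma \ref{Lem:minperiod} entirely and eliminates the case split. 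The only thing the paper's bookkeeping buys in addition is that $\tau$ is exhibited as the \emph{minimum} PST time; the theorem does not demand this, and in fact your construction achieves it too, since your support contains the multiples $1$ and $2$ of $\pi/\tau$, forcing the gcd of the eigenvalue gaps to be $\pi/\tau$ and hence the minimum period to be $2\tau$.
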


\begin{proof}
Let $\bx,\by\in\R^n$ with $\by\neq\pm \bx$. Fix $\tau>0$ and fix integers $m_1,m_2\geq 1$ such that $m_1+m_2\leq n$. Since $\|\bx\|=\|\by\|$, it follows that $\bx+\by$ and $\bx-\by$ are orthogonal vectors. Since
\begin{center}
$\{\bu_1,\bu_2\}:=\left\{\frac{1}{\sqrt{m_1}}\left[\begin{array}{ccc} \ones_{m_1} \\ \zero_{m_2} \\ \zero_{n-(m_1+m_2)} \end{array} \right],\frac{1}{\sqrt{m_2}}\left[\begin{array}{ccc} \zero_{m_1} \\ \ones_{m_2} \\ \zero_{n-(m_1+m_2)} \end{array} \right]\right\}\quad $ and $\quad \{\bv_1,\bv_2\}:=\left\{\frac{\bx+\by}{\|\bx+\by\|},\frac{\bx-\by}{\|\bx-\by\|}\right\}$    
\end{center}
are orthonormal sets, we may extend them to orthonormal bases $\mathcal{U}=\{\bu_1,\ldots,\bu_n\}$ and $\mathcal{V}=\{\bv_1,\ldots,\bv_n\}$ for $\R^n$, respectively. Then there exists an orthogonal matrix $Q$ such that $Q\bu_j=\bv_j$ for each $j$. If we write $Q=[\bw_1,\ldots,\bw_n]$, where $\bw_1,\ldots,\bw_n$ are columns of $Q$, then $Q\bu_1=\bv_1$ and $Q\bu_2=\bv_2$ are equivalent to
\begin{equation*}
\bx+\by=\displaystyle\frac{\|\bx+\by\|}{\sqrt{m_1}}\left(\sum_{j=1}^{m_1}\bw_j\right)\quad \text{and} \quad \bx-\by=\displaystyle\frac{\|\bx-\by\|}{\sqrt{m_2}}\left(\sum_{j=m_1+1}^{m_2}\bw_j\right).
\end{equation*}
Thus, we obtain
\begin{equation}
\label{Eq:sc}
\bx=\displaystyle\left(\sum_{j=1}^{m_1}\displaystyle\frac{\|\bx+\by\|}{2\sqrt{m_1}}\bw_j\right)+\left(\sum_{j=m_1+1}^{m_2}\displaystyle\frac{\|\bx-\by\|}{2\sqrt{m_2}}\bw_j\right)\quad \text{and} \quad \by=\displaystyle\left(\sum_{j=1}^{m_1}\displaystyle\frac{\|\bx+\by\|}{2\sqrt{m_1}}\bw_j\right)-\left(\sum_{j=m_1+1}^{m_2}\displaystyle\frac{\|\bx-\by\|}{2\sqrt{m_2}}\bw_j\right)
\end{equation}
Now, let 
$\{\theta_1,\ldots,\theta_{n}\}$ be a set of $n$ distinct real numbers and consider the real symmetric matrix
\begin{equation*}
M=\sum_{j=1}^{n}\theta_j\bw_j\bw_j^\top
\end{equation*}
Since equation (\ref{Eq:sc}) holds, Theorem \ref{Prop:sc}(3) implies that $\bx$ and $\by$ are strongly cospectral relative to the matrix $M$ with $\sigma_{\bx,\by}^+(M)=\{\theta_1,\ldots,\theta_{m_1}\}$ and $\sigma_{\bx,\by}^-(M)=\{\theta_{m_1+1},\ldots,\theta_{m_1+m_2}\}$. We proceed with cases.

\noindent \textbf{Case 1.} Suppose $m_1+m_2=2$. In this case, choose $\theta_1,\theta_2$ such that $\theta_1-\theta_2=\frac{\pi}{\tau}$. Since $|\sigma_{\bx}(M)|=2$, invoking Theorem \ref{Thm:PSTchar}(1) yields PST between $\bx$ and $\by$ relative to $M$ at time $\frac{\pi}{\theta_1-\theta_2}=\frac{\pi}{\pi/\tau}=\tau$.

\noindent \textbf{Case 2.} Suppose $m_1+m_2\geq 3$. In this case, choose $\theta_1,\ldots,\theta_{m_1+m_2}$ such that $\theta_1-\theta_j=\pi b_j/(g\tau)$ where $b_j$ is even for all $j\in\{2,\ldots,m_1\}$, $b_j$ is odd otherwise, and $g=\operatorname{gcd}(b_2,\ldots,b_{m_1+m_2})$. Observe that
\begin{equation}
\label{Eq:PST}
\frac{\theta_1-\theta_j}{\theta_1-\theta_2}=\frac{b_j/g_j}{b_2/g_j}:=\frac{p_j}{q_j},
\end{equation}
where each $g_j=\operatorname{gcd}(b_j,b_2)$.
Thus, $q=\operatorname{lcm}(q_3,\ldots,q_{m_1+m_2})=\operatorname{lcm}(\frac{b_2}{g_3},\ldots,\frac{b_2}{g_{m_1+m_2}})=\frac{b_2}{g}$. We have two cases. First, if $m_1=1$, then each $p_j$ and $q_j$ is odd. In this case, invoking Theorem \ref{Thm:PSTchar}(2b) yields PST between $\bx$ and $\by$ relative to $M$ at time $\frac{\pi q}{\theta_1-\theta_2}=\frac{\pi b_2/g}{\pi b_2/ (g\tau)}=\tau$. Now, if $m_1\geq 2$, then $b_{\ell}$ is odd for each $\theta_{\ell}\in\sigma_{\bx,\by}^-(M)$, and so $\nu_2(g_{\ell})=\nu_2(\operatorname{gcd}(b_{\ell},b_2))=0$. Consequently,  $\nu_2(q_{\ell})=\nu_2(\frac{b_2}{g_{\ell}})=\nu_2(b_2)$ for each $\theta_{\ell}\in\sigma_{\bx,\by}^-(M)$. Moreover, $b_{h}$ is even for each $\theta_{h}\in\sigma_{\bx,\by}^+(M)$, and so $\nu_2(g_{h})=\nu_2(\operatorname{gcd}(b_{h},b_2))\geq 1$. Consequently,  $\nu_2(q_{h})=\nu_2(\frac{b_2}{g_{h}})<\nu_2(b_2)$ for each $\theta_{h}\in\sigma_{\bx,\by}^+(M)$. Equivalently, $$\nu_2(q_{\ell})=\nu_2(q_{k})>\nu_2(q_h)$$ for all $\theta_{\ell},\theta_k\in \sigma_{\bx,\by}^-(M)$ and $\theta_h\in \sigma_{\bx,\by}^+(M)\backslash\{\theta_1,\theta_2\}$, where each $\nu_2(q_h)$ above is absent whenever $|\sigma_{\bx,\by}^+(M)|=2$. Finally, invoking Theorem \ref{Thm:PSTchar}(2a) yields PST between $\bx$ and $\by$ at time $\tau$.

Combining the above two cases yields the desired result.
\end{proof}

\section{Complete graphs and cycles}\label{Sec:cycle}

Here, we characterize real pure states that exhibit PST in complete graphs and cycles. Since these graphs are regular, it suffices to investigate the quantum walk relative to $A$.

\begin{theorem}
\label{Thm:pstkn}
The vectors $\bx,\by\in\R^n\backslash\{\zero\}$ with $\by\neq \pm\bx$ admit perfect state transfer in $K_n$ if and only if $\by=\bx-\frac{2(\ones ^\top\bx)}{n}\ones$. In this case, the minimum PST time is $\frac{\pi}{n}$.
\end{theorem}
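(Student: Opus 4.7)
The plan is to apply Corollary \ref{Cor:size2} directly, exploiting the fact that $K_n$ has adjacency matrix with only two distinct eigenvalues.

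First, I would write down the spectral decomposition of $A = J - I$: the eigenvalues are $\lambda_1 = n-1$ and $\lambda_2 = -1$, with eigenprojectors $E_1 = \tfrac{1}{n}J$ and $E_2 = I - \tfrac{1}{n}J$, respectively. For any nonzero $\bx \in \R^n$, this forces $|\sigma_{\bx}(A)| \le 2$, and
\[
E_1\bx = \frac{\ones^\top\bx}{n}\ones, \qquad E_2\bx = \bx - \frac{\ones^\top\bx}{n}\ones.
\]

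Next, I would dispose of the fixed-state cases. If $\bx \in \operatorname{span}\{\ones\}$, say $\bx = c\ones$, then $\bx$ is an eigenvector of $A$ (hence a fixed state admitting no PST), and moreover the right-hand side $\bx - \tfrac{2(\ones^\top\bx)}{n}\ones$ equals $c\ones - 2c\ones = -\bx$, violating $\by\ne\pm\bx$. If instead $\ones^\top\bx = 0$, then again $\bx$ is a fixed state (for eigenvalue $-1$), and the right-hand side collapses to $\bx$, again violating $\by\ne\pm\bx$. So on both sides of the desired equivalence we may assume $E_1\bx\ne\zero$ and $E_2\bx\ne\zero$, i.e.\ $|\sigma_{\bx}(A)|=2$.

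In that generic case, apply Corollary \ref{Cor:size2} with $\bu_1 = E_1\bx$ and $\bu_2 = E_2\bx$: PST occurs between $\bx$ and $\by$ if and only if
\[
\by \;=\; \bu_1 - \bu_2 \;=\; \frac{2(\ones^\top\bx)}{n}\ones - \bx.
\]
Since PST between $\bx$ and $\by$ is equivalent to PST between $\bx$ and $-\by$ (replace the phase factor $\gamma$ by $-\gamma$), this is the same condition as $\by = \bx - \tfrac{2(\ones^\top\bx)}{n}\ones$, giving the stated characterization. Finally, the minimum PST time follows from Theorem \ref{Thm:PSTchar}(1) combined with Lemma \ref{Lem:minperiod}(1): $\tau = \rho/2 = \pi/(\lambda_1 - \lambda_2) = \pi/n$.

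There is no real obstacle here; the whole argument is a one-line application of the size-two eigenvalue-support corollary once the spectral structure of $K_n$ is written down. The only care required is the bookkeeping of the fixed-state exceptions and the sign/phase ambiguity between $\by$ and $-\by$.
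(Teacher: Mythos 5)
Your proof is correct, but it takes a genuinely different route from the paper's. The paper never invokes Corollary \ref{Cor:size2}: it computes the transition matrix of $K_n$ in closed form, $U_A(t)=e^{-it}\bigl((e^{itn}-1)\tfrac{1}{n}J+I\bigr)$, reads off that $\bx$ is fixed exactly when $\ones^\top\bx=0$ or $\bx\in\operatorname{span}\{\ones\}$ and is otherwise periodic with $\rho=\tfrac{2\pi}{n}$, and then evaluates $e^{i\pi/n}U_A(\pi/n)\bx=\bx-\tfrac{2(\ones^\top\bx)}{n}\ones$ to exhibit the PST partner explicitly, phase factor included. You instead stay entirely at the level of eigenprojections: from $A=J-I$ you get $E_1=\tfrac{1}{n}J$ and $E_2=I-\tfrac{1}{n}J$, so $|\sigma_{\bx}(A)|\le 2$ for every $\bx$, and the size-two corollary hands you the partner $E_1\bx-E_2\bx$ abstractly, with no computation of $U_A(t)$ at all. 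The two arguments are close cousins, since both rest on $A$ having exactly two distinct eigenvalues, but yours makes the structural reason for the theorem transparent and fits the paper's general machinery more uniformly, while the paper's direct computation buys the explicit phase $\gamma=e^{-i\pi/n}$ for free. Two small remarks. First, your sign bookkeeping at the end is legitimate but avoidable: Corollary \ref{Cor:size2} does not privilege an ordering of the two eigenvalues, so taking $\bu_1=E_2\bx$ and $\bu_2=E_1\bx$ yields $\by=\bx-\tfrac{2(\ones^\top\bx)}{n}\ones$ directly, with no need to absorb a sign into the phase; in any case the identification of $\by$ with $-\by$ is implicit in the paper's density-matrix convention, so your argument is consistent with the statement as written. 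Second, your citation for the minimum PST time (the final clause of Theorem \ref{Thm:PSTchar} together with Lemma \ref{Lem:minperiod}(1)) is in fact more apt than the paper's own appeal to Corollary \ref{Cor:PSTcharcor}, whose hypothesis $|\sigma_{\bx}(M)|\ge 3$ is not satisfied here since the eigenvalue support has size two.
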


\begin{proof}
For $K_n$, $U_A(t)=e^{-it}\left((e^{itn}-1)\frac{1}{n}J+I\right).$ Thus, $e^{it}U_A(t)\bx=\bx+\left(\frac{(e^{itn}-1)\ones^\top\bx}{n}\right)\ones.$ If $\bx=\frac{1}{n}\ones$ or $\bx\perp \ones$, then $\bx$ is fixed. Otherwise, $\bx$ is periodic with $\rho=\frac{2\pi}{n}$. Invoking Corollary \ref{Cor:PSTcharcor}, $\frac{\pi}{n}$ is the minimum PST time from $\bx$, in which case we obtain $e^{i\pi/n}U_A(\tau)\bx=\bx-\frac{2(\ones ^\top\bx)}{n}\ones$.
\end{proof}

\begin{example}
\label{Ex:Kn}
In $K_2$, if $V(K_2)=\{u,v\}$, then PST happens between $\be_u+s\be_v$ and $\be_v+s\be_u$ at $\frac{\pi}{2}$ for all $s\neq \pm 1$.
In $K_3$, if $V(K_3)=\{u,v,w\}$, then PST happens between the pairs $\{\be_u+2\be_w,\be_u+2\be_v\}$ and $\{\be_u+\frac{1}{2}\be_w,\be_v+\frac{1}{2}\be_w\}$ both at $\frac{\pi}{3}$. In $K_4$, if $V(K_4)=\{u,v,w,x\}$, then PST happens between $\be_u+\be_w$ and $\be_v+\be_x$ at $\frac{\pi}{4}$.
\end{example}

If
$\bx=\be_u+s\be_w$, then $\bx-\frac{2(\ones ^\top\bx)}{n}\ones 
=\be_u+s\be_w-\frac{2(1+s)}{n}\ones.$ Hence, if $s=-1$ then $\bx$ is fixed, while if $s\neq -1$, and $n \ge 5$ then $\bx-\frac{2(\ones ^\top\bx)}{n}\ones $ is not an $s$-pair state. Thus, $s$-pair PST does not occur in $K_n$ for all $n\geq 5$. Together with Example \ref{Ex:Kn}, we have the following result.

\begin{corollary}
Perfect state transfer between $s$-pair states occurs in $K_n$ if and only if $n\in\{2,3,4\}$.
\end{corollary}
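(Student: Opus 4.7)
The plan is to prove the two directions of the biconditional separately. For the reverse direction, Example \ref{Ex:Kn} already exhibits $s$-pair PST explicitly in $K_2$, $K_3$, and $K_4$, so nothing further is required. The substantive direction is the forward one: showing that no $s$-pair state can admit PST with another $s$-pair state in $K_n$ whenever $n\geq 5$.

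For the forward direction, I would start by supposing $\bx=\be_u+s\be_w$ (with $s\neq 0$) admits PST in $K_n$ with some $s'$-pair state $\by$. Theorem \ref{Thm:pstkn}, combined with the monogamy of PST provided by Lemma \ref{Lem:pstnec}(3), forces the partner to be
\begin{equation*}
\by=\bx-\tfrac{2(1+s)}{n}\ones=\be_u+s\be_w-\tfrac{2(1+s)}{n}\ones.
\end{equation*}
The degenerate case $s=-1$ is disposed of immediately: here $\ones^\top\bx=0$, so $J\bx=\zero$, making $\bx$ an eigenvector of $A=J-I$ and hence a fixed state with $|\sigma_{\bx}(A)|=1$. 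Such a state cannot participate in PST by Lemma \ref{Lem:pstnec}(1), which requires strong cospectrality and therefore an eigenvalue support of size at least two.

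Assuming $s\neq -1$, the constant $-\tfrac{2(1+s)}{n}$ is nonzero, and it appears as the coordinate of $\by$ at each of the $n-2$ vertices outside $\{u,w\}$. For $n\geq 5$ this alone contributes at least three nonzero coordinates to $\by$, while by definition an $s'$-pair state has exactly two nonzero entries. This numerical contradiction completes the forward direction and, together with Example \ref{Ex:Kn}, yields the corollary. I do not anticipate any real obstacle here: all the spectral work has been absorbed into Theorem \ref{Thm:pstkn}, and the only point worth flagging is that the coordinates at $u$ and $w$ need not be examined at all, since the $n-2\geq 3$ equal nonzero entries at the remaining vertices already overspend the two-nonzero budget of an $s'$-pair state.
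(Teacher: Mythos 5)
Your proof is correct and takes essentially the same route as the paper: both use Theorem \ref{Thm:pstkn} to pin down the unique PST partner $\be_u+s\be_w-\frac{2(1+s)}{n}\ones$, dismiss $s=-1$ because $\bx$ is then a fixed state, and for $s\neq -1$ and $n\geq 5$ observe that the $n-2\geq 3$ equal nonzero entries off $\{u,w\}$ prevent the partner from being a pair state, with Example \ref{Ex:Kn} supplying the cases $n\in\{2,3,4\}$. Your explicit appeal to monogamy via Lemma \ref{Lem:pstnec}(3) is a harmless redundancy, since Theorem \ref{Thm:pstkn} already characterizes the partner uniquely.
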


For cycles $C_n$, we adopt the convention that $V(C_n)=\mathbb{Z}_n$ where vertices $j, k$ are adjacent if and only if $|k-j| \equiv 1 \mod n.$ The eigenvalues and eigenvectors of $C_n$ are well-known, see \cite[Section 1.4.3]{brouwer2011spectra}. For our purposes, we provide normalized eigenvectors for $C_n$ in the following lemma.

\begin{lemma}
\label{Lem:cneval}
The adjacency eigenvalues of $C_n$ are $\lambda_j=2\cos(2j\pi/n)$, where $0\leq j\leq \lfloor\frac{n}{2}\rfloor$. The associated eigenvector for $\lambda_0=2$ is $\bv_0=\frac{1}{\sqrt{n}}\ones$, while the associated eigenvector for $\lambda_{\frac{n}{2}}=-2$ whenever $n$ is even is $\bv_{\frac{n}{2}}=\frac{1}{\sqrt{n}}[1,-1,1,-1,\ldots,1,-1]^\top$. For $1\leq j<\frac{n}{2}$, we have the following associated eigenvectors for $\lambda_j$:
\begin{center}
$\bv_j=\sqrt{\frac{2}{n}}\left[1\;\; \cos\left(\frac{2j\pi}{n}\right)\;\; \cos\left(\frac{4j\pi}{n}\right)\;\; \cdots\;\; \cos\left(\frac{2j(n-1)\pi}{n}\right)\right]^\top$
\end{center}
and
\begin{center}
$\bv_{n-j}=\sqrt{\frac{2}{n}}\left[0\;\; \sin\left(\frac{2j\pi}{n}\right)\;\; \sin\left(\frac{4j\pi}{n}\right)\;\; \cdots\;\; \sin\left(\frac{2j(n-1)\pi}{n}\right)\right]^\top$.
\end{center}
Moreover, $\{\bv_0,\ldots,\bv_{n-1}\}$ is an orthonormal basis for $\R^n$.
\end{lemma}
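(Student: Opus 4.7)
The plan is to verify directly that each of the proposed vectors is an eigenvector with the stated eigenvalue, and then to establish orthonormality. Since $C_n$ is a circulant graph (with vertex set $\mathbb{Z}_n$ and adjacency defined by $\pm 1$), the action of $A$ on any column vector $\bv = [v_0, v_1, \ldots, v_{n-1}]^\top$ is given by $(A\bv)_k = v_{k-1} + v_{k+1}$ (indices mod $n$). So the proposed eigenvector equations reduce to well-known sum-to-product trigonometric identities.

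First I would dispose of the easy cases. For $\bv_0 = \frac{1}{\sqrt{n}}\ones$, each entry of $A\bv_0$ equals $\frac{2}{\sqrt{n}}$, matching $\lambda_0 = 2$. For $n$ even and $\bv_{n/2} = \frac{1}{\sqrt{n}}[1,-1,1,-1,\ldots]^\top$, both neighbours of any vertex have the opposite sign, so $(A\bv_{n/2})_k = -2\,(\bv_{n/2})_k$, matching $\lambda_{n/2} = -2$. For $1 \leq j < n/2$, the key computation is the identity
\begin{equation*}
\cos\!\left(\tfrac{2j(k-1)\pi}{n}\right) + \cos\!\left(\tfrac{2j(k+1)\pi}{n}\right) = 2\cos\!\left(\tfrac{2j\pi}{n}\right)\cos\!\left(\tfrac{2jk\pi}{n}\right),
\end{equation*}
and an analogous sum-to-product identity with sine in place of cosine. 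These give $A\bv_j = \lambda_j \bv_j$ and $A\bv_{n-j} = \lambda_j \bv_{n-j}$ respectively.

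Next I would verify orthonormality of $\{\bv_0, \ldots, \bv_{n-1}\}$. The normalization constants $\frac{1}{\sqrt n}$ and $\sqrt{2/n}$ are chosen so that $\|\bv_j\|=1$; this follows from the standard identities $\sum_{k=0}^{n-1}\cos^2(2jk\pi/n) = n/2$ and $\sum_{k=0}^{n-1}\sin^2(2jk\pi/n) = n/2$ when $1 \leq j < n/2$. For orthogonality between distinct indices, I would use product-to-sum to reduce $\sum_k \cos(2jk\pi/n)\cos(2lk\pi/n)$, $\sum_k \sin(2jk\pi/n)\sin(2lk\pi/n)$, and $\sum_k \cos(2jk\pi/n)\sin(2lk\pi/n)$ to geometric sums of the form $\sum_{k=0}^{n-1} e^{2\pi i m k/n}$, which vanish whenever $m \not\equiv 0 \pmod n$. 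The pairs $(j,l)$ under consideration satisfy $0 \leq j, l \leq n/2$ with $j \neq l$, so the relevant $m = j \pm l$ lies strictly between $-n$ and $n$ and is nonzero, giving orthogonality.

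Since the counted vectors number $1 + 1 + 2(n/2 - 1) = n$ when $n$ is even and $1 + 2\cdot\frac{n-1}{2} = n$ when $n$ is odd, we have $n$ pairwise orthonormal vectors in $\R^n$, hence a basis. The only subtle point — which I would treat as the main bookkeeping obstacle — is making sure that the ranges of $j$ are handled consistently so that no eigenvector is double-counted (in particular, the indices $0$ and $n/2$ are handled separately precisely because the sine vectors collapse to zero there), and that the eigenspace dimension for each $\lambda_j$ with $1 \leq j < n/2$ is at most two (which follows from the factorization of the circulant characteristic polynomial).
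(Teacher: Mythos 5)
Your proposal is correct, and it is worth noting that the paper contains no proof of this lemma at all: the authors simply cite Brouwer and Haemers (Section 1.4.3) for these well-known facts about cycles, which are standard consequences of circulant theory (the complex Fourier vectors $[1,\omega^j,\omega^{2j},\ldots,\omega^{(n-1)j}]^\top$ with $\omega=e^{2\pi i/n}$ diagonalize $A$, and the stated real eigenvectors are, up to normalization, their real and imaginary parts). Your self-contained verification via sum-to-product identities is an entirely adequate elementary substitute, and your counting argument ($n$ pairwise orthonormal vectors in $\R^n$) correctly delivers both the basis claim and, implicitly, the completeness of the list of eigenvalues. One bookkeeping slip deserves correction: your orthogonality step asserts that all pairs under consideration have $j\neq l$, but the pair $\bv_j,\bv_{n-j}$ (the cosine and sine eigenvectors belonging to the \emph{same} eigenvalue $\lambda_j$) is a pair of distinct basis vectors whose underlying trigonometric indices coincide, i.e.\ $j=l$ in your mixed sum $\sum_k \cos(2jk\pi/n)\sin(2lk\pi/n)$. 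This case still vanishes, but for a different reason than the one you give: product-to-sum reduces it to sums of the form $\sum_{k=0}^{n-1}\sin(2mk\pi/n)$, which are imaginary parts of your geometric sums and vanish for \emph{every} integer $m$, including $m\equiv 0 \pmod n$. The "$m$ nonzero" condition is genuinely needed only for the cosine--cosine and sine--sine inner products, where $j\neq l$ does hold. Finally, your closing appeal to the circulant characteristic polynomial to bound eigenspace dimensions is unnecessary for the statement as given: once the $n$ vectors are shown to be orthonormal eigenvectors, the basis claim is immediate, and the distinctness of the $\lambda_j$ for $0\leq j\leq\lfloor n/2\rfloor$ follows from the injectivity of $\cos$ on $[0,\pi]$.
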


We are now ready to characterize real pure states that admit PST in cycles. Since Corollary \ref{Cor:size2} takes care of the case $|\sigma_{\bx}(A)|=2$, we only focus on the case when $|\sigma_{\bx}(A)|\geq 3$.

\begin{theorem}
\label{Thm:pstcycles}
Let $n\geq 3$ and $\bx,\by\in\R^n$. Suppose $|\sigma_{\bx}(A)|\geq 3$, and $\sigma_{\bx}(A)$ is closed under algebraic conjugates. Then $C_n$ admits perfect state transfer between $\bx$ and $\by$ if and only if either:
\begin{enumerate}

\item $n=2m$, $\bx=a\bv_0+b(\alpha_1\bv_{\frac{n}{6}}+\alpha_2\bv_{\frac{5n}{6}})+c(\beta_1\bv_{\frac{n}{4}}+\beta_2\bv_{\frac{3n}{4}})+d(\gamma_1\bv_{\frac{n}{3}}+\gamma_2\bv_{\frac{2n}{3}})+e\bv_{\frac{n}{2}}$    
and $\by=a\bv_0-b\alpha_1\bv_{\frac{n}{6}}-b\alpha_2\bv_{\frac{5n}{6}}+c\beta_1\bv_{\frac{n}{4}}+c\beta_2\bv_{\frac{3n}{4}}-d\gamma_1\bv_{\frac{n}{3}}-d\gamma_2\bv_{\frac{2n}{3}}+e\bv_{\frac{n}{2}}$,
and either
\begin{enumerate}
\item If $c=0$, then $m\equiv 0$ (mod 3). In this case $\sigma_{\bx}(M)\subseteq\{\pm 1,\pm 2\}$.
\item Else, $m\equiv 0$ (mod 6). In this case, $0\in \sigma_{\bx}(M)$, $\sigma_{\bx}(M)\subseteq\{0,\pm 1,\pm 2\}$ and $\sigma_{\bx}(M)\neq \{0,\pm 2\}$.
\end{enumerate}

\item $n=4m$, $\bx=a\bv_0+b(\beta_1\bv_{m}+\beta_2\bv_{3m})+c\bv_{2m}$    
and $\by=-a\bv_0+b(\beta_1\bv_{m}+\beta_2\bv_{3m})-c\bv_{2m}$. In this case $\sigma_{\bx}(M)=\{ 0,\pm 2\}$.

\item $n=12m$, $\bx=a(\alpha_1\bv_{3m}+\alpha_2\bv_{9m})+b(\beta_1\bv_{m}+\beta_2\bv_{11m})+c(\gamma_1\bv_{5m}+\gamma_2\bv_{7m})$ and  $\by=a(\alpha_1\bv_{3m}+\alpha_2\bv_{9m})-b(\beta_1\bv_{m}+\beta_2\bv_{11m})-c(\gamma_1\bv_{5m}+\gamma_2\bv_{7m})$. In this case, $\sigma_{\bx}(M)=\{ 0,\pm \sqrt{3}\}$.

\item $n=8m$, $\bx=a(\alpha_1\bv_{2m}+\alpha_2\bv_{6m})+b(\beta_1\bv_{m}+\beta_2\bv_{7m})+c(\gamma_1\bv_{3m}+\gamma_2\bv_{5m})$ and $\by=a(\alpha_1\bv_{2m}+\alpha_2\bv_{6m})-b(\beta_1\bv_{m}+\beta_2\bv_{7m})-c(\gamma_1\bv_{3m}+\gamma_2\bv_{5m})$. In this case $\sigma_{\bx}(M)=\{ 0,\pm \sqrt{2}\}$.
\end{enumerate}
In all cases above, $a,b,c,d,e\in\R$ are such that $a^2+b^2+c^2=\|\bx\|^2$ in (2)-(4) with $a,b,c\neq 0$, and $a^2+b^2+c^2+d^2+e^2=\|\bx\|^2$ otherwise. Moreover, $(\alpha_1,\alpha_2),(\beta_1,\beta_2),(\gamma_1,\gamma_2)\in \R^2\backslash\{(0,0)\}$ such that $\alpha_1^2+\alpha_2^2=\beta_1^2+\beta_2^2=\gamma_1^2+\gamma_2^2=1$. The minimum PST time in (1)-(4) is $\pi$, $\frac{\pi}{2}$, $\frac{\pi}{\sqrt{3}}$ and $\frac{\pi}{\sqrt{2}}$, respectively.
\end{theorem}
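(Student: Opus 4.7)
The plan is to apply Corollary~\ref{Cor:PSTcharcor} and reduce the analysis to two tasks: enumerating the admissible eigenvalue supports of $\bx$ and, for each admissible support, identifying the partition $\sigma^+_{\bx,\by}(A)\cup\sigma^-_{\bx,\by}(A)$ forced by the parity condition. Since $|\sigma_\bx(A)|\geq 3$ and $\sigma_\bx(A)$ is closed under algebraic conjugates, Corollary~\ref{Cor:PSTcharcor}(2) forces every eigenvalue in $\sigma_\bx(A)$ to be of the common form $\tfrac{1}{2}(a+b_j\sqrt{\Delta})$ for some integer $a$ and squarefree $\Delta\geq 1$. Combined with the classical fact that $2\cos(2\pi/n)$ has algebraic degree $\varphi(n)/2$ over $\mathbb{Q}$, this pins down the possibilities: the rational eigenvalues of $C_n$ lie in $\{\pm 2,\pm 1,0\}$ (with $\pm 2$ appearing for $n$ even, $\pm 1$ for $6\mid n$, and $0$ for $4\mid n$), while the quadratic-irrational eigenvalues lie in $\{(\pm 1\pm\sqrt 5)/2,\ \pm\sqrt 2,\ \pm\sqrt 3\}$, occurring when $n/\gcd(j,n)\in\{5,8,10,12\}$.

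Next, I would enumerate admissible supports. If $\Delta=1$, then $\sigma_\bx(A)\subseteq\{\pm 2,\pm 1,0\}$, and the divisibility conditions on $n$ are forced by which values appear, giving subcases (1a) ($0\notin\sigma_\bx(A)$, so $\pm 1$ must be present to reach size $\geq 3$, requiring $6\mid n$), (1b) ($0\in\sigma_\bx(A)$ together with at least one of $\pm 1$, requiring $12\mid n$), and (2) ($\sigma_\bx(A)=\{0,\pm 2\}$, requiring $4\mid n$). If $\Delta>1$, the shared rational part $a/2$ rules out the $\sqrt 5$-orbits: $\{(1\pm\sqrt 5)/2\}$ and $\{(-1\pm\sqrt 5)/2\}$ are Galois orbits of size two with $a=1$ and $a=-1$ respectively, and no rational eigenvalue of $C_n$ fits the form $\tfrac{1}{2}(\pm 1+0\cdot\sqrt 5)$, so no Galois-closed subset of size $\geq 3$ with $\Delta=5$ can be assembled. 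Hence $\sigma_\bx(A)=\{0,\pm\sqrt{\Delta}\}$ with $\Delta\in\{2,3\}$, forcing $8\mid n$ or $12\mid n$, which give cases (4) and (3).

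For each admissible support, I would use Theorem~\ref{Prop:sc}(3) to write $\bx$ in the eigenbasis of Lemma~\ref{Lem:cneval} and apply Theorem~\ref{Thm:PSTchar}(2). Computing $p_j/q_j=(\lambda_1-\lambda_j)/(\lambda_1-\lambda_2)$ in lowest terms for $\sigma_\bx(A)=\{\lambda_1>\lambda_2>\cdots\}$ and invoking either (2a) or (2b) then forces, up to swapping $\by$ with $-\by$ (which yields the same real pure state), the partition $\sigma^-_{\bx,\by}(A)=\{\pm 1\}\cap\sigma_\bx(A)$ and $\sigma^+_{\bx,\by}(A)=\{0,\pm 2\}\cap\sigma_\bx(A)$ in case (1), and $\sigma^-_{\bx,\by}(A)=\{0\}$ with $\sigma^+_{\bx,\by}(A)=\sigma_\bx(A)\setminus\{0\}$ in cases (2)--(4). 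Reading off $\by$ from Theorem~\ref{Prop:sc}(3) yields the stated formulas, and Corollary~\ref{Lem:minperiod1} gives the minimum PST time $\pi/(g\sqrt{\Delta})$, evaluating to $\pi,\ \pi/2,\ \pi/\sqrt 3,\ \pi/\sqrt 2$ in the four cases; sufficiency is then verified by running the parity check in reverse. The main obstacle is the bookkeeping: cleanly ruling out the $\sqrt 5$-orbits and tracking the divisibility conditions on $n$ across the three integer subcases, after which the parity computation per case is a short direct check.
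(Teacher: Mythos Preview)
Your outline is correct and follows essentially the same route as the paper's proof: restrict the eigenvalue support via the quadratic-integer form in Theorem~\ref{Thm:rc}/Corollary~\ref{Cor:PSTcharcor}, enumerate the admissible supports inside the spectrum of $C_n$, and then read off the unique $\sigma^+/\sigma^-$ partition (hence $\by$ and the minimum PST time) from Theorem~\ref{Thm:PSTchar}(2). The only noteworthy difference is in the enumeration step: you invoke the degree formula $[\mathbb{Q}(2\cos(2\pi/m)):\mathbb{Q}]=\varphi(m)/2$ to list the degree-$\leq 2$ eigenvalues of $C_n$ directly and then eliminate $\Delta=5$ via the common-$a$ constraint, whereas the paper first bounds $|\sigma_\bx(A)|\leq 5$ using Corollary~\ref{Cor:specgap} together with $|\lambda_j|\leq 2$, and then works from the representation $\lambda_j=\tfrac{1}{2}(a+b_j\sqrt{\Delta})$ to pin down $a=0$ and $\Delta\in\{2,3\}$ in the irrational case. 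Both arguments land on the same four cases; your use of the cyclotomic degree is a clean shortcut that avoids the case-by-case elimination of $a\in\{\pm 2,\pm 4\}$.
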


\begin{proof}
Let $|\sigma_{\bx}(A)|\geq 3$ and $\sigma_{\bx}(A)$ be closed under taking algebraic conjugates. If $\bx$ is periodic, then the elements in $\sigma_{\bx}(A)$ differ by at least one by Corollary \ref{Cor:specgap}. Since $|\lambda_j|\leq 2$, we get $|\sigma_{\bx}(A)|\leq 5$. By Theorem \ref{Thm:rc}, we have two cases.
\vspace{0.05in}

\noindent \textbf{Case 1.} Let $\sigma_{\bx}(A)\subseteq\mathbb{Z}$ so that $\sigma_{\bx}(A)\subseteq \{0,\pm 1,\pm 2\}$. Invoking Proposition \ref{Prop:S}, we may write $\bx=a\bv_0+b\bu+c\bw+d\bz+e\bv_{\frac{n}{2}}$, where $a,b,c,d,e\in\R$ with $a^2+b^2+c^2+d^2+e^2=1$, and $\bu,\bw,\bz$ are eigenvectors associated with $\lambda_{\frac{n}{6}}=1$, $\lambda_{\frac{n}{4}}=0$, $\lambda_{\frac{n}{3}}=-1$, respectively. The latter implies that we may write $\bu=\alpha_1\bv_{\frac{n}{6}}+\alpha_2\bv_{\frac{5n}{6}}$, $\bw=\beta_1\bv_{\frac{n}{4}}+\beta_2\bv_{\frac{3n}{4}}$, and $\bz=\gamma_1\bv_{\frac{n}{3}}+\gamma_2\bv_{\frac{2n}{3}}$, where $(\alpha_1,\alpha_2),(\beta_1,\beta_2),(\gamma_1,\gamma_2)\in\R^2\backslash\{(0,0)\}$ satisfying $\alpha_1^2+\alpha_2^2=\beta_1^2+\beta_2^2=\gamma_1^2+\gamma_2^2=1$. From this, it is clear that $n=2m$ for some integer $m$. We proceed with two subcases.
\begin{itemize}
    \item Suppose $b\neq 0$ or $d\neq 0$. If $c=0$, then $\frac{n}{6}$ or $\frac{n}{3}$ is an integer, which implies that $m\equiv 0$ (mod 3). Otherwise, $m\equiv 0$ (mod 6). This proves (1).
    \item Suppose $b=0$ and $d=0$. Then we may rewrite $\bx=a\bv_0+b'\bw+c'\bv_{\frac{n}{2}}$, where $\bw=\beta_1\bv_{\frac{n}{4}}+\beta_2\bv_{\frac{3n}{4}}$. Thus, $n\equiv 0$ (mod 4). This proves (2).
\end{itemize}

\noindent \textbf{Case 2.} Let $\sigma_{\bx}(A)\subseteq\{\frac{a}{2},\frac{1}{2}(a\pm b_1\sqrt{\Delta}),\frac{1}{2}(a\pm b_2\sqrt{\Delta})\}$, where $a,b_1,b_2,\Delta$ are integers such that $b_1>b_2>0$ and $\Delta>1$ is square-free. Note that $\frac{a}{2}$ is an integer as it is a root of a polynomial with integer coefficients. Since $|\frac{a}{2}|\leq 2$, we get $a\in\{0,\pm 2,\pm 4\}$. If $a=2$, then $|\frac{1}{2}(2\pm b_j\sqrt{\Delta})|\leq 2$ implies that $b_j=1$ and $\Delta\in\{2,3\}$. However, $\frac{1}{2}(2\pm \sqrt{\Delta})$ are not quadratic integers whenever $\Delta\in\{2,3\}$. Thus, $a\neq 2$ and similarly, $a\neq -2$. Now, if $a=0$, then $|\frac{1}{2}b_j\sqrt{\Delta}|\leq 2$, where $\frac{1}{2}b_j$ is an integer, so we get $\frac{1}{2}b_j=1$ and $\Delta\in\{2,3\}$. Since $b_1>b_2$ and $\sigma_{\bx}(A)$ is closed under algebraic conjugates, $|\sigma_{\bx}(A)|\in\{4,5\}$ cannot hapen. Thus, $|\sigma_{\bx}(A)|=3$. We have the following subcases.

\begin{itemize}

\item Let $\Delta=3$. Then $\sigma_{\bx}(A)=\{0,\pm \sqrt{3}\}$. By Lemma \ref{Lem:cneval}, $n=12m$ and $j\in\{\frac{n}{4},\frac{n}{12},\frac{5n}{12}\}$. Therefore, $\bx=a\bz+b\bu+c\bw$, where $\bz=\alpha_1\bv_{\frac{n}{4}}+\alpha_2\bv_{\frac{3n}{4}}$, $\bu=\beta_1\bv_{\frac{n}{12}}+\beta_2\bv_{\frac{11n}{12}}$ and $\bw=\gamma_1\bv_{\frac{5n}{12}}+\gamma_2\bv_{\frac{7n}{12}}$.

\item Let $\Delta=2$. Then $\sigma_{\bx}(A)=\{0,\pm \sqrt{2}\}$. A similar argument yields $n=8m$ and $\bx=a\bz+b\bu+c\bw$, where $\bz=\alpha_1\bv_{\frac{n}{4}}+\alpha_2\bv_{\frac{3n}{4}}$, $\bu=\beta_1\bv_{\frac{n}{8}}+\beta_2\bv_{\frac{7n}{8}}$ and $\bw=\gamma_1\bv_{\frac{3n}{8}}+\gamma_2\bv_{\frac{5n}{8}}$. 
\end{itemize}
Combining the two cases above yields (3) and (4). Finally, the Pythagorean relations involving $a,b,c,d,e$ and the pairs $(\alpha_1,\alpha_2),(\beta_1,\beta_2),(\gamma_1,\gamma_2)$ follows from the fact that $\{\bv_j\}$ is an orthonormal basis for $\R^n$.
\end{proof}

It is known that cycles admit $s$-pair state transfer if and only if $n\in\{4,6,8\}$ \cite{kim2024generalization}. If we consider real pure states in general, then Theorem \ref{Thm:pstcycles} implies that there are infinite families of cycles that admit PST.

We close this section with the following example.

\begin{example}
\label{Ex:c12}
Let $\bx=\sum_{j=0}^{m-1}\be_{4j}$ and $\by=\sum_{j=0}^{m-1}\be_{4j+2}$ in $C_{4m}$. We may write $\bx=\frac{\sqrt{n}}{4}(\bv_0+\sqrt{2}\bv_m+\bv_{2m})$ and $\by=\frac{\sqrt{n}}{4}(\bv_0-\sqrt{2}\bv_m+\bv_{2m})$.
Invoking Theorem \ref{Thm:pstcycles}(2) with $a=c=\frac{\sqrt{n}}{4}$, $\beta_1=1$ (so that $\beta_2=0$) and $b=\frac{\sqrt{n}}{2\sqrt{2}}$ yields PST between $\bx$ and $\by$ at time $\frac{\pi}{2}$. In particular, if $m=2$, then we recover the fact that $C_8$ admits plus PST between $\bx=\be_0+\be_4$ and $\by=\be_2+\be_6$ \cite{kim2024generalization}. If $m=3$, then $C_{12}$ admits PST between $\bx=\be_0+\be_4+\be_8$ and $\by=\be_2+\be_6+\be_{10}$. This complements the fact that $C_{12}$ does not admit $s$-pair PST.
\end{example}

\section{Paths}\label{Sec:paths}

In this section, we characterize adjacency and Laplacian PST between real pure states in paths. We adopt the convention that the vertices of $P_n$ are labelled so that vertices $j, k$ are adjacent whenever $|k-j| = 1.$ We start with the adjacency case. The adjacency eigenvalues and eigenvectors of $P_n$ are well-known, see \cite[Section 1.4.4]{brouwer2011spectra}. Again for our purposes, we provide normalized eigenvectors for $A(P_n)$ below.

\begin{lemma}
\label{Lem:pneval}
For $j\in\{1,\ldots, n\}$, the adjacency eigenvector of $P_n$ with eigenvalue $\mu_j=2\cos\left(\frac{j\pi}{n+1}\right)$ is 
\begin{center}
$\bz_j=\sqrt{\frac{2}{n+1}}\left[\sin\left(\frac{j\pi}{n+1}\right), \sin\left(\frac{2j\pi}{n+1}\right),\ldots, \sin\left(\frac{nj\pi}{n+1}\right)\right]^\top$.
\end{center}
Moreover, $\{\bz_1,\ldots,\bz_n\}$ forms an orthonormal basis for $\R^n$.
\end{lemma}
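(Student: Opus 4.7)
The plan is to verify the eigenvalue equation $A(P_n)\bz_j = \mu_j \bz_j$ entrywise, then argue orthonormality from basic linear algebra. The only nontrivial ingredient is a pair of elementary trigonometric identities.

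First I would set $\omega := j\pi/(n+1)$, so that $\mu_j = 2\cos\omega$ and $(\bz_j)_k = \sqrt{2/(n+1)}\,\sin(k\omega)$. For an interior index $2\le k\le n-1$, the adjacency structure of $P_n$ gives
\begin{equation*}
(A(P_n)\bz_j)_k = (\bz_j)_{k-1}+(\bz_j)_{k+1} = \sqrt{\tfrac{2}{n+1}}\bigl[\sin((k-1)\omega)+\sin((k+1)\omega)\bigr],
\end{equation*}
and the sum-to-product identity $\sin(A-B)+\sin(A+B)=2\sin A\cos B$ with $A=k\omega$, $B=\omega$ immediately rewrites this as $2\cos\omega \cdot (\bz_j)_k = \mu_j(\bz_j)_k$, as required.

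Next I would handle the two boundary rows. The cleanest way is to formally extend $\bz_j$ by zero at positions $0$ and $n+1$; since $\sin(0\cdot\omega)=0$ and $\sin((n+1)\omega)=\sin(j\pi)=0$, the definition is consistent with the formula $(\bz_j)_k=\sqrt{2/(n+1)}\sin(k\omega)$. With that convention the interior calculation above applies uniformly to $k=1$ and $k=n$, which finishes the verification that each $\bz_j$ is an eigenvector with eigenvalue $\mu_j$.

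For the last statement, the eigenvalues $\mu_1>\mu_2>\cdots>\mu_n$ are distinct (the function $\cos$ is strictly monotone on $(0,\pi)$), so eigenvectors from different indices are automatically orthogonal since $A(P_n)$ is real symmetric. It remains to show $\|\bz_j\|=1$, which reduces to the classical identity $\sum_{k=1}^{n}\sin^2(k\omega)=(n+1)/2$; the easy way to see it is to write $\sin^2(k\omega)=\tfrac12(1-\cos(2k\omega))$ and sum the cosine terms as the real part of a finite geometric series, using $\sin((n+1)\omega)=0$ to kill the telescoping remainder. The set $\{\bz_1,\ldots,\bz_n\}$ then consists of $n$ pairwise orthonormal vectors in $\R^n$, hence is a basis. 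There is no real obstacle here — the entire argument is a direct verification — and the only place one must be careful is tracking the phantom-zero boundary extension so that a single identity handles all $n$ rows at once.
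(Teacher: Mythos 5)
Your proof is correct. One remark on how it sits relative to the paper: the paper does not actually prove this lemma --- it simply records the normalized eigenvectors and cites Section 1.4.4 of Brouwer and Haemers, where the standard derivation obtains the spectrum of $P_n$ from that of the cycle $C_{2(n+1)}$ (the eigenvectors of the path are the restrictions of those cycle eigenvectors that vanish at two antipodal vertices, which is where the sine entries and the phantom zeros at positions $0$ and $n+1$ come from). Your direct entrywise verification is a self-contained alternative: the sum-to-product identity $\sin((k-1)\omega)+\sin((k+1)\omega)=2\sin(k\omega)\cos\omega$ handles all rows once you adopt the zero-extension convention, the eigenvalues $2\cos(j\pi/(n+1))$ are distinct because $\cos$ is strictly monotone on $(0,\pi)$ (so orthogonality is free from symmetry of $A$), and the normalization $\sum_{k=1}^{n}\sin^2(k\omega)=(n+1)/2$ follows from the geometric series $\sum_{k=0}^{n}e^{2ik\omega}=0$, which is valid since $e^{2i(n+1)\omega}=1$ while $e^{2i\omega}\neq 1$ for $1\le j\le n$. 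The one point worth making explicit is that each $\bz_j$ is nonzero (its first entry is $\sqrt{2/(n+1)}\,\sin\omega\neq 0$ since $\omega\in(0,\pi)$), though this also follows from the computed norm. What your route buys is independence from the cycle machinery; what the cited route buys is a conceptual explanation of the sine formula and a uniform treatment alongside the paper's Lemma \ref{Lem:cneval} for $C_n$.
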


We now characterize real pure states that admit adjacency PST in paths. Again, since Corollary \ref{Cor:size2} takes care of the case $|\sigma_{\bx}(A)|=2$, we only focus on the case when $|\sigma_{\bx}(A)|\geq 3$.

\begin{theorem}
\label{Thm:pstpaths}
Let $n\geq 3$. Suppose $\bx,\by\in\R^n$ such that $|\sigma_{\bx}(A)|\geq 3$ and $\sigma_{\bx}(A)$ is closed under algebraic conjugates. $P_n$ admits perfect state transfer between $\bx$ and $\by$ if and only if either:
\begin{enumerate}

\item $n+1=6m$, and either
\begin{enumerate}
\item $\bx=a\bz_{3m}+b\bz_{2m}+c\bz_{4m}$ and $\by=a\bz_{3m}-b\bz_{2m}-c\bz_{4m}$,
\item $\bx=a\bz_{3m}+b\bz_{m}+c\bz_{5m}$ and $\by=a\bz_{3m}-b\bz_{m}-c\bz_{5m}$, or 
\end{enumerate}
\item $n+1=4m$, $\bx=a\bz_{2m}+b\bz_{m}+c\bz_{3m}$ and $\by=a\bz_{2m}-b\bz_{m}-c\bz_{3m}$.
\end{enumerate}
In all cases above, $a,b,c\in\R\backslash\{0\}$ are such that $a^2+b^2+c^2=\|\bx\|^2$. Moreover, 
the minimum PST times in (1a), (1b) and (2) are $\frac{\pi}{2}$, $\frac{\pi}{\sqrt{3}}$ and $\frac{\pi}{\sqrt{2}}$, respectively.
\end{theorem}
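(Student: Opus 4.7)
My plan mirrors the cycles case of Theorem \ref{Thm:pstcycles}, now adapted to the adjacency spectrum of $P_n$. Two structural facts drive the argument: each eigenvalue $\mu_j = 2\cos(j\pi/(n+1))$ of $P_n$ is simple and lies strictly in $(-2,2)$. Simplicity lets Proposition \ref{Prop:S} write $\bx$ uniquely as $\sum_{j}c_j\bz_j$ over the indices $j$ with $\mu_j\in\sigma_{\bx}(A)$, and the strict bound on $|\mu_j|$ limits the size of $\sigma_{\bx}(A)$.

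By Lemma \ref{Lem:pstnec}, PST forces $\bx$ to be periodic and strongly cospectral with $\by$. Corollary \ref{Cor:specgap} then yields a spectral gap of at least one in $\sigma_{\bx}(A)$, and combined with $|\mu_j|<2$ this gives $|\sigma_{\bx}(A)|\leq 4$. Theorem \ref{Thm:rc} splits the problem into two regimes: either $\sigma_{\bx}(A)\subseteq\mathbb{Z}$, or $\sigma_{\bx}(A)$ consists of algebraic integers of the form $\tfrac{1}{2}(a+b_j\sqrt{\Delta})$ for a common integer $a$ and square-free $\Delta>1$.

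In the integer regime, the only integer path eigenvalues in $(-2,2)$ are $-1,0,1$, forcing $\sigma_{\bx}(A)=\{-1,0,1\}$. Requiring all three to arise as $2\cos(j\pi/(n+1))$ for integer $j$ gives $n+1=6m$ with relevant indices $\{2m,3m,4m\}$, matching case (1a). In the quadratic regime, I will show $a=0$ is forced. The cases $a=\pm 2$ are ruled out exactly as in Theorem \ref{Thm:pstcycles}: the values $\tfrac{1}{2}(\pm 2\pm b_j\sqrt{\Delta})$ that lie in $(-2,2)$ are not algebraic integers and hence not path eigenvalues. The case $|a|\geq 4$ is immediate since $|a/2|\geq 2$ is incompatible with closure under algebraic conjugation within the open range $(-2,2)$. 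Odd $a$ forces $\Delta\equiv 1\pmod 4$; the smallest such is $\Delta=5$, where only the pairs $\{\tfrac{1\pm\sqrt{5}}{2}\}$ and $\{\tfrac{-1\pm\sqrt{5}}{2}\}$ appear as path eigenvalues, giving $|\sigma|\leq 2$ for any fixed $a$; larger $\Delta\equiv 1\pmod 4$ are ruled out by the combined range and gap constraints. With $a=0$, algebraic integrality forces the $b_j$ to be even, and the constraints $|b_j|\sqrt{\Delta}/2<2$ and gap $\geq 1$ pin down $|b_j|\in\{0,2\}$ and $\Delta\in\{2,3\}$. Solving $\mu_j=\pm\sqrt{3}$ yields $n+1=6m$ with indices $\{m,3m,5m\}$ (case (1b)), and $\mu_j=\pm\sqrt{2}$ yields $n+1=4m$ with indices $\{m,2m,3m\}$ (case (2)).

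For each of the three cases, Remark \ref{numberofpartitions} produces exactly three candidate partitions of $\sigma_{\bx}(A)$ into $\{\sigma^+_{\bx,\by}(A),\sigma^-_{\bx,\by}(A)\}$, each fixing $\by$ via Theorem \ref{Prop:sc}(3). Applying the $2$-adic valuation criteria of Theorem \ref{Thm:PSTchar}(2a)--(2b) to the ratios $(\lambda_1-\lambda_j)/(\lambda_1-\lambda_2)$ singles out precisely the partition placing the middle eigenvalue alone in one class and the two outer eigenvalues together in the other; this yields the pair $(\bx,\by)$ stated in each sub-case. The minimum PST times then follow from Lemma \ref{Lem:pstnec}(2) and the minimum-period formula of Lemma \ref{Lem:minperiod}. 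The main obstacle is the enumeration in the quadratic regime: ruling out $a\neq 0$ and four-element supports requires combining the algebraic-integrality of path eigenvalues, the strict bound $|\mu_j|<2$, and the gap condition in a careful case analysis.
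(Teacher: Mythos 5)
Your plan is essentially the paper's proof: the paper likewise bounds the eigenvalue support using Corollary \ref{Cor:specgap} together with $|\mu_j|<2$, splits into the integer and quadratic regimes via Theorem \ref{Thm:rc}, and then simply reuses Cases 1 and 2 of the proof of Theorem \ref{Thm:pstcycles}; your proposal is that same argument with the details the paper leaves implicit (the partition selection via the $\nu_2$ conditions of Theorem \ref{Thm:PSTchar}, and the odd-$a$, $\Delta\equiv 1\pmod 4$ subcase) written out, and it is correct in structure. One small repair: for $a=\pm 2$ with $b_j$ even, $(2-b_j\sqrt{\Delta})/2=1-\sqrt{\Delta}$ \emph{is} an algebraic integer lying in $(-2,2)$ for small $\Delta$, so this subcase is excluded not by non-integrality but by conjugate closure (its conjugate $1+\sqrt{\Delta}>2$ cannot belong to $\sigma_{\bx}(A)$, since all path eigenvalues lie in $(-2,2)$) --- the same mean-of-conjugates argument you already use for $|a|\geq 4$. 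Also be aware that faithfully executing Lemma \ref{Lem:minperiod} in case (1a), where $\sigma_{\bx}(A)=\{0,\pm 1\}$, gives minimum period $2\pi$ and hence minimum PST time $\pi$ rather than the $\frac{\pi}{2}$ asserted in the theorem statement (one checks directly that $U(\pi/2)\bx=a\bz_{3m}+ib\bz_{2m}-ic\bz_{4m}$ is not a unimodular multiple of any real vector when $a,b\neq 0$); this is a slip in the statement itself, not in your plan, but your writeup should not silently reproduce it.
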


\begin{proof}
Let $|\sigma_{\bx}(A)|\geq 3$ and $\sigma_{\bx}(A)$ be closed under algebraic conjugates. As $|\mu_j|<2$, Corollary \ref{Cor:specgap} yields $|\sigma_{\bx}(A)|=3$. By Theorem \ref{Thm:rc}, we get two cases. If $\sigma_{\bx}(A)\subseteq\mathbb{Z}$, then using the same argument as Case 1 of the proof of Theorem \ref{Thm:pstcycles} to show (1a). If $\sigma_{\bx}(A)=\{\frac{a}{2},\frac{1}{2}(a\pm b\sqrt{\Delta})\}$, then one may use the argument in Case 2 to obtain (1b) and (2).
\end{proof}

\begin{example}
\label{Ex:pnex}
Let $n+1=4m$ and consider $\bx=\sum_{j=0}^{\lfloor\frac{m-1}{2}\rfloor}\be_{8j+1}-\sum_{j=1}^{\lfloor\frac{m}{2}\rfloor}\be_{8j-1}$ in $P_{n}$. Observe that we can write $\bx=\frac{\sqrt{n+1}}{2\sqrt{2}}(\bz_{2m}+\frac{1}{\sqrt{2}}(\bz_{m}+\bz_{3m}))$. Invoking Theorem \ref{Thm:pstpaths}(2) with $a=\sqrt{2}b=\sqrt{2}c=\frac{\sqrt{n+1}}{2\sqrt{2}}$, we get PST between $\bx$ and $\by=\frac{\sqrt{n+1}}{2\sqrt{2}}(-\bz_{2m}+\frac{1}{\sqrt{2}}(\bz_{m}+\bz_{3m}))=\sum_{j=0}^{\lfloor\frac{m-1}{2}\rfloor}\be_{8j+3}-\sum_{j=1}^{\lfloor\frac{m}{2}\rfloor}\be_{8j-3}$ at time $\frac{\pi}{\sqrt{2}}$. In particular, we have:
\begin{enumerate}
\item If $m=2$, then we obtain pair PST between $\bx=\be_1-\be_7$ and $\by=\be_3-\be_5$ in $P_7$.
\item If $m=3$, then we obtain PST between $\bx=\be_1-\be_7+\be_9$ and $\by=\be_3-\be_5+\be_{11}$ in $P_{11}$.
\end{enumerate}
\end{example}

\begin{corollary}
\label{Cor:pairpstpn}
Pair perfect state transfer occurs in $P_n$ relative to $A$ if and only if $n\in\{3,5,7\}$.
\end{corollary}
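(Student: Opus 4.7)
The plan is to verify sufficiency by exhibiting explicit pair PSTs and handle necessity by splitting on $|\sigma_{\bx}(A)|$. For sufficiency, $P_7$ is handled by Example~\ref{Ex:pnex}(1). For $P_5$, computing $\bx^\top \bz_j$ for $\bx = \be_2 - \be_4$ using Lemma~\ref{Lem:pneval} shows $\sigma_{\bx}(A) = \{\mu_2,\mu_4\} = \{1,-1\}$, so Corollary~\ref{Cor:size2} produces PST to $\by = \be_1 - \be_5$. For $P_3$, Theorem~\ref{Thm:pstpaths}(2) with $m = 1$ (or a direct computation) gives PST between $\be_1 - \be_2$ and $\be_2 - \be_3$.

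For necessity, suppose pair PST occurs between $\bx = \be_u - \be_v$ and a pair state $\by$ in $P_n$. Since $\bx$ has rational entries, Lemma~\ref{Lem:suppalgconj} implies $\sigma_{\bx}(A)$ is closed under algebraic conjugation, and Lemma~\ref{Lem:pstnec}(1) says $\bx$ and $\by$ are strongly cospectral. Corollary~\ref{Cor:specgap} combined with $|\mu_j| < 2$ forces $|\sigma_{\bx}(A)| \in \{2,3\}$.

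Case $|\sigma_{\bx}(A)| = 3$: Theorem~\ref{Thm:pstpaths} yields one of sub-cases (1a), (1b), (2). Using Lemma~\ref{Lem:pneval}, expanding the expression for $\bx$ in (2) gives $(\bx)_k$ as a real linear combination of $\sin(k\pi/2)$, $\sin(k\pi/4)$, $\sin(3k\pi/4)$, hence constant on residues of $k$ modulo $8$; similarly, (1a) and (1b) produce entries constant modulo $12$. In each case the nonzero values of $(\bx)_k$ reduce to three linear forms $A, B, C$ in $a, b, c$ (and their negatives). Requiring $\bx$ to be a pair state (exactly two nonzero entries of opposite signs) forces most of $A, B, C$ to vanish. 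For (2), this is satisfiable only when every nonzero residue class modulo $8$ has at most one representative in $\{1, \ldots, 4m-1\}$, i.e., when $m \in \{1, 2\}$, giving back the transfers in $P_3$ and $P_7$; for $m \geq 3$, at least one of $\pm A, \pm B, \pm C$ repeats, so $A = B = C = 0$ and hence $a = b = c = 0$, contradicting (2). A parallel analysis rules out (1a) for every $m$. For (1b), even though $\bx$ alone can be a pair state at intermediate $n$ (for example $\be_3 - \be_9$ in $P_{11}$), requiring the partner $\by = a\bz_{3m} - b\bz_m - c\bz_{5m}$ to also be a pair state doubles the linear constraints and forces $a = b = c = 0$.

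Case $|\sigma_{\bx}(A)| = 2$: By Corollary~\ref{Cor:size2}, $\bx + \by$ and $\bx - \by$ are eigenvectors of $A$ with distinct eigenvalues. If $\{u,v\} \cap \{u',v'\} \neq \emptyset$, one of $\bx \pm \by$ is a pair state, hence proportional to some $\bz_j$. Since $\bz_j$ has $N - \gcd(j, N)$ nonzero entries (where $N = n+1$), two nonzero entries forces $\gcd(j,N) = N - 2$, hence $(N-2) \mid 2$ and $N \in \{3, 4\}$; a direct check shows the complementary $\bx \mp \by$ fails to be an eigenvector in $P_3$, so this sub-case contributes nothing. If $\{u,v\}$ and $\{u',v'\}$ are disjoint, then $\bx \pm \by$ each have four $\pm 1$ entries, so both must be proportional to eigenvectors with four equal-magnitude nonzero entries. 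The condition $N - \gcd(j, N) = 4$ yields $N \in \{5, 6, 8\}$, and a direct check on magnitudes eliminates $P_4$ (two distinct magnitudes $\sin(\pi/5), \sin(2\pi/5)$ appear) and $P_7$ (only $\bz_4 \propto \be_1 - \be_3 + \be_5 - \be_7$ qualifies, forcing $\by = \pm \bx$), leaving $P_5$ with the pair $\bz_2, \bz_4$ giving the transfer between $\be_2 - \be_4$ and $\be_1 - \be_5$. The main obstacle is sub-case (1b) of Theorem~\ref{Thm:pstpaths}, where because some pair states do fit the required form at intermediate $n$, one cannot conclude by inspecting $\bx$ alone and must track the pair state condition on both endpoints simultaneously.
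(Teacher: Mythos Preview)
Your proof is correct, but it diverges from the paper in the $|\sigma_{\bx}(A)|=3$ case, and the paper's route there is noticeably shorter.

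For $|\sigma_{\bx}(A)|=3$ you treat sub-cases (1a), (1b), (2) of Theorem~\ref{Thm:pstpaths} separately, computing the period of the entry pattern (mod $8$ or $12$) and arguing that the relevant linear forms must vanish once $m$ is large. The paper instead observes that in \emph{all three} sub-cases the coefficient of $\bz_{(n+1)/2}$ is the ``$a$'' term, so adding gives
\[
(\be_u-\be_w)+(\be_v-\be_x)=2a\,\bz_{(n+1)/2}\propto[1,0,-1,0,1,\ldots]^\top .
\]
Matching this single explicit vector against a $\{0,\pm1,\pm2\}$-vector with at most four nonzero entries pins down $n=7$ (and the specific positions) in one stroke, after which one only checks that $\be_1-\be_3$ is not periodic in $P_7$. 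This bypasses the laborious form-by-form elimination you carry out, and in particular avoids the delicate point you flag in (1b) where $\bx$ alone can be a pair state (your $\be_3-\be_9$ in $P_{11}$). Two small inaccuracies in your write-up: case (1a) actually produces \emph{four} distinct linear forms, not three; and the sentence ``forces $a=b=c=0$'' for (1b) is not literally what happens---rather, $\bx$ being a pair state forces $P=Q=0$, whence $P'=a-(b+c)/2\neq 0$ and $\by$ has too many nonzero entries.

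For $|\sigma_{\bx}(A)|=2$ your argument and the paper's are close in spirit: both exploit that $\bx\pm\by$ are eigenvectors. You use the exact count $N-\gcd(j,N)$ for the number of nonzero entries of $\bz_j$, which is cleaner than the paper's appeal to Euler's totient function (the paper only bounds the count from below and leaves $n\in\{3,4,5\}$ to a direct check). Your treatment of the disjoint sub-case is more systematic. One minor point: your invocation of Corollary~\ref{Cor:specgap} alone only gives $|\sigma_{\bx}(A)|\le 4$, not $\le 3$; the reduction to $|\sigma_{\bx}(A)|\le 3$ really comes from Theorem~\ref{Thm:pstpaths} itself (whose hypothesis is $|\sigma_{\bx}(A)|\ge 3$ and whose conclusion lists only size-$3$ supports), so the argument still closes.
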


\begin{proof}
Suppose pair PST occurs in $P_n$ between $\be_u-\be_w$ and $\be_v-\be_x$. Since $\{u,w\}\neq \{v,x\}$, we have $n\geq 3$. From the proof of Theorem \ref{Thm:pstpaths}, we have $|\sigma_{\be_u-\be_w}(A)|\leq 3$. We proceed with two cases.

\noindent \textbf{Case 1.} Let $|\sigma_{\be_u-\be_w}(A)|=2$. That is, $\be_u-\be_w=a\bz_j+b\bz_k$ and $\be_v-\be_x=a\bz_j-b\bz_k$ with $j\neq k$. Adding these two equations yields $\be_u+\be_v-\be_w-\be_x=2a\bz_j$. Note that the $\ell$th entry of $\bz_j$ is nonzero if and only if $\sin\left(\frac{j\ell\pi}{n+1}\right)\neq 0$, or equivalently, $n+1$ does not divide $j\ell$. We now determine the conditions such that $\bz_j$ has at most four nonzero entries. If $n+1\in\{7,9,11\}$ or $n+1\geq 13$, then the Euler totient function yields at least five integers in $\{1,\ldots,n\}$ that are relatively prime to $n+1$, and so there are at least five values of $\ell$ such that $n+1$ does not divide $j\ell$. In this case, $\bz_j$ has at least five nonzero entries. Now, if $n+1\in\{8,10,12\}$, then for all $j\in\{1,\ldots,7\}$, one checks that there are at least five values of $\ell$ such that $n+1$ does not divide $j\ell$. In this case, we again get that $\bz_j$ has at least five nonzero entries. For the remaining cases $n\in\{3,4,5\}$, it is easy to check that there is PST between $\be_1-\be_2$ and $\be_3-\be_{2}$ in $P_3$ at $\frac{\pi}{\sqrt{2}}$, and between $\be_1-\be_5$ and $\be_2-\be_{4}$ in $P_5$ at $\frac{\pi}{2}$. The latter was also observed in \cite{pal2024quantum}.

\noindent \textbf{Case 2.} Let $|\sigma_{\be_u-\be_w}(A)|=3$. Theorem \ref{Thm:pstpaths} allows to us write $\be_u-\be_w=a\bz_{\frac{n+1}{2}}+b\bz_j+c\bz_{k}$ and $\be_v-\be_x=a\bz_{\frac{n+1}{2}}-b\bz_j-c\bz_{k}$ where $j,k\neq \frac{n+1}{2}$. Adding these two equations yields $\be_u+\be_v-\be_w-\be_x=2a\sqrt{\frac{2}{n+1}}[1,0,-1,0,1,\ldots]^\top$. This holds if and only if $u,v\in \{1,5\}$, $w,x\in \{3,7\}$, $n=7$ and $a=\frac{1}{2}$. Now, $\be_1-\be_3$ is not periodic in $P_7$, while pair PST happens between $\be_1-\be_7$ and $\be_3-\be_{5}$ at time $\frac{\pi}{\sqrt{2}}$ by Example \ref{Ex:pnex}(1). 

Combining the two cases above proves the forward direction. The converse is straightforward.
\end{proof}


An analogous argument yields $P_3$ as the only path that admits PST, between $\be_1+\be_2$ and $\be_3+\be_{2}$ in $P_3$ at $\frac{\pi}{\sqrt{2}}$.

\begin{corollary}
\label{Cor:pluspstpn}
Plus perfect state transfer occurs in $P_n$ relative to $A$ if and only if $n=3$. 
\end{corollary}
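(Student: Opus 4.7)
The plan is to follow the blueprint of Corollary~\ref{Cor:pairpstpn}. The backward direction is already given in the paragraph preceding the statement: in $P_3$, the pair $\be_1+\be_2$ and $\be_2+\be_3$ admits plus PST at time $\pi/\sqrt{2}$, which can also be recovered by substituting $m=1$ into Theorem~\ref{Thm:pstpaths}(2). For the converse, suppose plus PST occurs in $P_n$ between $\bx=\be_u+\be_w$ and $\by=\be_v+\be_x$ with $\{u,w\}\ne\{v,x\}$. The only plus state of $P_2$ is $\be_1+\be_2$, an eigenvector of $A(P_2)$ and hence fixed under the walk, so $n \ge 3$; and by Theorem~\ref{Thm:PSTchar} together with the bound $|\sigma_\bx(A)| \le 3$ obtained in the proof of Theorem~\ref{Thm:pstpaths} (from Corollary~\ref{Cor:specgap}, $|\mu_j|<2$, and algebraic closure), $\bx$ and $\by$ are strongly cospectral with $|\sigma_\bx(A)| \in \{2,3\}$.

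In the case $|\sigma_\bx(A)|=2$, Corollary~\ref{Cor:size2} writes $\bx=\bu_{j_1}+\bu_{j_2}$ and $\by=\pm(\bu_{j_1}-\bu_{j_2})$, where each $\bu_{j_i}$ is a scalar multiple of $\bz_{j_i}$ since $P_n$ has simple eigenvalues. Hence $\bx\pm\by$ are proportional to distinct $\bz_j$'s, each supported in $\{u,v,w,x\}$ and so of support size at most $4$. The Euler-totient argument in the proof of Corollary~\ref{Cor:pairpstpn} then forces $n \in \{3,4,5\}$. A short enumeration shows the only plus states with $|\sigma_\bx(A)|=2$ in these paths are $\be_1+\be_3$ in $P_3$, $\be_1+\be_4$ in $P_4$, and $\be_2+\be_4$ in $P_5$; in each case the strongly cospectral partner $\bu_{j_1}-\bu_{j_2}$ is either a scalar multiple of a vertex state (namely $\sqrt{2}\,\be_2$ in $P_3$), a vector with all four entries nonzero (in $P_4$), or $\sqrt{1/3}\,(\be_1+2\be_3+\be_5)$ in $P_5$---never itself a plus state.

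In the case $|\sigma_\bx(A)|=3$, Theorem~\ref{Thm:pstpaths} puts $\bx$ and its unique strongly cospectral partner $\by_{sc}$ into one of the forms (1a), (1b), or (2). The PST target then satisfies $\by=\pm\by_{sc}$, and $\bx+\by$ must be a nonnegative integer vector of sum $4$. If $\by=\by_{sc}$, then $\bx+\by=2a\bz_\ell$ for $\ell\in\{3m,2m\}$, but $\bz_\ell$ has entries $\pm\sqrt{2/(n+1)}$ of alternating sign on the odd indices and zero on the even indices, so nonnegativity forces $a=0$, contradicting $a\ne 0$ from Theorem~\ref{Thm:pstpaths}. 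If $\by=-\by_{sc}$, then $\bx+\by$ equals the complementary two-term combination, namely $2b\bz_{2m}+2c\bz_{4m}$, $2b\bz_m+2c\bz_{5m}$, or $2b\bz_m+2c\bz_{3m}$ in the three respective cases; each such vector has a periodic entry pattern (of period $6$, $12$, or $8$) whose second half is the negative of the first. Once the index $k$ passes $(n+1)/2$, the resulting nonnegativity constraints collapse to $b+c=0$ and $b-c=0$, giving $b=c=0$, a contradiction that rules out every $n \ge 4$. The sole exception is case~(2) with $m=1$ (so $n=3$): the range $k\in\{1,2,3\}$ lies entirely in the nonnegative half of the pattern, and the equations $\sqrt{1/2}\cdot\sqrt{2}(b+c)=1$ and $\sqrt{1/2}\cdot 2(b-c)=2$ give $b=(1+\sqrt{2})/2$, $c=(1-\sqrt{2})/2$, recovering precisely the plus PST from $\be_1+\be_2$ to $\be_2+\be_3$.

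The main obstacle is the bookkeeping in the $|\sigma_\bx(A)|=3$ analysis: three forms from Theorem~\ref{Thm:pstpaths} paired with two sign choices for the PST target yield six subcases, each requiring me to lay out the short periodic entry pattern of $2b\bz_i+2c\bz_j$ and verify that the sign-flipping structure forces $b=c=0$ whenever $n \ge 4$. A subtle point arises in case~(1b) with $m=1$ (so $n=5$), where the pattern does \emph{formally} admit a nonnegative matching with $b=c=\sqrt{3}/2$ and target $\be_1+2\be_3+\be_5$; however, the would-be plus states $\be_1+\be_3$ and $\be_3+\be_5$ have $|\sigma_\bx(A)|=4$ rather than $3$ (their eigenvalue supports include $\pm 1$ outside $\{0,\pm\sqrt{3}\}$), so they never actually fall under case~(1b), and this exception must be noted separately.
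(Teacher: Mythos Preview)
Your approach is precisely what the paper intends: its ``proof'' is the single sentence ``An analogous argument yields $P_3$ as the only path\ldots'', so carrying over the two-case split from Corollary~\ref{Cor:pairpstpn} is the right move. Your Case~2 analysis is more elaborate than the pair case because for plus states the sign of the target matters (if $\by_{sc}$ is a plus state then $-\by_{sc}$ is not, whereas both $\pm(\be_v-\be_x)$ are pair states), and you correctly split into $\by=\pm\by_{sc}$ and exploit nonnegativity of $\bx+\by$; the handling of the $(1b),\,m=1$ false positive via the eigenvalue-support check is also right.

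There is, however, a small gap in your Case~1 enumeration. In $P_4$ the symmetric plus state $\be_2+\be_3$ also has $|\sigma_{\bx}(A)|=2$: the reflection symmetry forces it into $\operatorname{span}\{\bz_1,\bz_3\}$ just as it does $\be_1+\be_4$. Its strongly cospectral partner $c_1\bz_1-c_3\bz_3$ again has four nonzero entries (two positive, two negative in the order $(\alpha,\beta,\beta,\alpha)$), so the conclusion is unaffected, but the case must be listed. Once that is patched, the argument is complete and agrees with the paper's intended route.
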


Despite the rarity of vertex, pair and plus PST in $P_n$ relative to $A$, Theorem \ref{Thm:pstpaths} guarantees that there are infinite families of paths that admit PST between real pure states.

We now turn to the Laplacian case. The Laplacian eigenvalues and eigenvectors of $P_n$ are known, see \cite[Section 1.4.4]{brouwer2011spectra}. We provide normalized eigenvectors for $L(P_n)$ below.

\begin{lemma}
\label{Lem:pnevalL}
The Laplacian eigenvector of $P_n$ corresponding to the eigenvalue $\theta_j=2\left(1-\cos\left(\frac{j\pi}{n}\right)\right)$ is
\begin{center}
$\bw_j=\sqrt{\frac{2}{n}}\left[\cos\left(\frac{j\pi}{2n}\right), \cos\left(\frac{3j\pi}{2n}\right),\cos\left(\frac{5j\pi}{2n}\right),\ldots,\cos\left(\frac{(2n-3)j\pi}{2n}\right), \cos\left(\frac{(2n-1)j\pi}{2n}\right)\right]^\top$
\end{center}
for $j\in\{0,1,\ldots n-1\}$ and $\bw_0=\frac{1}{\sqrt{n}}\ones$. Moreover, $\{\bw_0,\ldots,\bw_{n-1}\}$ is an orthonormal basis for $\R^n$.
\end{lemma}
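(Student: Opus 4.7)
The plan is to verify directly that each proposed $\bw_j$ satisfies $L(P_n)\bw_j=\theta_j\bw_j$, and then invoke the discrete orthogonality of cosines (the DCT--II orthogonality) to conclude that $\{\bw_0,\ldots,\bw_{n-1}\}$ is an orthonormal basis for $\R^n$. The only subtle feature is that $L(P_n)$ is tridiagonal with diagonal $(1,2,\ldots,2,1)$; the reduced diagonal at the two pendant vertices is exactly what forces the half--integer shift $(2k-1)/(2n)$ in the cosine argument, rather than the shift $k/(n+1)$ appearing in the adjacency eigenvectors of Lemma \ref{Lem:pneval}.

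Set $\alpha=j\pi/(2n)$ and $c_k=\cos((2k-1)\alpha)$. For an interior index $k\in\{2,\ldots,n-1\}$, I would first compute
\[
(L\bw_j)_k\big/\sqrt{2/n}=2c_k-c_{k-1}-c_{k+1}.
\]
The sum--to--product identity gives $c_{k-1}+c_{k+1}=2\cos(2\alpha)\,c_k=2\cos(j\pi/n)\,c_k$, so the above collapses to $2(1-\cos(j\pi/n))\,c_k=\theta_j c_k$, as required.

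For the two pendant rows I would do a short trigonometric calculation, and this is where I expect the main obstacle. At $k=1$, $(L\bw_j)_1/\sqrt{2/n}=c_1-c_2=\cos\alpha-\cos(3\alpha)$; applying $\cos A-\cos B=-2\sin\tfrac{A+B}{2}\sin\tfrac{A-B}{2}$ and then the double--angle formula $\sin(2\alpha)=2\sin\alpha\cos\alpha$ yields $4\sin^{2}\alpha\cos\alpha=2(1-\cos(2\alpha))c_1=\theta_j c_1$. The computation at $k=n$ is analogous but needs one extra step: I would rewrite $(2n-1)\alpha=j\pi-\alpha$ and $(2n-3)\alpha=j\pi-3\alpha$, which strips off an overall sign $(-1)^j$ on both $c_n$ and $c_{n-1}$, after which the same identity finishes the job. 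This is the step most prone to sign errors and is really the only nontrivial content of the lemma.

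Finally, for orthonormality I would appeal to the standard DCT--II relations
\[
\sum_{k=1}^{n}\cos\!\Big(\tfrac{(2k-1)j\pi}{2n}\Big)\cos\!\Big(\tfrac{(2k-1)\ell\pi}{2n}\Big)=\begin{cases} n,& j=\ell=0,\\ n/2,& j=\ell\ge 1,\\ 0,& j\neq\ell,\end{cases}
\]
which are proved in one line by writing each cosine as a sum of complex exponentials and collapsing two geometric series. This identity simultaneously certifies the normalizing factors $\sqrt{2/n}$ for $j\ge 1$ and $1/\sqrt{n}$ for $j=0$, and shows that $\{\bw_0,\ldots,\bw_{n-1}\}$ is an orthonormal basis. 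Since there are exactly $n$ such vectors and the eigen--equation has been verified, no further count is needed.
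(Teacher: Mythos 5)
Your proof is correct, and it is genuinely more self-contained than what the paper does: the paper offers no proof at all, simply citing the known spectral data for paths (Brouwer--Haemers, Section 1.4.4, where these eigenvectors are usually obtained by folding the cycle $C_{2n}$, i.e.\ viewing $L(P_n)$ as a quotient of $A(C_{2n})$ under the reflection symmetry, or by solving the three-term recurrence with Neumann-type boundary conditions). Your direct verification checks out in every detail: the interior rows collapse via $c_{k-1}+c_{k+1}=2\cos(2\alpha)c_k$; the pendant row $k=1$ gives $\cos\alpha-\cos 3\alpha=4\sin^2\alpha\cos\alpha=\theta_j c_1$; and at $k=n$ the substitutions $(2n-1)\alpha=j\pi-\alpha$, $(2n-3)\alpha=j\pi-3\alpha$ extract the common sign $(-1)^j$ so the same identity applies --- you correctly flag this as the step where the degree-$1$ diagonal forces the half-integer shift $(2k-1)/(2n)$, which is the real content of the lemma. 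The DCT--II orthogonality relation you quote is also right, with the one small point worth making explicit: for $j\neq\ell$ with $j,\ell\in\{0,\ldots,n-1\}$, the integers $j+\ell$ and $j-\ell$ are nonzero modulo $2n$, so the geometric series $\sum_{k=1}^n e^{i(2k-1)m\pi/(2n)}$ is either $0$ ($m$ even) or purely imaginary, namely $i/\sin\bigl(m\pi/(2n)\bigr)$ ($m$ odd), and in either case its real part vanishes --- the cosine sum is the real part, not the full exponential sum. What your route buys is a paper that stands on its own and makes transparent why the normalization splits into $\sqrt{2/n}$ versus $1/\sqrt{n}$ at $j=0$; what the paper's citation buys is brevity and an implicit pointer to the structural (quotient-of-cycle) explanation, which your boundary computation verifies but does not reveal.
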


The same argument in the proof of Theorem \ref{Thm:pstpaths} yields an analogous result for the Laplacian case.

\begin{theorem}
Let $n\geq 3$. Suppose $\bx,\by\in\R^n$ such that $|\sigma_{\bx}(L)|\geq 3$ and $\sigma_{\bx}(L)$ is closed under algebraic conjugates. Then $P_n$ admits Laplacian perfect state transfer between $\bx$ and $\by$ if and only if either:
\begin{enumerate}
\item $n=3m$, $\bx=a\bw_{2m}+b\bw_{\frac{3m}{2}}+c\bw_{m}+d\bw_{0}$, $\by=-a\bw_{2m}+b\bw_{\frac{3m}{2}}-c\bw_{m}+d\bw_{0}$, and $m$ is even if $b\neq 0$.
\item $n=6m$, $\bx=a\bw_{\frac{3m}{2}}+b\bw_{\frac{m}{2}}+c\bw_{\frac{5m}{2}}$ and $\by=a\bw_{\frac{3m}{2}}-b\bw_{\frac{m}{2}}-c\bw_{\frac{5m}{2}}$. 
\item $n=4m$, $\bx=a\bw_{2m}+b\bw_{m}+c\bw_{3m}$ and $\by=a\bw_{2m}-b\bw_{m}-c\bw_{3m}$.
\end{enumerate}
In all cases, $a,b,c,d\in\R$ are such that $a^2+b^2+c^2+d^2=\|\bx\|^2$ in (1) and $a^2+b^2+c^2=\|\bx\|^2$ otherwise. Moreover, the minimum PST times in (2) and (3) are $\frac{\pi}{\sqrt{3}}$ and $\frac{\pi}{\sqrt{2}}$, respectively, and $\pi$ otherwise.
\end{theorem}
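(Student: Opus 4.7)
The plan is to adapt the proof strategy of the adjacency case (Theorem \ref{Thm:pstpaths}) to the Laplacian spectrum described in Lemma \ref{Lem:pnevalL}. Since all Laplacian eigenvalues of $P_n$ lie in $[0,4)$, and Corollary \ref{Cor:specgap} forces distinct elements of $\sigma_{\bx}(L)$ to differ by at least one (using that $\bx$ must be periodic for PST to occur and that $\sigma_{\bx}(L)$ is closed under algebraic conjugates with $|\sigma_{\bx}(L)| \geq 3$), we obtain $|\sigma_{\bx}(L)| \leq 4$. Theorem \ref{Thm:rc} then splits the analysis into two cases: either $\sigma_{\bx}(L) \subseteq \mathbb{Z}$, or every $\lambda_j \in \sigma_{\bx}(L)$ has the form $\tfrac{1}{2}(a + b_j\sqrt{\Delta})$ for a fixed square-free $\Delta > 1$.

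In the integer case, $\theta_j = 2(1-\cos(j\pi/n)) \in \mathbb{Z}$ forces $\cos(j\pi/n) \in \{-\tfrac{1}{2}, 0, \tfrac{1}{2}, 1\}$, yielding $j \in \{0, n/3, n/2, 2n/3\}$ together with the corresponding divisibility conditions on $n$. The requirement $|\sigma_{\bx}(L)| \geq 3$ forces $3 \mid n$, i.e., $n = 3m$, and $m$ must additionally be even exactly when $\bw_{3m/2}$ is present in the expansion of $\bx$ (the coefficient $b$ in statement (1)). Writing $\bx$ in the resulting eigenbasis via Proposition \ref{Prop:S} and applying Theorem \ref{Prop:sc}(3) together with the $\nu_2$-parity conditions from Corollary \ref{Cor:PSTcharcor}(3), I show that the partition producing PST must place the eigenvectors for the even eigenvalues $\{0,2\}$ into $\sigma^+_{\bx,\by}$ and those for the odd eigenvalues $\{1,3\}$ into $\sigma^-_{\bx,\by}$, pinning down $\by$ as in (1). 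Corollary \ref{Lem:minperiod1} then gives minimum PST time $\pi$.

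In the quadratic case, I enumerate the indices $j$ for which $\cos(j\pi/n)$ is a quadratic irrational and the resulting $\theta_j$'s share a common integer $a$, as demanded by Theorem \ref{Thm:rc}(ii). By Niven's theorem on cosines of rational multiples of $\pi$ of degree $\leq 2$ over $\mathbb{Q}$, the only candidates are $j/n$ with denominator in $\{4,5,6\}$, corresponding to $\Delta \in \{2,5,3\}$. The value $\Delta = 5$ is eliminated because any choice of three or more $\theta_j$'s from angles with denominator $5$ forces two distinct values of $a$ (namely $3$ and $5$), violating the common-$a$ condition of Theorem \ref{Thm:rc}(ii). For $\Delta = 2$ the admissible angles are $\pi/4, \pi/2, 3\pi/4$, forcing $n = 4m$; for $\Delta = 3$ they are $\pi/6, \pi/2, 5\pi/6$, forcing $n = 6m$. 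In both subcases $\sigma_{\bx}(L) = \{2-\sqrt{\Delta}, 2, 2+\sqrt{\Delta}\}$ with common $a = 4$. Strong cospectrality fixes $\by$ once a partition of $\sigma_{\bx}(L)$ is chosen, the $\nu_2$ condition in Corollary \ref{Cor:PSTcharcor}(3) singles out $\sigma^+_{\bx,\by} = \{2\}$ and $\sigma^-_{\bx,\by} = \{2 \pm \sqrt{\Delta}\}$ (up to swapping $\bx$ and $\by$), and Corollary \ref{Lem:minperiod1} produces minimum PST times $\pi/\sqrt{3}$ and $\pi/\sqrt{2}$.

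The hardest step is the enumeration of admissible quadratic supports: rigorously ruling out $\Delta = 5$ and any hybrid integer/quadratic support via the common-$a$ requirement, and confirming that no square-free $\Delta > 3$ produces three distinct path-Laplacian eigenvalues with the required algebraic structure. Once the spectral structure has been pinned down, the verification of the $\nu_2$-parity conditions and the computation of minimum PST times reduce to routine arithmetic.
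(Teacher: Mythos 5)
Your proposal is correct and takes essentially the same approach as the paper, whose entire proof is the remark that the argument of Theorem \ref{Thm:pstpaths} carries over: you rerun exactly that argument (the gap bound from Corollary \ref{Cor:specgap} on the Laplacian spectrum in $[0,4)$, the integer/quadratic dichotomy of Theorem \ref{Thm:rc}, the Niven-type enumeration with the common-$a$ elimination of $\Delta=5$, then Theorem \ref{Prop:sc} with the $\nu_2$ conditions to fix the partition and Corollary \ref{Lem:minperiod1} for the minimum times), merely supplying the details the paper leaves implicit. One note: your $\Delta=3$ analysis produces eigenvector indices $n/6=m$, $n/2=3m$, $5n/6=5m$ for $n=6m$, so the indices $\tfrac{m}{2},\tfrac{3m}{2},\tfrac{5m}{2}$ printed in statement (2) should be read with $n=3m$ and $m$ even (equivalently $\bw_m,\bw_{3m},\bw_{5m}$ for $n=6m$); your derivation matches the mathematically intended content.
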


Adapting the proof of Corollary \ref{Cor:pairpstpn} for the Laplacian case yields the following result.

\begin{corollary}
Laplacian pair perfect state transfer occurs in $P_n$ relative to $L$ if and only if $n\in\{3,4\}$. Meanwhile, Laplacian plus perfect state transfer occurs in $P_n$ if and only if $n=4$.
\end{corollary}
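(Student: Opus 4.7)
The plan is to mirror the proof of Corollary~\ref{Cor:pairpstpn}, splitting on the size of $\sigma_{\bx}(L)$ and invoking Corollary~\ref{Cor:size2} in the size-two case and the Laplacian version of Theorem~\ref{Thm:pstpaths} stated just above in the size-at-least-three case. The explicit cosine formula for $\bw_j$ from Lemma~\ref{Lem:pnevalL} drives all of the concrete computations.

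For pair PST, write $\bx=\be_u-\be_w$ and $\by=\be_v-\be_x$. Since $\bx^\top\ones=\by^\top\ones=0$, the constant eigenvector $\bw_0$ does not appear, so $0\notin\sigma_{\bx}(L)$. If $|\sigma_{\bx}(L)|=2$, Corollary~\ref{Cor:size2} gives $\bx=\bu_1+\bu_2$ and $\by=\bu_1-\bu_2$ for eigenvectors $\bu_1,\bu_2$ of distinct nonzero eigenvalues $\theta_j,\theta_k$; then $\bx+\by=\be_u+\be_v-\be_w-\be_x$, a vector with at most four nonzero entries, must be proportional to $\bw_j$. Using $(\bw_j)_\ell=\sqrt{2/n}\cos((2\ell-1)j\pi/(2n))$, a coordinate vanishes iff $(2\ell-1)j\equiv n\pmod{2n}$, and a number-theoretic count (tracking the residues of $2\ell-1$ modulo $2n/\gcd(j,2n)$) shows that $\bw_j$ has at least five nonzero entries whenever $n\geq 5$, ruling out this case. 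The remaining cases $n\in\{3,4\}$ are verified by direct computation: in $P_3$, $\be_1-\be_2$ and $\be_3-\be_2$ share the support $\{1,3\}$; in $P_4$, $\be_1-\be_2$ and $\be_4-\be_3$ share $\{2-\sqrt{2},\,2+\sqrt{2}\}$. The case $|\sigma_{\bx}(L)|\geq 3$ is then eliminated: case (1) of the Laplacian PST theorem requires $d=0$ (to kill the $\bw_0$ term), and each of the three resulting configurations is tested by comparing the trigonometric combination $a\bw_{2m}+b\bw_{j}+c\bw_{k}$ with $\be_u-\be_w$ coordinate by coordinate to produce a contradiction from the rigidity of having exactly two nonzero entries of values $+1$ and $-1$.

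For plus PST, set $\bx=\be_u+\be_w$; now $\bx^\top\ones=2\neq 0$, so $0\in\sigma_{\bx}(L)$ and the $\bw_0$ component is nonzero. If $|\sigma_{\bx}(L)|=2$, then $\sigma_{\bx}(L)=\{0,\theta_j\}$ and Corollary~\ref{Cor:size2} forces $\bx-\by=2\beta\bw_j$, so the same counting argument restricts $n$ to small values. In $P_4$, $\be_1+\be_4=\bw_0+\bw_2$ admits PST to $\bw_0-\bw_2=\be_2+\be_3$, a plus state. In $P_3$, the only candidate with $|\sigma_{\bx}(L)|=2$ is $\be_1+\be_3=\frac{2}{\sqrt{3}}\bw_0+\frac{2}{\sqrt{6}}\bw_2$, whose PST image $\frac{1}{3}[1,4,1]^\top$ is not a plus state, so plus PST fails. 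For $|\sigma_{\bx}(L)|\geq 3$, each configuration in the Laplacian PST theorem is ruled out by combining the parity of entries in the relevant $\bw_j$ with the algebraic independence of $\sqrt{2}$ and $\sqrt{3}$ over $\mathbb{Q}$.

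The hard part will be the $|\sigma_{\bx}(L)|\geq 3$ case, where one must certify that none of the three families of Laplacian-periodic configurations admits a pair or plus state as a representative. The argument should follow by showing that such a representation would force trigonometric identities of the form $a\cos\alpha+b\cos\beta+c\cos\gamma=0$ at too many coordinates; the quadratic irrationalities $\sqrt{2}$ and $\sqrt{3}$ governing the associated eigenvalues ensure the $\mathbb{Q}$-linear independence relations that deliver the contradiction.
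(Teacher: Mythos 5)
Your overall route is the same as the paper's (the paper proves this corollary precisely by adapting the proof of Corollary~\ref{Cor:pairpstpn} to $L$), but two of your concrete steps are wrong, and they happen to cancel, so the proof as written is unsound even though it lands on the correct statement. First, the counting claim that $\bw_j$ has at least five nonzero entries whenever $n\geq 5$ is false. For $n=5$ one has $\bw_1\propto\left[\cos(\pi/10),\,\cos(3\pi/10),\,0,\,-\cos(3\pi/10),\,-\cos(\pi/10)\right]^\top$, with only four nonzero entries (here it is the two distinct magnitudes, not the count, that rule out $\be_u+\be_v-\be_w-\be_x$). Worse, for $n=6$ one has $\bw_2\propto[1,0,-1,-1,0,1]^\top$: four nonzero entries of \emph{equal} magnitude, exactly the shape of $\be_1+\be_6-\be_3-\be_4$. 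This candidate survives any counting or magnitude argument; it must be excluded by the other half of Corollary~\ref{Cor:size2}, namely that $\bx-\by$ must also be a Laplacian eigenvector of $P_6$, which fails: for $\bx=\be_1-\be_3$, $\by=\be_6-\be_4$ one gets $L(\bx-\by)=[1,0,-3,3,0,-1]^\top$, not a multiple of $\bx-\by=[1,0,-1,1,0,-1]^\top$, and the other pairing of the four support vertices fails similarly.

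Second, your $P_4$ pair witness is misclassified, which breaks the architecture of the argument. Since $\bw_2=\frac{1}{2}[1,-1,-1,1]^\top$ with eigenvalue $2$, we have $(\be_1-\be_2)^\top\bw_2=1\neq 0$, so $\sigma_{\be_1-\be_2}(L)=\{2-\sqrt{2},\,2,\,2+\sqrt{2}\}$ has size \emph{three}, not two. The pair PST in $P_4$ (between $\be_1-\be_2$ and $\be_3-\be_4$ at time $\frac{\pi}{\sqrt{2}}$; see Remark~\ref{Rem:pnL}(2)) is therefore an instance of case (3) of the Laplacian path theorem with $m=1$ --- precisely the regime your proof claims to eliminate by a blanket ``rigidity contradiction.'' As written, your argument simultaneously asserts a nonexistent size-two witness in $P_4$ and denies the genuine size-three one. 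The repair is to run case (3) ($n=4m$) honestly: show that a pair-state representation $a\bw_{2m}+b\bw_m+c\bw_{3m}=\be_u-\be_w$, together with its PST image also being a pair state, forces $m=1$, and handle case (2) analogously. The plus-state half is in better shape: $\be_1+\be_4=\bw_0+\bw_2\mapsto\bw_0-\bw_2=\be_2+\be_3$ and your $P_3$ computation are both correct, and you can sharpen the size-$\geq 3$ analysis by observing that $\ones^\top(\be_u+\be_w)=2\neq 0$ forces $0\in\sigma_{\bx}(L)$, so only case (1) of the theorem (the only case involving $\bw_0$) can arise there, replacing the vaguer appeal to irrationality.
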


\begin{remark}
\label{Rem:pnL}
We make the following observations about $P_n$ for $n\in\{3,4,5\}$ relative to $L$.
 \begin{enumerate}
\item In $P_3$, $\{\be_1-\be_2,\be_3-\be_2\}$ has PST at $\frac{\pi}{2}$. Moreover, $\be_1+\be_2$ and $\be_3+\be_2$ are strongly cospectral and periodic with $\sigma_{\bx,\by}^+(L)=\{0,3\}$ and $\sigma_{\bx,\by}^-(L)=\{1\}$, but they do not admit PST as Corollary \ref{Cor:PSTcharcor}(3) does not hold.
\item In $P_4$, $\{\be_1+\be_4,\be_2+\be_3\}$ has PST at $\frac{\pi}{2}$, $\{\be_1-\be_2,\be_3-\be_4\}$ has PST at $\frac{\pi}{\sqrt{2}}$, and $\{\be_2-\be_3,\frac{1}{\sqrt{2}}(\be_1-\be_2+\be_3-\be_4)\}$ and $\{\be_1-\be_4,\frac{1}{\sqrt{2}}(\be_1+\be_2-\be_3-\be_4)\}$ have PST at $\frac{\pi}{2\sqrt{2}}$.
\item In $P_5$, $\{\be_1-\be_5,\frac{1}{\sqrt{5}}(\be_1+2\be_2-2\be_4-\be_5)\}$ and $\{\be_2-\be_4,\frac{1}{\sqrt{5}}(2\be_1-\be_2+\be_4-2\be_5)\}$ have PST, both at $\frac{\pi}{\sqrt{5}}$.
\end{enumerate}
\end{remark}

In \cite{Chen2020PairST}, Laplacian PST between pair states that form edges, also known as edge PST, was characterized for $P_n$. It turns out, $P_n$ admits Laplacian pair PST if and only if it admits Laplacian edge PST.

We end this section with a remark about the PST time between real pure states in paths.

\begin{remark}
\label{pstimepath}
Let $\tau_n$ be the least minimum PST time in $P_n$. Relative to $A$, we have $\tau_n=\frac{\pi}{4\cos(\frac{\pi}{n+1})}$, attained by $\bx=a\bz_1+b\bz_{n}$ and $\by=a\bz_1-b\bz_{n}$.
Relative to $L$, we have $\tau_n=\frac{\pi}{2\left(1-\cos\left(\frac{\left(n-1\right)\pi}{n}\right)\right)}$ attained by $\bx=a\bw_0+b\bw_{n-1}$ and $\by=a\bw_0-b\bw_{n-1}$.
In both cases, $\tau_n\rightarrow \frac{\pi}{4}$ as $n\rightarrow\infty$.
\end{remark}

\section{Cartesian product}\label{Sec:cartprod}

In this section, we use the Cartesian product to construct larger graphs that admit PST between real pure states. Let $G$ and $H$ be weighted graphs on $m$ and $n$ vertices, respectively. The Cartesian product of $G$ and $H$, denoted $G\square H$, is the graph with vertex set $V(G)\times V(H)$ such that $$M(G\square H)=M(G)\otimes I_n+I_m\otimes M(H),$$
where $M\in\{A,L\}$. Hence, $U_{M(G\square H)}(t)=U_{M(G)}(t)\otimes U_{M(H)}(t)$, from which we obtain $$U_{M(G\square H)}(t)(\bx\otimes \by)=U_{M(G)}(t)\bx\otimes U_{M(H)}(t)\by.$$
From the above equation, we get the following result which holds for $M\in\{A,L\}$.

\begin{theorem}
\label{Thm:cartprod}
Let $\bx_1,\by_1\in \R^m$ and $\bx_2,\by_2\in \R^n$ such that $\bx_1\neq \pm \by_1$. Then perfect state transfer occurs between $\bx_1\otimes \bx_2$ and $\by_1\otimes \by_2$ in $G\square H$ at time $\tau$ if and only if either
\begin{enumerate}
\item $\bx_2\neq \pm \by_2$, perfect state transfer occurs between $\bx_2$ and $\by_2$ in $H$, and perfect state transfer occurs between $\bx_1$ and $\by_1$ in $G$ both at time $\tau$; or
\item $\bx_2=\pm \by_2$ is periodic in $H$ and perfect state transfer occurs between $\bx_1$ and $\by_1$ in $G$ both at time $\tau$.
\end{enumerate}
\end{theorem}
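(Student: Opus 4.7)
The cornerstone is the multiplicative identity
\[
U_{M(G \square H)}(\tau)(\bx_1 \otimes \bx_2) = \bigl(U_{M(G)}(\tau)\bx_1\bigr) \otimes \bigl(U_{M(H)}(\tau)\bx_2\bigr)
\]
recorded immediately before the theorem, combined with the standard uniqueness of rank-one tensor decompositions: if $\bu_1 \otimes \bu_2 = \bv_1 \otimes \bv_2$ with all four vectors nonzero, then $\bu_1 = c\bv_1$ and $\bu_2 = c^{-1}\bv_2$ for some $c \in \C \setminus \{0\}$. This reduces the theorem to a clean dictionary between rank-one relations on $\R^{mn}$ and pairs of rank-one relations on $\R^m$ and $\R^n$.

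For the \emph{backward} direction, assume $U_{M(G)}(\tau)\bx_1 = \gamma_1 \by_1$ with $|\gamma_1|=1$. In case (1), PST in $H$ supplies $U_{M(H)}(\tau)\bx_2 = \gamma_2 \by_2$ with $|\gamma_2|=1$; in case (2), periodicity supplies $U_{M(H)}(\tau)\bx_2 = \gamma_2 \bx_2 = (\pm \gamma_2)\by_2$. Either way, the multiplicative identity gives $U_{M(G \square H)}(\tau)(\bx_1 \otimes \bx_2) = \gamma(\by_1 \otimes \by_2)$ for a unit $\gamma$. One must still check $\bx_1 \otimes \bx_2 \ne \pm(\by_1 \otimes \by_2)$: an equality with either sign, combined with tensor uniqueness and $\|\bx_i\|=\|\by_i\|$, would force $\bx_1 = \pm \by_1$, contradicting the hypothesis.

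For the \emph{forward} direction, rewrite $U_{M(G \square H)}(\tau)(\bx_1 \otimes \bx_2) = \gamma(\by_1 \otimes \by_2)$ via the factorization as $(U_{M(G)}(\tau)\bx_1) \otimes (U_{M(H)}(\tau)\bx_2) = \gamma(\by_1 \otimes \by_2)$. Rank-one uniqueness yields $\mu,\nu \in \C \setminus \{0\}$ with $\mu\nu = \gamma$, $U_{M(G)}(\tau)\bx_1 = \mu \by_1$, and $U_{M(H)}(\tau)\bx_2 = \nu \by_2$. Taking norms and invoking the paper's standing convention $\|\bx_i\|=\|\by_i\|$ forces $|\mu|=|\nu|=1$. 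Since $\bx_1 \ne \pm \by_1$, the first relation is PST between $\bx_1$ and $\by_1$ in $G$ at time $\tau$. For the second, split on cases: if $\bx_2 \ne \pm \by_2$, then it is PST between $\bx_2$ and $\by_2$ in $H$ at time $\tau$, yielding case (1); if $\bx_2 = \pm \by_2$, the relation becomes $U_{M(H)}(\tau)\bx_2 = (\pm\nu)\bx_2$, which is periodicity of $\bx_2$ at time $\tau$, placing us in case (2).

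The only delicate step is the norm-matching that upgrades the rank-one scalars $\mu, \nu$ from nonzero complex numbers to unit complex numbers; this is exactly where the standing convention $\|\bx\|=\|\by\|$ on PST/periodic pairs is used, and it is also what prevents a spurious third case where $\bx_1 \otimes \bx_2 = \pm(\by_1 \otimes \by_2)$. Everything else is bookkeeping, with the partition into cases (1) and (2) recording whether the $H$-factor is genuinely being transferred or merely returning to itself up to a phase.
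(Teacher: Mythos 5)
Your proof is correct and follows exactly the route the paper intends: the paper derives the theorem in one line from the factorization $U_{M(G\square H)}(t)(\bx\otimes\by)=U_{M(G)}(t)\bx\otimes U_{M(H)}(t)\by$, and you have simply supplied the details it leaves implicit, namely rank-one tensor uniqueness and the norm argument (via unitarity and the standing convention $\|\bx_i\|=\|\by_i\|$) that upgrades the scalars $\mu,\nu$ to unit modulus. Your observation that this norm-matching is what rules out the degenerate case $\bx_1\otimes\bx_2=\pm(\by_1\otimes\by_2)$ is a worthwhile addition, not a deviation.
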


\begin{corollary}
\label{Cor:cartconst}
Let $r,s\in\R\backslash\{0\}$ with $r=\pm s$.
If $H$ admits perfect state transfer between $\be_a+s\be_b$ and $\be_c+r\be_d$ at time $\tau$, where $\{a,b\}\neq \{c,d\}$ whenever $r=s$, then the following hold.
\begin{enumerate}
\item If $G$ admits perfect state transfer between vertex states $\be_u$ and $\be_v$ at time $\tau$, then $G\square H$ admits perfect state transfer between the pure states  $\be_u\otimes(\be_a+s\be_b)$ and $\be_v\otimes(\be_c+r\be_d)$ at time $\tau$.
\item If $G$ is periodic at a vertex states $\be_u$ at time $\tau$, then $G\square H$ admits perfect state transfer between the pure states  $\be_u\otimes(\be_a+s\be_b)$ and $\be_u\otimes(\be_c+r\be_d)$ at time $\tau$.
\end{enumerate}
\end{corollary}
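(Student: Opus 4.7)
The plan is to observe that both parts of the corollary reduce to a direct application of Theorem \ref{Thm:cartprod}, once we have verified the distinctness hypothesis $\bx_2 \neq \pm \by_2$ for the second tensor factor, where $\bx_2 := \be_a + s\be_b$ and $\by_2 := \be_c + r\be_d$.

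First I would dispatch the preliminary check that $\bx_2 \neq \pm \by_2$, since this is needed to invoke Theorem \ref{Thm:cartprod} in both cases. A short coordinate-by-coordinate inspection of the supports shows that when $r = s$, an equality $\bx_2 = \pm \by_2$ forces $\{a,b\} = \{c,d\}$ (and more specifically $(a,b)=(c,d)$ unless $s=\pm 1$), which is ruled out by hypothesis; when $r = -s$, the sign pattern on $\{a,b\} \cup \{c,d\}$ makes either equality impossible for any choice of vertices.

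For part (1), set $\bx_1 = \be_u$ and $\by_1 = \be_v$; these are distinct because $G$ admits vertex PST between $\be_u$ and $\be_v$, so in particular $u \neq v$ and $\bx_1 \neq \pm \by_1$. Both PST assumptions hold at the common time $\tau$ and both factor pairs satisfy the required distinctness condition, so Theorem \ref{Thm:cartprod}(1) applies directly and yields PST in $G \square H$ between $\be_u \otimes (\be_a + s\be_b)$ and $\be_v \otimes (\be_c + r\be_d)$ at time $\tau$.

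For part (2), I would invoke the commutativity of the Cartesian product: there is a natural isomorphism $G \square H \cong H \square G$ identifying the pure state $\be_u \otimes \bx_2$ with $\bx_2 \otimes \be_u$ and intertwining the Hamiltonians. Working in $H \square G$, take the first tensor factor to be $(\bx_2, \by_2)$, which admits PST in $H$ at time $\tau$ with $\bx_2 \neq \pm \by_2$, and take the second tensor factor to be $\be_u = \be_u$, which is periodic in $G$ at time $\tau$ by hypothesis. Theorem \ref{Thm:cartprod}(2) then delivers PST between $\bx_2 \otimes \be_u$ and $\by_2 \otimes \be_u$ in $H \square G$ at time $\tau$, which pulls back along the isomorphism to the desired PST in $G \square H$. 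The only subtlety, and hence the main thing to keep track of, is the role reversal in part (2): Theorem \ref{Thm:cartprod}(2) must be applied with $G$ and $H$ interchanged rather than in its stated form.
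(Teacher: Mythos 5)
Your proof is correct and takes essentially the same route as the paper, which states the corollary as an immediate consequence of Theorem \ref{Thm:cartprod} (no separate proof is given): part (1) is case (1) of the theorem, and part (2) is case (2) after the harmless identification $G \square H \cong H \square G$, since the theorem as stated requires the \emph{first} factor pair to be distinct. Your preliminary verification that $\be_a+s\be_b \neq \pm(\be_c+r\be_d)$ under the stated hypotheses, and your explicit handling of the role reversal in part (2), are precisely the details that make the application of the theorem legitimate.
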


Since $\be_u\otimes(\be_a+s\be_b)$ is an $s$-pair state in $G\square H$, the above corollary with $r=s$ can be used to construct infinite families of graphs with $s$-pair PST.

\begin{example}
Consider the hypercube $Q_d$ of dimension $d\ge 1$, which is known to admit PST between any pair of antipodal vertices $\be_u$ and $\be_{v}$ at time $\frac{\pi}{2}$. The following hold for all $d\geq 1$.
\begin{itemize}
\item By Corollary \ref{Cor:cartconst}(1) and Example \ref{Ex:c12}, $Q_d\square C_8$ admits PST between $\{\be_u\otimes(\be_0+\be_4),\be_v\otimes(\be_2+\be_6)\}$, and $Q_d\square C_{12}$ admits PST between $\{\be_u\otimes(\be_0+\be_4+\be_{8}),\be_v\otimes(\be_2+\be_6+\be_{10})\}$ at $\frac{\pi}{2}$.
\item By Corollary \ref{Cor:cartconst} and \cite[Theorem 6.5(iv-vi)]{kim2024generalization}, $Q_d\square C_6$ admits PST between $\{\be_u\otimes(\be_0-\be_2),\be_v\otimes(\be_3-\be_5)\}$ at $\frac{\pi}{2}$, and between $\{\be_u\otimes(\be_0+2\be_2),\be_u\otimes(\be_0+2\be_4)\}$ and $\{\be_u\otimes(\be_0+\frac{1}{2}\be_2),\be_u\otimes(\be_4+\frac{1}{2}\be_2)\}$ at $\pi$.
\item By Corollary \ref{Cor:cartconst}(1) and Remark \ref{Rem:pnL}(1,2), we get PST in $Q_d\square P_3$ and $Q_d\square P_4$  relative to $L$ between the pairs $\{\be_u\otimes(\be_1-\be_2),\be_v\otimes(\be_3-\be_2)\}$ and $\{\be_u\otimes(\be_1+\be_4),\be_v\otimes(\be_2+\be_3)\}$ at $\frac{\pi}{2}$, respectively.
\end{itemize}
\end{example}

\begin{example}
Let $M=A$ and consider $P_3^{\square n}$, the Cartesian product of $n\geq 1$ copies of $P_3$. This graph admits PST at time $\frac{\pi}{\sqrt{2}}$ between any pair of vertices $\be_u$ and $\be_{v}$ at distance $2n$. By Corollary \ref{Cor:cartconst}(1) and Example \ref{Ex:pnex}(1), $(P_3^{\square n})\square P_7$ admits PST between $\{\be_u\otimes(\be_1-\be_7),\be_v\otimes(\be_3-\be_5)\}$ at $\frac{\pi}{\sqrt{2}}$ for all $n\geq 1$. Moreover, by Corollary \ref{Cor:cartconst}(1) with \cite[Theorem 6.5(vii)]{kim2024generalization}, $(P_3^{\square n})\square C_8$ admits PST between $\{\be_u\otimes(\be_0-\be_2),\be_v\otimes(\be_4-\be_6)\}$ at $\frac{\pi}{\sqrt{2}}$ for all $n\geq 1$.
\end{example}

\section{Joins}\label{Sec:joins}

Let $G$ and $H$ be weighted graphs on $m$ and $n$ vertices, respectively. The \textit{join} of $G$ and $H$, denoted $G\vee H$, is obtained from taking a copy of $G$ and a copy of $H$ and adding all edges between $G$ and $H$ with weight one. Throughout, we assume that $G$ and $H$ are $k$- and $\ell$-regular graphs, respectively whenever we deal with $M=A$.

Let $\lambda$ and $\mu$ be nonzero eigenvalues of $L(G)$ and $L(H)$, respectively. From \cite[Equation 33]{Alvir2016}, the transition matrix of $G\vee H$ relative to $L$ is given by
\begin{equation}
\label{Eq:transmatrixL}
\begin{split}
U_{L}(t)&=\frac{1}{m+n}J+\frac{e^{it(m+n)}}{mn(m+n)}\left[ \begin{array}{ccccc} n^2J&-mnJ \\ -mnJ&m^2J\end{array} \right]+\sum_{\lambda>0}e^{it(\lambda+n)}\left[\begin{array}{cc} E_{\lambda}&\zero \\ \zero&\zero\end{array} \right]+\sum_{\mu>0}e^{it(\mu+m)}\left[\begin{array}{cc} \zero&\zero \\ \zero&F_{\mu}\end{array} \right], 
\end{split}
\end{equation}
whenever $G$ and $H$ are connected. If $G$ (respectively, $H$) is disconnected, then we add the term $e^{itn}\left[\begin{array}{cc} E_{0}-\frac{1}{m}J&\zero \\ \zero&\zero\end{array}\right]$ (respectively, $e^{itm}\left[\begin{array}{cc} \zero&\zero \\ \zero&F_{0}-\frac{1}{n}J\end{array} \right]$) in the third (respectively, fourth) summand in (\ref{Eq:transmatrixL}).

Suppose further that $G$ and $H$ are $k$- and $\ell$-regular graphs, respectively. Let $\lambda<k$ and $\mu<\ell$ be eigenvalues of $A(G)$ and $A(H)$ respectively. Let $\lambda^{\pm }=\frac{1}{2}(k+\ell\pm \sqrt{\Delta})$, where $\Delta=(k-\ell)^2+4mn$. From \cite[Equation 12.2.1]{Coutinho2021}, the transition matrix of $G\vee H$ relative to $A$ is given by 
\begin{equation}
\label{Eq:transmatrixA}
\begin{split}
U_{A}(t)&=\frac{e^{it\lambda^+}}{m\sqrt{\Delta}(k-\lambda^-)}\bu\bu^\top+\frac{e^{it\lambda^-}}{m\sqrt{\Delta}(\lambda^+-k)}\bv\bv^\top+\sum_{\lambda<k}e^{it\lambda}\left[\begin{array}{cc} E_{\lambda}&\zero \\ \zero&\zero\end{array} \right]+\sum_{\mu<\ell}e^{it\mu}\left[\begin{array}{cc} \zero&\zero \\ \zero&F_{\mu}\end{array} \right],      
\end{split}
\end{equation}
whenever $G$ and $H$ are connected, where $\bu=\left[ \begin{array}{ccccc} (k-\lambda^{-})\ones_{m} \\ m\ones_{n}\end{array} \right]$ and $\bv=\left[ \begin{array}{ccccc} (k-\lambda^{+})\ones_{m} \\ m\ones_{n}\end{array} \right]$. If $G$ (respectively, $H$) is disconnected, then we include the term $e^{itk}\left[\begin{array}{cc} E_{0}-\frac{1}{m}J&\zero \\ \zero&\zero\end{array} \right]$ (respectively, $e^{it\ell}\left[\begin{array}{cc} \zero&\zero \\ \zero&F_{0}-\frac{1}{n}J\end{array} \right]$) in the third (respectively, fourth) summand in equation (\ref{Eq:transmatrixA}). For more about quantum walks on join graphs, see \cite{kirkland2023quantum}.

The join operation can be used to construct larger graphs that admit PST between real pure states.

\begin{theorem}
Let $\bx_1,\by_1\in\R^{m}$ and $\bx_2,\by_2\in\R^{n}$ be unit vectors such that $\ones^\top\bx_1=\ones^\top\bx_2=0$. If $G$ and $H$ are connected, then the following hold relative to $M\in\{A,L\}$.
\begin{enumerate}
\item $G$ admits perfect state transfer between $\bx_1$ and $\by_1$ if and only if $G\vee H$ admits perfect state transfer between $\left[ \begin{array}{cc} \bx_1 \\ \zero\end{array} \right]$ and $\left[ \begin{array}{cc} \by_1 \\ \zero\end{array} \right]$ at the same time.
\item Suppose $G$ admits perfect state transfer between $\bx_1$ and $\by_1$ and $H$ admits perfect state transfer between $\bx_2$ and $\by_2$ both at time $\tau$. If $\tau(\lambda-\theta+\delta(n-m))\equiv 0$ (mod $2\pi$) for some $\lambda\in\sigma_{\bx_1,\by_1}^+(M)$, $\theta\in\sigma_{\bx_2,\by_2}^+(M)$ and $\delta\in\{0,1\}$, then $G\vee H$ admits perfect state transfer between $\left[ \begin{array}{cc} \bx_1 \\ \bx_2\end{array} \right]$ and $\left[ \begin{array}{cc} \by_1 \\ \by_2\end{array} \right]$ at time $\tau$ relative to $M=L$ whenever $\delta=1$ and relative to $M=A$ whenever $\delta=0$.
\end{enumerate}
\end{theorem}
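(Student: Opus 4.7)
The plan is to reduce both parts to direct computations by substituting the explicit formulas (\ref{Eq:transmatrixL}) and (\ref{Eq:transmatrixA}) into $U_{M(G\vee H)}(\tau)$ acting on block vectors. The crucial simplification is that the hypotheses $\ones^\top\bx_1=\ones^\top\bx_2=0$ force every $J$-block in (\ref{Eq:transmatrixL}) and every rank-one $\bu\bu^\top, \bv\bv^\top$ contribution in (\ref{Eq:transmatrixA}) to vanish on the block vectors $\left[\begin{array}{c}\bx_1\\\zero\end{array}\right]$ and $\left[\begin{array}{c}\zero\\\bx_2\end{array}\right]$.

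First I would establish the identities
\[
U_{L(G\vee H)}(t)\left[\begin{array}{c}\bx_1\\\zero\end{array}\right]=e^{itn}\left[\begin{array}{c}U_{L(G)}(t)\bx_1\\\zero\end{array}\right],\qquad U_{A(G\vee H)}(t)\left[\begin{array}{c}\bx_1\\\zero\end{array}\right]=\left[\begin{array}{c}U_{A(G)}(t)\bx_1\\\zero\end{array}\right],
\]
together with the symmetric versions for $\left[\begin{array}{c}\zero\\\bx_2\end{array}\right]$. These drop out once one notices that the missing $\lambda=0$ term for $L(G)$ satisfies $E_0\bx_1=\tfrac{1}{m}\ones(\ones^\top\bx_1)=\zero$, so the partial sum $\sum_{\lambda>0}e^{it\lambda}E_\lambda\bx_1$ already equals $U_{L(G)}(t)\bx_1$; and analogously for $A(G)$, since $\ones_m$ is the Perron eigenvector with eigenvalue $k$, one has $E_k\bx_1=\zero$.

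Part (1) is then immediate: applied at $t=\tau$, these identities translate the PST equation $U_{M(G)}(\tau)\bx_1=\gamma\by_1$ directly into a PST equation on $G\vee H$ between $\left[\begin{array}{c}\bx_1\\\zero\end{array}\right]$ and $\left[\begin{array}{c}\by_1\\\zero\end{array}\right]$, with new phase $\gamma e^{i\tau n}$ for $M=L$ and $\gamma$ for $M=A$; the converse reads off the top block of the same identity, noting that PST on $G\vee H$ forces $\by_1\perp\ones_m$ via the strong cospectrality condition in Lemma~\ref{Lem:pstnec}(1). For part (2), I use linearity to write
\[
U_{M(G\vee H)}(\tau)\left[\begin{array}{c}\bx_1\\\bx_2\end{array}\right]=U_{M(G\vee H)}(\tau)\left[\begin{array}{c}\bx_1\\\zero\end{array}\right]+U_{M(G\vee H)}(\tau)\left[\begin{array}{c}\zero\\\bx_2\end{array}\right]
\]
and apply the identities to each summand. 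The auxiliary fact I need, implicit in the proof of Theorem~\ref{Thm:PSTchar}, is that PST between $\bx$ and $\by$ at time $\tau$ satisfies $U(\tau)\bx=e^{i\tau\lambda}\by$ for any $\lambda\in\sigma_{\bx,\by}^+(M)$, since the contributions of $\sigma^+$ and $\sigma^-$ then become $+1$ and $-1$ multiples of $e^{i\tau\lambda}$. Substituting with $\lambda\in\sigma^+_{\bx_1,\by_1}(M)$ and $\theta\in\sigma^+_{\bx_2,\by_2}(M)$ yields, in the Laplacian case,
\[
U_{L(G\vee H)}(\tau)\left[\begin{array}{c}\bx_1\\\bx_2\end{array}\right]=e^{i\tau(\lambda+n)}\left[\begin{array}{c}\by_1\\e^{i\tau(\theta+m-\lambda-n)}\by_2\end{array}\right],
\]
with the parallel adjacency expression obtained by suppressing the $e^{i\tau n}$ and $e^{i\tau m}$ factors. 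The right-hand side is a unit scalar multiple of $\left[\begin{array}{c}\by_1\\\by_2\end{array}\right]$ exactly when $\tau(\lambda-\theta+(n-m))\in 2\pi\mathbb{Z}$ (for $M=L$, i.e.\ $\delta=1$) or $\tau(\lambda-\theta)\in 2\pi\mathbb{Z}$ (for $M=A$, i.e.\ $\delta=0$), which is the stated hypothesis.

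The argument is essentially phase bookkeeping, so the main obstacle I anticipate is establishing the auxiliary identity $U(\tau)\bx=e^{i\tau\lambda}\by$ for $\lambda\in\sigma_{\bx,\by}^+(M)$ and carefully tracking which join-specific shift ($e^{i\tau n}$, $e^{i\tau m}$, or none) each term in (\ref{Eq:transmatrixL}) and (\ref{Eq:transmatrixA}) contributes, in each of the two Hamiltonian cases.
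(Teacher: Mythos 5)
Your proposal is correct and follows essentially the same route as the paper's proof: both derive the block identities $U_{L}(t)\left[\begin{array}{c}\bx_1\\ \zero\end{array}\right]=\left[\begin{array}{c}e^{itn}U_G(t)\bx_1\\ \zero\end{array}\right]$ and its adjacency analogue from equations (\ref{Eq:transmatrixL}) and (\ref{Eq:transmatrixA}) using $\ones^\top\bx_1=\ones^\top\bx_2=0$, obtain part (1) immediately, and prove part (2) by linearity plus the phase identification $\gamma=e^{i\tau\lambda}$ for $\lambda\in\sigma_{\bx,\by}^+(M)$, which the paper likewise uses (implicitly, and which follows by applying $E_h$ for $\lambda_h\in\sigma^+$ to the PST equation). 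Your extra remarks, such as $\by_1\perp\ones$ being forced in the converse of (1), are harmless and consistent with the paper's argument.
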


\begin{proof}
Let $U_G(t)$ denote the transition matrix of $G$ relative to $L$. As $\ones^\top\bx_1=0$, equation (\ref{Eq:transmatrixL}) yields
\begin{center}
$U_{L}(t)\left[\begin{array}{cc} \bx_1 \\ \zero\end{array} \right]=\left[\begin{array}{cc} e^{it n}U_G(t)\bx_1 \\ \zero\end{array} \right]\quad$ and $\quad U_{A}(t)\left[\begin{array}{cc} \bx_1 \\ \zero\end{array} \right]=\left[\begin{array}{cc} U_G(t)\bx_1 \\ \zero\end{array} \right]$.
\end{center}
From these equations, (1) is straightforward. We now prove (2). Since $\ones^\top\bx_1=\ones^\top\bx_2=0$, the same argument yields \begin{equation}
\label{Eq:Ljoin}
U_{L}(t)\left[\begin{array}{cc} \bx_1 \\ \bx_2\end{array} \right]=\left[\begin{array}{cc} e^{it n}U_G(t)\bx_1 \\ \zero\end{array} \right]+\left[\begin{array}{cc} \zero \\ e^{it m}U_H(t)\bx_2\end{array} \right]=\left[\begin{array}{cc} e^{it n}U_G(t)\bx_1 \\ e^{it m}U_H(t)\bx_2\end{array} \right].
\end{equation}
As PST occurs between $\bx_j$ and $\by_j$ for $j\in\{1,2\}$ at $\tau$, we have $U_G(\tau)\bx_1=\gamma_1\by_1$ and $U_H(\tau)\bx_2=\gamma_2\by_2$, where $\gamma_1=e^{i\tau\lambda}$ and $\gamma_2=e^{i\tau\theta}$ for all $\lambda\in\sigma_{\bx_1,\by_1}^+(L)$ and $\theta\in\sigma_{\bx_2,\by_2}^+(L)$. Thus, if $\tau(\lambda-\theta+n-m)\equiv 0$ (mod $2\pi$) for some $\lambda\in\sigma_{\bx_1,\by_1}^+(M)$ and $\theta\in\sigma_{\bx_2,\by_2}^+(M)$, then $\tau(n+\lambda)\equiv \tau(m+\theta)$ (mod $2\pi$), and so $e^{i\tau (n+\lambda)}=e^{i\tau (m+\theta)}$. Thus, $U_{L}(\tau)\left[\begin{array}{cc} \bx_1 \\ \bx_2\end{array} \right]=\left[\begin{array}{cc} e^{i\tau n}U_G(\tau)\bx_1 \\ e^{i\tau m}U_H(\tau)\bx_2\end{array} \right]=\left[\begin{array}{cc} e^{i\tau (n+\lambda)}\by_1 \\ e^{i\tau (m+\theta)}\by_2\end{array} \right]=e^{i\tau(n+\lambda)}\left[\begin{array}{cc}\by_1 \\ \by_2\end{array} \right]$ by equation (\ref{Eq:Ljoin}). This proves that PST occurs between $\left[ \begin{array}{cc} \bx_1 \\ \bx_2\end{array} \right]$ and $\left[ \begin{array}{cc} \by_1 \\ \by_2\end{array} \right]$ in $G\vee H$. The same argument applies to $M=A$ except that $e^{itn}$ and $e^{itm}$ in (\ref{Eq:Ljoin}) are both absent.
\end{proof}

\section{Complete bipartite graphs}\label{Sec:cbp}

In this section, we characterize PST between real pure states in the complete bipartite graph $K_{m,n}$. We only focus on the case when $|\sigma_{\bx}(M)|\geq 3$, starting with the adjacency case.

\begin{theorem}
\label{Thm:kmnpst}
Let $\bx=\begin{bmatrix}
\bx_1 \\ \bx_2
\end{bmatrix},\by\in\R^{m+n}$, where $\bx_1\in\R^m$ and $|\sigma_{\bx}(A)|\geq 3$. Let $\bz$ and $\bv^{\pm}=\begin{bmatrix}\sqrt{n}\ones_m \\ \pm \sqrt{m}\ones_n\end{bmatrix}$ be eigenvectors for $A(K_{m,n})$ associated with 0 and $\pm \sqrt{mn}$ respectively. Then $\bx$ and $\by$ admit adjacency perfect state transfer in $K_{m,n}$ if and only if $\bx\in\operatorname{span}\{\bv^+,\bv^-,\bz\}$, $\bx\notin\operatorname{span}\mathcal{U}$ for any two-subset $\mathcal{U}$ of $\{\bv^+,\bv^-,\bz\}$ and $\by=\bx-2\begin{bmatrix}
\frac{1}{m}(\ones^{\top}\bx_1)\ones \\ \frac{1}{n}(\ones^{\top}\bx_2)\ones
\end{bmatrix}$. Moreover, the minimum PST time is $\frac{\pi}{\sqrt{mn}}$.

\end{theorem}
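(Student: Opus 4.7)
The plan is to exploit the particularly simple adjacency spectrum of $K_{m,n}$, which consists of just the three eigenvalues $\pm\sqrt{mn}$ (each simple, with eigenvectors $\bv^{\pm}$) and $0$ (of multiplicity $m+n-2$). The hypothesis $|\sigma_{\bx}(A)|\ge 3$ then immediately forces $\sigma_{\bx}(A)=\{\sqrt{mn},-\sqrt{mn},0\}$, so $E_{\sqrt{mn}}\bx$, $E_{-\sqrt{mn}}\bx$ and $E_{0}\bx$ are all nonzero. Choosing $\bz$ proportional to $E_{0}\bx$ places $\bx$ in $\operatorname{span}\{\bv^+,\bv^-,\bz\}$ but not in any two-element subspan, which is the geometric condition in the statement.

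Next I would invoke Corollary~\ref{Cor:PSTcharcor} on the concrete support $\{\sqrt{mn},-\sqrt{mn},0\}$, which is closed under algebraic conjugation. Condition (2) is immediate upon taking $a=0$ and letting $\Delta$ be the square-free part of $mn$, and the only relevant ratio is $(\sqrt{mn}-0)/(\sqrt{mn}-(-\sqrt{mn}))=1/2$, which pins down the 2-adic valuations that appear in condition (3). A short case analysis on the sign partition rules out every possibility except the one pairing $\{\sqrt{mn},-\sqrt{mn}\}$ and $\{0\}$ in opposite parts of $\sigma_{\bx,\by}^{\pm}(A)$; the two remaining partitions differ only by the symmetry $\by\leftrightarrow -\by$. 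Strong cospectrality under this pattern forces $\by = \bx - 2(E_{\sqrt{mn}}\bx+E_{-\sqrt{mn}}\bx)$ up to sign, and using $E_{\pm\sqrt{mn}}=\tfrac{1}{2mn}\bv^{\pm}(\bv^{\pm})^{\top}$ a blockwise computation yields
\[ E_{\sqrt{mn}}\bx + E_{-\sqrt{mn}}\bx \;=\; \begin{bmatrix}\tfrac{\ones^{\top}\bx_{1}}{m}\ones_{m} \\[2pt] \tfrac{\ones^{\top}\bx_{2}}{n}\ones_{n}\end{bmatrix}, \]
which recovers the formula for $\by$ in the statement.

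For the converse and for the minimum PST time, I would use the spectral decomposition of $U_{A}$ directly: since $U_{A}(t) = e^{it\sqrt{mn}}E_{\sqrt{mn}} + e^{-it\sqrt{mn}}E_{-\sqrt{mn}} + E_{0}$, at $t=\pi/\sqrt{mn}$ one has $U_{A}(\pi/\sqrt{mn}) = 2E_{0}-I$, so $U_{A}(\pi/\sqrt{mn})\bx = \bx - 2(E_{\sqrt{mn}}\bx+E_{-\sqrt{mn}}\bx)$ for every $\bx$, matching the stated $\by$ with phase factor $\gamma=1$. The only subtle point — more bookkeeping than obstacle — is tracking which of the two valid sign partitions corresponds to $\gamma=1$ so that the formula is matched with the correct overall sign; the stated expression for $\by$ is the one sent to by $U_{A}(\pi/\sqrt{mn})$ directly, rather than its negative.
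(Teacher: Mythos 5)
Your proposal is correct, and at its core it uses the same object as the paper: the explicit spectral decomposition of $A(K_{m,n})$ with $E_0=\begin{bmatrix} I_m-\frac{1}{m}J&0\\0&I_n-\frac{1}{n}J\end{bmatrix}$ and $E_{\pm\sqrt{mn}}=\frac{1}{2mn}\bv^{\pm}(\bv^{\pm})^\top$, so that $U_A(\pi/\sqrt{mn})=2E_0-I$ and the stated $\by$ is exactly $U_A(\pi/\sqrt{mn})\bx$ with $\gamma=1$. Where you genuinely diverge is in the necessity direction: the paper's proof is a one-liner --- it writes down $U_A(t)$ via the join formula for $O_m\vee O_n$ and simply ``takes $\by=U_A(\frac{\pi}{\sqrt{mn}})\bx$,'' leaving implicit both why PST is forced to occur at this time and why $\by$ is unique (facts that rest on Lemma \ref{Lem:pstnec} and the periodicity machinery). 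You instead run the general characterization of Corollary \ref{Cor:PSTcharcor} on the support $\{0,\pm\sqrt{mn}\}$ and carry out the $\nu_2$-valuation case analysis on the sign partitions, correctly concluding that only the partition placing $\sqrt{mn}$ and $-\sqrt{mn}$ in the same part survives (with the other two partitions failing conditions (2a)/(2b) of Theorem \ref{Thm:PSTchar}, since the relevant denominators are $c$ and $2c$). This buys a self-contained derivation that $\by$ is forced up to sign and that $\pi/\sqrt{mn}$ is the \emph{minimum} PST time, at the cost of a case analysis the paper's terse proof sidesteps; you also correctly handle the sign bookkeeping ($\sigma^+_{\bx,\by}=\{0\}$, $\sigma^-_{\bx,\by}=\{\pm\sqrt{mn}\}$ for the stated $\by$, versus its negative), a point the paper glosses over entirely. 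The one thing stated slightly loosely in your write-up --- minimality of the time --- is in fact delivered by the corollary you invoke, since $g\sqrt{\Delta}=\gcd(c,2c)\sqrt{\Delta}=\sqrt{mn}$, so no gap results.
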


\begin{proof}
Since $K_{m,n}=O_m\vee O_n$,
(\ref{Eq:transmatrixA}) yields the following spectral decomposition for $A(K_{m,n})$:
\begin{equation}
\label{Eq:completebip}
U_A(t)=\begin{bmatrix}
I_m-\frac{1}{m}J&0 \\ 0&I_n-\frac{1}{n}J
\end{bmatrix}+\frac{e^{it\sqrt{mn}}}{2mn}\begin{bmatrix}
nJ&\sqrt{mn}J \\ \sqrt{mn}J&mJ
\end{bmatrix}+\frac{e^{-it\sqrt{mn}}}{2mn}\begin{bmatrix}
nJ&-\sqrt{mn}J \\ -\sqrt{mn}J&mJ
\end{bmatrix}.
\end{equation}
Taking $\by=U_A(\frac{\pi}{\sqrt{mn}})\bx$ yields the desired conclusion.
\end{proof}

\begin{corollary}
\label{Cor:pairpstkn}
Pair perfect state transfer occurs in $K_{m,n}$ relative to $A$ if and only if either
\begin{enumerate}
\item $(m,n)\in\{(1,2),(2,1)\}$, between $\be_u-\be_w$ and $\be_u-\be_x$, where $u$ is a degree two vertex.
\item $m=n=2$, between $\be_u-\be_w$ and $\be_v-\be_x$, where $\{u,w\}$ and $\{v,x\}$ are non-incident edges.  
\end{enumerate} 
\end{corollary}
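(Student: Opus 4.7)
The plan is to apply Theorem \ref{Thm:kmnpst} (when $|\sigma_{\bx}(A)| = 3$) or Corollary \ref{Cor:size2} (when $|\sigma_{\bx}(A)| = 2$) to a pair state $\bx = \be_u - \be_w$ in $K_{m,n}$, and then check when the forced partner $\by$ is itself a (possibly negated) pair state.

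First I would dispose of the case where $u$ and $w$ lie in the same bipartition part. Then $\ones_m^\top \bx_1 = \ones_n^\top \bx_2 = 0$, so $\bx$ is orthogonal to both $\bv^+$ and $\bv^-$; hence $\bx$ lies in the kernel of $A(K_{m,n})$ and is a fixed state, which precludes PST. So I may assume $u \in X$ and $w \in Y$, with $|X| = m$ and $|Y| = n$. Using the spectral decomposition in (\ref{Eq:completebip}), I would compute $E_0\bx$, $E_+\bx$, and $E_-\bx$ explicitly. A short computation shows that $E_+\bx = \zero$ iff $m = n$, that $E_0\bx = \zero$ iff $m = n = 1$, and that $E_-\bx$ is always nonzero. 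Hence $\bx$ is a fixed state when $m = n = 1$; otherwise $|\sigma_{\bx}(A)| = 2$ when $m = n \geq 2$, and $|\sigma_{\bx}(A)| = 3$ when $m \neq n$.

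In both remaining regimes the unique PST partner is
\begin{equation*}
\by = \bx - 2\begin{bmatrix} \tfrac{1}{m}\ones_m \\ -\tfrac{1}{n}\ones_n \end{bmatrix}.
\end{equation*}
In the size-three case this follows directly from Theorem \ref{Thm:kmnpst}, while in the size-two case it arises from $\by = E_0\bx - E_-\bx = \bx - 2E_-\bx$ via Corollary \ref{Cor:size2} (the two formulas agree when $m = n$). For $\by$ to be a (possibly negated) pair state, its entries must vanish except at exactly two vertices, where they equal $\pm 1$. The $\by$-entry at any $v \in X \setminus \{u\}$ equals $-2/m$ and at any $x \in Y \setminus \{w\}$ equals $2/n$, which forces $|X \setminus \{u\}| \leq 1$ and $|Y \setminus \{w\}| \leq 1$; equivalently, $m \leq 2$ and $n \leq 2$. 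A direct case check on $(m,n) \in \{1,2\}^2$, excluding the fixed-state case $(1,1)$, yields $\by$ with $u$ as a shared degree-two vertex when $\{m,n\} = \{1,2\}$, and $\by$ supported on the edge non-incident to $\{u,w\}$ when $m = n = 2$, matching the two families in the statement. The only real obstacle is the careful bookkeeping of the four distinct vertex types ($u$, $w$, generic $v \in X \setminus \{u\}$, and generic $x \in Y \setminus \{w\}$) when reading off the entries of $\by$; the converses are immediate verifications.
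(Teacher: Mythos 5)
Your proposal is correct and follows essentially the same route as the paper: the paper also dismisses the same-part case as a fixed state and then computes $U_A(\pi/\sqrt{mn})(\be_u-\be_w)=\bx-2\begin{bmatrix}\tfrac{1}{m}\ones\\ -\tfrac{1}{n}\ones\end{bmatrix}$ directly from equation (\ref{Eq:completebip}), reading off conditions (1) and (2) exactly as you do. Your only (harmless) elaboration is the explicit eigenvalue-support bookkeeping separating $m=n$ (support size two, via Corollary \ref{Cor:size2}) from $m\neq n$ (support size three, via Theorem \ref{Thm:kmnpst}), which the paper bypasses by working with the transition matrix itself; this is in fact slightly tidier, since Theorem \ref{Thm:kmnpst} as stated assumes $|\sigma_{\bx}(A)|\geq 3$ and so does not literally cover the $K_{2,2}$ case.
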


\begin{proof}
Note that $\be_u-\be_w$ is fixed whenever $u$ and $w$ are non-adjacent. Now, suppose $u$ and $w$ are adjacent. Then equation (\ref{Eq:completebip}) yields $U_A(t)(\be_u-\be_w)=\begin{bmatrix}\be_u-\frac{2}{m} \ones\\ -(\be_w-\frac{2}{n}\ones)\end{bmatrix}$, and so (1) and (2) are immediate.
\end{proof}

\begin{corollary}
\label{Cor:pluspstkn}
Plus perfect state transfer occurs in $K_{m,n}$ relative to $A$ if and only if either (i) one of the two conditions in Corollary \ref{Cor:pairpstkn} hold with the pair states turned into plus states, or (ii) $m=4$ or $n=4$, between $\be_u+\be_w$ and $\be_v+\be_x$, where $\{u,w,v,x\}$ is a bipartition of size four.
\end{corollary}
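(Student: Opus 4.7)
The plan is to mirror the proof of Corollary~\ref{Cor:pairpstkn}, applying equation~(\ref{Eq:completebip}) together with Theorem~\ref{Thm:kmnpst} or Corollary~\ref{Cor:size2} depending on $|\sigma_{\bx}(A)|$. Given a plus state $\bx = \be_u+\be_w$, I would first compute $E_0\bx$ and $E_{\pm\sqrt{mn}}\bx$ directly from~(\ref{Eq:completebip}) to pin down $\sigma_{\bx}(A)\subseteq\{0,\pm\sqrt{mn}\}$, then split the analysis according to whether $u$ and $w$ lie in the same part or in different parts of the bipartition, and further according to whether $|\sigma_{\bx}(A)|$ equals $2$ or $3$.

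Suppose first that $u$ and $w$ lie in different parts. A direct computation shows that $E_0\bx=\zero$ only when $m=n=1$ (in which case $\bx$ is a fixed state), while $E_{-\sqrt{mn}}\bx=\zero$ precisely when $m=n$. For $m=n\ge 2$ we have $|\sigma_{\bx}(A)|=2$, and Corollary~\ref{Cor:size2} singles out the unique candidate partner $\by = E_{\sqrt{mn}}\bx - E_0\bx = (2/n)\ones_{m+n}-\bx$; entrywise matching with $\pm(\be_v+\be_x)$ forces $n=2$, yielding the plus analogue of Corollary~\ref{Cor:pairpstkn}(2). For $m\ne n$, $|\sigma_{\bx}(A)|=3$ and Theorem~\ref{Thm:kmnpst} yields $\by=\bx-2\begin{bmatrix}(1/m)\ones_m\\(1/n)\ones_n\end{bmatrix}$; entrywise matching with $\pm(\be_v+\be_x)$ forces $(m,n)\in\{(1,2),(2,1)\}$, yielding the plus analogue of Corollary~\ref{Cor:pairpstkn}(1). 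Together these recover condition~(i).

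Suppose next that $u$ and $w$ lie in the same part, which we may take WLOG to be part~$1$ of size $m\ge 2$. Then $E_{\pm\sqrt{mn}}\bx\ne\zero$ always, while $E_0\bx=\zero$ precisely when $m=2$ and $\{u,w\}$ is all of part~$1$. For $m\ge 3$, Theorem~\ref{Thm:kmnpst} gives $\by=\begin{bmatrix}\be_u+\be_w-(4/m)\ones_m\\\zero\end{bmatrix}$; forcing $\pm\by$ to be a plus state entrywise forces $m=4$ with $\{v,x\}=\{1,2,3,4\}\setminus\{u,w\}$, which is condition~(ii). For $m=2$ with $\{u,w\}$ all of part~$1$, Corollary~\ref{Cor:size2} yields $\by=\bu_1-\bu_2=\sqrt{2/n}\,\begin{bmatrix}\zero\\\ones_n\end{bmatrix}$, which is a plus state exactly when $n=2$; this produces the $K_{2,2}$ instance $\bx=\be_1+\be_2$, $\by=\be_3+\be_4$, subsumed by condition~(ii) on reading $\{u,w,v,x\}$ as the full $4$-vertex bipartite set. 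The case with $u,w$ in part~$2$ is symmetric, producing the $n=4$ instance of~(ii). Collecting all sub-cases recovers conditions~(i) and~(ii).

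The principal obstacle is the bookkeeping: the graph $K_{2,2}$ arises in several sub-cases at once (cross-part with $|\sigma_{\bx}(A)|=2$, and both same-part cases with $|\sigma_{\bx}(A)|=2$), and each sub-case requires a careful entry-by-entry check that the candidate partner produced by Theorem~\ref{Thm:kmnpst} or Corollary~\ref{Cor:size2} is actually, up to sign, a plus state. No sub-case is deep; the proof is an exhaustive enumeration.
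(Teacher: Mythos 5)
Your proposal is correct, and it runs on the same engine as the paper's proof: the spectral decomposition (\ref{Eq:completebip}) of $U_A(t)$ for $K_{m,n}$, monogamy of PST (Lemma \ref{Lem:pstnec}(3)) to isolate the unique candidate partner of each plus state, and an entry-by-entry test of whether that partner is, up to sign, again a plus state. Where you genuinely diverge is in stratifying by $|\sigma_{\bx}(A)|$ and invoking Corollary \ref{Cor:size2} whenever an eigenprojection annihilates $\bx$. The paper's proof is two sentences: for adjacent $u,w$ it reuses the pair-state argument of Corollary \ref{Cor:pairpstkn}, and for non-adjacent $u,w$ it evaluates $U_A$ only at $t=\pi/\sqrt{mn}$, obtaining $\left[\begin{smallmatrix}\be_u+\be_w-\frac{4}{m}\ones\\ \zero\end{smallmatrix}\right]$ and declaring (ii) immediate. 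Your finer split buys a subcase that the paper's single-time evaluation structurally cannot see: when $m=2$ and $\{u,w\}$ is the entire small part, $E_0\bx=\zero$, so $\sigma_{\bx}(A)=\{\pm\sqrt{2n}\}$ and the minimum PST time is $\pi/(2\sqrt{2n})$, \emph{half} the time the paper tests; at $t=\pi/\sqrt{mn}$ the walk returns $-\bx$, which looks like mere periodicity. Your Corollary \ref{Cor:size2} computation of the partner $\sqrt{2/n}\left[\begin{smallmatrix}\zero\\ \ones_n\end{smallmatrix}\right]$ is right and yields genuine plus PST in $K_{2,2}$ between $\be_1+\be_2$ and $\be_3+\be_4$ (explicitly, $U_A(\pi/4)(\be_1+\be_2)=i(\be_3+\be_4)$), consistent with the known $s$-pair transfer in $C_4$ and with the Laplacian result via $L=2I-A$. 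All of your other subcases (cross-part with $m=n$ versus $m\neq n$, same-part with $m\geq 3$) check out against direct computation of the eigenprojections.

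The one place to push back is your closing claim that the $K_{2,2}$ instance is ``subsumed by condition (ii)'': as written, (ii) requires $m=4$ or $n=4$, which fails for $K_{2,2}$, and the Laplacian analogue (Corollary \ref{Cor:pairkmn1}) deliberately separates the ``two partite sets of size two'' case of $K_{2,2}$ from the ``partite set of size four'' case, which is what the adjacency statement should do as well. So your instance does not literally satisfy (ii); rather, your more careful case analysis has exposed that the paper's statement and two-line proof gloss over this subcase --- note also that the remark following the corollary, asserting minimum PST time $\pi/\sqrt{mn}$, fails for it, since the time there is $\pi/4=\pi/(2\sqrt{mn})$. The correct resolution is to record your $K_{2,2}$ partite-sets case as an additional clause rather than to fold it into (ii).
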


\begin{proof}
If $u$ and $w$ are adjacent, then the same argument in Corollary \ref{Cor:pairpstkn} proves statement (i). Otherwise, equation (\ref{Eq:completebip}) yields $U_A(t)(\be_u+\be_w)=\begin{bmatrix}\be_u+\be_w-\frac{4}{m}\ones \\ \zero\end{bmatrix}$, and so statement (ii) is immediate.
\end{proof}

The minimum PST time in Corollaries \ref{Cor:pairpstkn} and \ref{Cor:pluspstkn} is $\frac{\pi}{\sqrt{mn}}$.

For the Laplacian case, equation (\ref{Eq:transmatrixL}) gives us the transition matrix for $K_{m,n}$:
$$U_{L}(t)=\frac{1}{m+n}J+\frac{e^{it(m+n)}}{mn(m+n)}\left[\begin{array}{cc} n^2J&-mnJ \\ -mnJ&m^2J\end{array} \right]+e^{itn}\left[\begin{array}{cc} I_m-\frac{1}{m}J&\zero \\ \zero&\zero\end{array} \right]+e^{itm}\left[\begin{array}{cc} \zero&\zero \\ \zero&I_n-\frac{1}{n}J\end{array} \right].$$
Using the same argument in the proof of Theorem \ref{Thm:kmnpst} yields an analogous result for the Laplacian case.

\begin{theorem}
\label{Thm:kmnpstlap}
Let $\bx=\begin{bmatrix}
\bx_1 \\ \bx_2
\end{bmatrix},\by\in\R^{m+n}$, where $\bx_1\in\R^m$ and where $|\sigma_{\bx}(L)|\geq 3$. Let $\bu$, $\bv$ and $\bw$ be eigenvectors for $L(K_{m,n})$ associated with $m$, $n$ and $m+n$ respectively. Then $\bx$ and $\by$ admit Laplacian perfect state transfer in $K_{m,n}$ if and only if $\bx\in\operatorname{span}\{\ones,\bu,\bv,\bw\}$, $\bx\notin\operatorname{span}\mathcal{U}$ for any two-subset $\mathcal{U}$ of $\{\ones,\bu,\bv,\bw\}$ and either
\begin{enumerate}
\item $\nu_2(m)=\nu_2(n)$ and $\by=\begin{bmatrix}
-\bx_1+\frac{2}{m} (\ones^\top\bx_1)\ones \\ -\bx_2+\frac{2}{n} (\ones^\top\bx_2)\ones\end{bmatrix}$
\item $\nu_2(m)>\nu_2(n)$ and $\by=\begin{bmatrix}-\bx_1+\frac{2}{m+n} ((\ones^\top\bx_2)+(\ones^\top\bx_1))\ones\\
\bx_2+\frac{2}{m+n}((\ones^\top\bx_1)-\frac{m}{n} (\ones^\top\bx_2))\ones\end{bmatrix}$.
\item $\nu_2(m)<\nu_2(n)$ and 
$\by=\begin{bmatrix}
\bx_1+\frac{2}{m+n}((\ones^\top\bx_2)-\frac{n}{m}(\ones^\top\bx_1))\ones \\ -\bx_2+\frac{2}{m+n}((\ones^\top\bx_1)+(\ones^\top\bx_2))\ones\end{bmatrix}$.
\end{enumerate}
The minimum PST time in all cases above is $\frac{\pi}{\operatorname{gcd}(m,n)}$.
\end{theorem}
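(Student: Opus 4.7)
The plan is to mirror the proof of Theorem \ref{Thm:kmnpst} by reading the spectrum and eigenprojectors of $L(K_{m,n})$ directly from the displayed transition matrix: the four distinct eigenvalues are $\{0, m, n, m+n\}$ with projectors $E_0 = \frac{1}{m+n}J$, $E_n = \begin{bmatrix}I_m - \frac{1}{m}J & 0 \\ 0 & 0\end{bmatrix}$, $E_m = \begin{bmatrix}0 & 0 \\ 0 & I_n - \frac{1}{n}J\end{bmatrix}$, and $E_{m+n} = \frac{1}{mn(m+n)}\begin{bmatrix}n^2 J & -mnJ \\ -mnJ & m^2 J\end{bmatrix}$. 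By Proposition \ref{Prop:S}, every $\bx$ decomposes as $\bx = a\ones + b\bu + c\bv + d\bw$ with $a\ones = E_0\bx$, $b\bu = E_m\bx$, $c\bv = E_n\bx$, and $d\bw = E_{m+n}\bx$, and the stated condition that $\bx \in \operatorname{span}\{\ones,\bu,\bv,\bw\}$ but not in any 2-element subspan is equivalent to the hypothesis $|\sigma_{\bx}(L)| \geq 3$. Since $\sigma_{\bx}(L) \subseteq \mathbb{Z}$, Theorem \ref{Thm:rc} gives periodicity, and Corollary \ref{Lem:minperiod1} (with $\sqrt{\Delta} = 1$ and $\gcd(m+n, m, n) = \gcd(m,n)$) gives the minimum period $\rho = 2\pi/\gcd(m,n)$; Lemma \ref{Lem:pstnec}(2) then fixes the candidate minimum PST time at $\tau = \pi/\gcd(m,n)$.

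Writing $g = \gcd(m,n)$, $m = gm'$, $n = gn'$ with $\gcd(m',n')=1$, the only time-dependent quantities in $U_L(\tau)\bx$ become the phases
\begin{equation*}
e^{i\tau m} = (-1)^{m'}, \quad e^{i\tau n} = (-1)^{n'}, \quad e^{i\tau(m+n)} = (-1)^{m'+n'},
\end{equation*}
so the parities of $m'$ and $n'$ produce exactly the three $\nu_2$-regimes appearing in the theorem. When both are odd ($\nu_2(m) = \nu_2(n)$), the sign pattern is $(-1,-1,+1)$ and $\by = U_L(\tau)\bx = \bx - 2(E_m + E_n)\bx$, which expands in block form to (1). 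When $m'$ is even and $n'$ odd ($\nu_2(m) > \nu_2(n)$), the pattern is $(+1,-1,-1)$ so $\by = \bx - 2(E_n + E_{m+n})\bx$, and expanding the projections — collecting the $\ones^\top\bx_1$ and $\ones^\top\bx_2$ terms — yields (2). The subcase $\nu_2(m) < \nu_2(n)$ is completely symmetric and gives (3); in all cases, $\by$ is automatically real because the phases are $\pm 1$.

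To finish I would verify the PST criterion in each case via Corollary \ref{Cor:PSTcharcor}(3): the required inequality $\nu_2(\lambda_1 - \lambda_h) > \nu_2(\lambda_1 - \lambda_\ell) = \nu_2(\lambda_1 - \lambda_k)$ reduces to the elementary arithmetic facts $\nu_2(m+n) > \nu_2(m) = \nu_2(n)$ when $\nu_2(m) = \nu_2(n)$, and $\nu_2(m+n) = \min(\nu_2(m), \nu_2(n))$ when $\nu_2(m) \neq \nu_2(n)$. For the converse, Lemma \ref{Lem:pstnec}(3) ensures the PST partner $\by$ is unique, so the three displayed formulas exhaust all possibilities. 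The main obstacle is not conceptual but bookkeeping: the block-form simplification in case (2) (and symmetrically (3)) combines the first-block contribution of $E_n\bx$ with both blocks of $E_{m+n}\bx$, and cancellations such as $\frac{2s_1}{m} - \frac{2(ns_1 - m s_2)}{m(m+n)} = \frac{2(s_1 + s_2)}{m+n}$ must be tracked carefully to match the stated coefficients.
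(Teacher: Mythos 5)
Your proposal is correct and takes essentially the same approach as the paper, whose entire proof is the remark that "the same argument in the proof of Theorem \ref{Thm:kmnpst}" applies: read the spectral decomposition of $U_L(t)$ from the displayed formula, evaluate the phases at $\tau=\pi/\operatorname{gcd}(m,n)$, where they reduce to $\pm 1$ according to the parities of $m/\operatorname{gcd}(m,n)$ and $n/\operatorname{gcd}(m,n)$, and read off $\by=U_L(\tau)\bx$ in block form exactly as you do. One tiny bookkeeping point: your parenthetical $\operatorname{gcd}(m+n,m,n)=\operatorname{gcd}(m,n)$ presumes the full support $\{0,m,n,m+n\}$, but since $|\sigma_{\bx}(L)|\geq 3$ only forces a three-element subset, one should note that every such subset has pairwise differences with greatest common divisor $\operatorname{gcd}(m,n)$, so $\rho=2\pi/\operatorname{gcd}(m,n)$ and hence $\tau=\pi/\operatorname{gcd}(m,n)$ in all cases, after which your phase computation goes through unchanged.
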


We finish this section by characterizing Laplacian pair and plus PST in complete bipartite graphs.

\begin{corollary}
\label{Cor:pairkmn}
Laplacian pair perfect state transfer occurs in $K_{m,n}$ if and only if either Corollary \ref{Cor:pairpstkn}(2) holds or $(m,n)\in\{(2,4k),(4k,2)\}$ for any integer $k\geq 1$. In particular, perfect state transfer occurs between $\be_u-\be_w$ and $\be_v-\be_w$ in $K_{2,4k}$, where $\{u,v\}$ is a partite set of size two and $w\in V(K_{m,n})\backslash\{u,v\}$.
\end{corollary}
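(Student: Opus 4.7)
The plan is to invoke Theorem \ref{Thm:kmnpstlap} and analyze, for $\bx=\be_u-\be_w$ a pair state, when the PST partner $\by$ takes (up to sign) the form of another pair state.

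I would first argue that $u$ and $w$ must lie in different partite sets; otherwise $\bx$ is an eigenvector of $L$ for eigenvalue $m$ or $n$, and hence fixed, so PST is impossible. Suppose then (without loss of generality) that $u$ lies in the size-$m$ partite set and $w$ in the size-$n$ partite set, and that $m,n\geq 2$ (the degenerate case $\min\{m,n\}=1$ gives $K_{1,2}=P_3$, already handled in the path section). Decomposing $\bx$ into the Laplacian eigenspaces of $K_{m,n}$ then yields nonzero components at eigenvalues $m$, $n$, and $m+n$ (the zero-eigenspace component vanishes since $\ones^\top \bx=0$), so $|\sigma_{\bx}(L)|=3$ and Theorem \ref{Thm:kmnpstlap} applies.

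Next, substituting $\bx_1=\be_u$ and $\bx_2=-\be_w$ (so $\ones^\top\bx_1=1$ and $\ones^\top\bx_2=-1$) into each of the three formulas of Theorem \ref{Thm:kmnpstlap} and simplifying, I obtain: when $\nu_2(m)=\nu_2(n)$, the partner $\by$ has first block $-\be_u+(2/m)\ones_m$ and second block $\be_w-(2/n)\ones_n$; when $\nu_2(m)>\nu_2(n)$, $\by$ has first block $-\be_u$ and second block $-\be_w+(2/n)\ones_n$; and when $\nu_2(m)<\nu_2(n)$, $\by$ has first block $\be_u-(2/m)\ones_m$ and second block $\be_w$.

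The main step, and the most delicate point, is forcing $\by$ to itself be a pair state. A block of the shape $\pm\be_u\mp(2/m)\ones_m$ has $m-1$ equal entries of magnitude $2/m$ and one entry of magnitude $|2/m-1|$, and it equals $\pm\be_{v}$ (or the zero vector) if and only if $m\in\{1,2\}$; the analogous statement applies to the $n$-block. Since we are assuming $m,n\geq 2$, the three cases then force respectively $(m,n)=(2,2)$, recovering Corollary \ref{Cor:pairpstkn}(2); $n=2$ together with $\nu_2(m)>1$, i.e.\ $m=4k$ for some $k\geq 1$; and symmetrically $m=2$ together with $n=4k$. In the latter two cases a short verification shows that the resulting $\by$ equals, up to an overall sign (which is immaterial at the density-matrix level, since $D_\by=D_{-\by}$), precisely $\be_v-\be_w$ where $\{u,v\}$ is the size-two partite set. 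The converse---that PST actually occurs for these parameter values---is immediate by substituting them back into Theorem \ref{Thm:kmnpstlap}.
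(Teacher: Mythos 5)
Your core computation coincides with the paper's proof: the paper likewise dismisses the non-adjacent case as a fixed state and substitutes $\bx_1=\be_u$, $\bx_2=-\be_w$ into Theorem \ref{Thm:kmnpstlap}, and your three-case $\nu_2$ analysis (forcing each block of the partner, such as $-\be_u+\tfrac{2}{m}\ones_m$, to be a signed standard basis vector, which happens only for $m\in\{1,2\}$) is carried out correctly and in more detail than the paper's one-line appeal. However, there are two genuine gaps, both at the boundary of your case analysis. First, your reduction to $m,n\geq 2$ is wrong as stated: $\min\{m,n\}=1$ gives \emph{every} star $K_{1,n}$, not just $K_{1,2}=P_3$, and for $n\geq 3$ these are not paths, so the path section says nothing about them. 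They also cannot be handled by Theorem \ref{Thm:kmnpstlap}: in $K_{1,n}$ the eigenvalue $n$ has multiplicity $m-1=0$, so the pair state $\be_u-\be_w$ ($u$ the center) has eigenvalue support $\{1,n+1\}$ of size two, and one must instead use Corollary \ref{Cor:size2} together with monogamy (Lemma \ref{Lem:pstnec}(3)) to see that the unique PST partner, $\bigl(1,\,1-\tfrac{2}{n},\,-\tfrac{2}{n},\ldots,-\tfrac{2}{n}\bigr)^\top$ up to sign, is a pair state only when $n=2$. Worse, for $K_{1,2}=P_3$ Laplacian pair PST \emph{does} occur (Remark \ref{Rem:pnL}(1): $\be_1-\be_2$ to $\be_3-\be_2$), so "already handled in the path section" conceals a case that actually conflicts with the literal iff of the corollary unless $m,n\geq 2$ is made an explicit hypothesis; delegating the case does not close the proof. (The paper's own proof is equally silent here: its claim that adjacent $u,w$ yield support $\{m,n,m+n\}$ of size three also fails when $\min\{m,n\}=1$.)

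Second, your assertion that $|\sigma_{\bx}(L)|=3$ fails when $m=n$: the eigenvalues $m$ and $n$ coincide, the support is $\{m,2m\}$ of size two, and Theorem \ref{Thm:kmnpstlap} is formally inapplicable precisely in the subcase from which you extract $(m,n)=(2,2)$ and in which you rule out $m=n\geq 3$. The repair is again Corollary \ref{Cor:size2}: writing $\bx$ as the sum of its two eigenprojections in $K_{n,n}$, the unique partner is, up to sign, the vector with blocks $\be_u-\tfrac{2}{n}\ones_n$ and $-\be_w+\tfrac{2}{n}\ones_n$, which is a pair state iff $n=2$; this recovers Corollary \ref{Cor:pairpstkn}(2) and excludes $n\geq 3$, and the same remark is needed for the converse at $(2,2)$, where substituting back into Theorem \ref{Thm:kmnpstlap} is not legitimate since its hypothesis $|\sigma_{\bx}(L)|\geq 3$ fails. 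Fortunately the case-(1) formula happens to agree with the Corollary \ref{Cor:size2} partner, so your conclusions are all correct; but a complete proof must route these degenerate cases through Corollary \ref{Cor:size2} rather than through Theorem \ref{Thm:kmnpstlap}. For $m\neq n$ with $m,n\geq 2$ your verification that the components at $m$, $n$, $m+n$ are nonzero is sound, and the rest of your argument, including the sign bookkeeping $D_{\by}=D_{-\by}$ and the relabeling between $(4k,2)$ and $(2,4k)$, is fine.
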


\begin{proof}
If $u$ and $w$ are non-adjacent, then $\be_u-\be_w$ is fixed. Otherwise, $\be_u-\be_w$ has eigenvalue support $\{m,n,m+n\}$. 
Applying Theorem \ref{Thm:kmnpstlap}(1) with $\bx_1=\be_u$ and $\bx_2=-\be_w$ yields the desired conclusion.
\end{proof}

In \cite{chen2019edge}, it was shown that $K_{2,4k}$ admits Laplacian pair PST. Thus, $C_4$ and $K_{2,4k}$ are the only complete bipartite graphs that admit pair PST by Corollary \ref{Cor:pairkmn}. For plus state transfer, we have the following:

\begin{corollary}
\label{Cor:pairkmn1}
Laplacian plus perfect state transfer occurs in $K_{m,n}$ if and only if either
\begin{enumerate}
\item $m=n=2$, between $\be_u+\be_w$ and $\be_v+\be_x$, where either (i) $\{u,w\}$ and $\{v,x\}$ are non-incident edges or (ii) $\{u,w\}$ and $\{v,x\}$ are the two partite sets of size two, or
\item $(m,n)\in\{(4,4k),(4k,4)\}$ for any odd $k$, between $\be_u+\be_w$ and $\be_v+\be_x$, where $\{u,w,v,x\}$ is a partite set of size four.
\end{enumerate}
\end{corollary}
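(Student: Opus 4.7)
The plan is to mirror the proof of Corollary~\ref{Cor:pairkmn}, replacing $\be_u-\be_w$ by $\be_u+\be_w$ and enumerating which instances of Theorem~\ref{Thm:kmnpstlap} (or Corollary~\ref{Cor:size2} when the eigenvalue support has size two) produce a partner $\by$ that is itself a plus state. Let $A, B$ be the partite sets of sizes $m,n$.

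First I would split on whether $u,w$ lie in the same partite set or form an edge. In the \emph{same partite set} case, without loss of generality $u,w\in A$, so $\bx=[\be_u+\be_w;\zero]$. Decomposing against the four $L$-eigenspaces of $K_{m,n}$ (namely $\operatorname{span}\{\ones\}$ for eigenvalue $0$, the $\ones_m^{\perp}$-block for $n$, the $\ones_n^{\perp}$-block for $m$, and $\operatorname{span}\{[n\ones_m;-m\ones_n]\}$ for $m+n$), I find $\sigma_{\bx}(L)=\{0,m+n\}$ when $m=2$ and $\sigma_{\bx}(L)=\{0,n,m+n\}$ when $m\geq 3$. The $m=2$ branch is handled by Corollary~\ref{Cor:size2}: the unique candidate $\by$ is a plus state only if $n=2$, realising case~(1)(ii). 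The $m\geq 3$ branch invokes Theorem~\ref{Thm:kmnpstlap}, whose three $\nu_2$-subcases give explicit $\by$'s; requiring the $A$-entries $-1+4/m$ at $u,w$ and $4/m$ elsewhere in $A$ to match a plus state forces $m=4$ with $A=\{u,w,v,x\}$, and then $\nu_2(m)=\nu_2(n)$ constrains $n=4k$ for odd $k$, giving case~(2). The remaining two subcases of Theorem~\ref{Thm:kmnpstlap} leave a nonzero $B$-block proportional to $\ones_n$ whose constant is never $0$ or $1$ under the required $\nu_2$-inequalities.

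Next, in the \emph{edge} case $u\in A$, $w\in B$, I would observe that $\sigma_{\bx}(L)$ is $\{0,m\}$ when $m=n$ (the $(m+n)$-component vanishes and the $m$- and $n$-eigenspaces coincide) and $\{0,m,n,m+n\}$ when $m\neq n$ and $m,n\geq 2$. For $m=n$, Corollary~\ref{Cor:size2} gives $\by=\tfrac{2}{m}\ones-\bx$, which is a plus state only for $m=n=2$, yielding case~(1)(i). For $m\neq n$, Theorem~\ref{Thm:kmnpstlap} again produces three candidate $\by$'s whose blocks are linear combinations of $\be_u,\be_w$ and constant multiples of $\ones_m,\ones_n$ with constants such as $4/(m+n)$ and $2(n-m)/(n(m+n))$; a finite case check on when these constants equal $0$ or $1$ shows that none matches a plus state. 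The small remaining subcases $m=1$ or $n=1$ are settled by direct inspection, using that $\be_u$ with $u$ the sole vertex of its partite set would have to produce a plus state supported across both sides.

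The principal obstacle I foresee is the last block of bookkeeping: carefully extracting each candidate $\by$ from Theorem~\ref{Thm:kmnpstlap} in the edge case with $m\neq n$ and eliminating every non-plus configuration modulo the $\nu_2$ hypotheses. Combining the surviving configurations across both principal cases yields exactly the two families~(1) and~(2) in the statement, and the converse direction is a direct verification from equation~(\ref{Eq:transmatrixL}).
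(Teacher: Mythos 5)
Your proposal is correct, and in outline it is the paper's own strategy: split on whether $u,w$ share a partite set or form an edge, read off the unique PST partner $\by$ from Theorem~\ref{Thm:kmnpstlap} in each $\nu_2$-branch, and pattern-match $\pm\by$ against plus states, with the surviving configurations being exactly families (1) and (2). The one genuine difference is your treatment of the degenerate eigenvalue supports, and it is an improvement on the paper. You correctly compute that $\sigma_{\bx}(L)=\{0,m+n\}$ when $\{u,w\}$ is a partite set of size $m=2$, and $\sigma_{\bx}(L)=\{0,m\}$ when $u,w$ are adjacent and $m=n$, and you route both cases through Corollary~\ref{Cor:size2}; this is the right move, since Theorem~\ref{Thm:kmnpstlap} assumes $|\sigma_{\bx}(L)|\geq 3$ and so does not literally apply there. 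The paper instead applies Theorem~\ref{Thm:kmnpstlap} uniformly: it extracts (1i) from the $\nu_2(m)=\nu_2(n)$ branch at $m=n=2$ (where the support has size two), and, more jarringly, extracts (1ii) from the non-adjacent $\nu_2(m)>\nu_2(n)$ branch, even though (1ii) forces $m=n=2$, where $\nu_2(m)=\nu_2(n)$ and again $|\sigma_{\bx}(L)|=2$. The formulas happen to output the correct partner in these cases, so the paper's conclusion stands, but your Corollary~\ref{Cor:size2} computations --- $\by=\tfrac{2}{m}\ones-\bx$ in the edge case with $m=n$, and $\by=\begin{bmatrix}\tfrac{2-n}{n+2}\ones_2 \\ \tfrac{4}{n+2}\ones_n\end{bmatrix}$ in the partite case with $m=2$, a plus state precisely when $n=2$ --- supply the justification the paper's proof elides. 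The remaining bookkeeping you flag (edge case with $m\neq n$) matches the paper's assertion and does check out: the constants $\tfrac{4}{m+n}$ and $\tfrac{2(n-m)}{n(m+n)}$ can land in $\{0,1\}$ only at parameter values excluded by the $\nu_2$-inequalities or by $m,n\geq 2$, and your separate inspection of the star graphs $m=1$ or $n=1$ (where the support has size three and the $B$-block entry $\tfrac{3-m}{1+m}\in\{0,1\}$ forces $m\in\{1,3\}$, both excluded) closes a case the paper does not explicitly mention.
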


\begin{proof}
First, suppose $u$ and $w$ are adjacent. If $\nu_2(m)=\nu_2(n)$, then Theorem \ref{Thm:kmnpstlap}(1) yields $\by=\begin{bmatrix}
\be_u-\frac{2}{m} \ones \\ \be_w-\frac{2}{n} \ones\end{bmatrix}$. This proves (1i). Now, if $\nu_2(m)>\nu_2(n)$, then Theorem \ref{Thm:kmnpstlap}(2) gives us $\by=\begin{bmatrix}
-\be_u+\frac{4}{m+n} \ones \\ \be_w+\frac{2}{m+n}(1-\frac{m}{n})\ones\end{bmatrix}$, which is not a plus state for any $m$ and $n$. Similarly for the case $\nu_2(m)<\nu_2(n)$. Now, suppose $u$ and $w$ are non-adjacent. If $\nu_2(m)=\nu_2(n)$, then Theorem \ref{Thm:kmnpstlap}(1) again yields $\by=\begin{bmatrix}
-\be_u-\be_w+\frac{4}{m} \ones \\ \zero\end{bmatrix}$. From this, (2) follows. If $\nu_2(m)>\nu_2(n)$, then Theorem \ref{Thm:kmnpstlap}(2) implies that $\by=\begin{bmatrix}
-\be_u-\be_w+\frac{4}{m+n} \ones \\ \frac{4}{m+n} \ones\end{bmatrix}$, which yields (1ii). If $\nu_2(m)<\nu_2(n)$, then one gets the same result by applying Theorem \ref{Thm:kmnpstlap}(3).
\end{proof}

\section{Minimizing PST time}\label{Sec:minpst}

The following result determines the vectors $\bx$ and graphs $G$ such that the minimum period of $\bx$ in $G$ is the least amongst all unweighted connected $n$-vertex graphs.

\begin{theorem}
\label{Thm:minperjoin}
Let $\bx\in \R^n$. The following hold.
\begin{enumerate}
\item Amongst all connected unweighted $n$-vertex graphs, $\bx$ attains the least minimum period in $G$ relative to $L$ if and only if $G=G_1\vee G_2$ with $|V(G_i)|=n_i$ for $i\in\{1,2\}$, $n=n_1+n_2,$ and $\bx\in\operatorname{span}\left\{\ones_n,\begin{bmatrix}
n_2\ones_{n_1} \\ -n_1\ones_{n_2}
\end{bmatrix}\right\}$.
\item There exists an integer $N>0$ such that for all connected unweighted $n$-vertex graphs with $n\geq N$, $\bx$ attains the least minimum period in $G$ relative to $A$ if and only if $G=O_a\vee K_{n-a}$  with $a=\lceil\frac{n}{3}\rceil$, and $\bx\in\operatorname{span}\left\{\begin{bmatrix}
-\lambda^-\ones_{a} \\ a\ones_{n-a}
\end{bmatrix},\begin{bmatrix}
-\lambda^+\ones_{a} \\ a\ones_{n-a}
\end{bmatrix}\right\}$, where $\lambda^{\pm}=\frac{1}{2}(n-a-1\pm \sqrt{(n-a-1)^2+4a(n-a)})$.
\end{enumerate}
Moreover, for 1 and 2, we have $\rho=\frac{2\pi}{n}$ and $\rho=\frac{2\pi}{\sqrt{(n-a-1)^2+4a(n-a)}}\approx\frac{\pi\sqrt{3}}{n}$, respectively.
\end{theorem}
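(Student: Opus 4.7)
The plan is to reduce the problem of minimizing $\rho$ over all pairs $(G,\bx)$ to the classical problem of maximizing the spectral spread of $L$ (resp.\ $A$), and then recover the eigenvectors of the extremal graph to describe the admissible $\bx$. The key starting observation is a general lower bound on $\rho$ in terms of the spread of $\sigma_{\bx}(M)$. If $\sigma_{\bx}(M)=\{\lambda_1,\ldots,\lambda_m\}$ with $\lambda_1>\cdots>\lambda_m$, then Lemma~\ref{Lem:minperiod}(1) gives $\rho=\tfrac{2\pi}{\lambda_1-\lambda_2}$ when $m=2$, while Lemma~\ref{Lem:minperiod}(2) gives $\rho=\tfrac{2\pi q}{\lambda_1-\lambda_2}$ with $q\ge 1$ when $m\ge 3$. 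In either case, using $\lambda_1-\lambda_2\le\lambda_1-\lambda_m\le\lambda_{\max}(M)-\lambda_{\min}(M)$,
\begin{equation*}
\rho\ \ge\ \frac{2\pi}{\lambda_{\max}(M)-\lambda_{\min}(M)},
\end{equation*}
with equality forcing $|\sigma_{\bx}(M)|=2$ and $\sigma_{\bx}(M)=\{\lambda_{\min}(M),\lambda_{\max}(M)\}$. So the task is to maximize the spread of $M$ over all connected unweighted $n$-vertex graphs.

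For part (1), I would invoke the classical fact that $\lambda_{\min}(L)=0$ and $\lambda_{\max}(L(G))\le n$, with equality if and only if $\overline{G}$ is disconnected, i.e.\ $G=G_1\vee G_2$. This yields $\rho\ge \tfrac{2\pi}{n}$, with equality iff $G$ is a join \emph{and} $\sigma_{\bx}(L)=\{0,n\}$. By Proposition~\ref{Prop:S}, the latter is equivalent to $\bx$ being a non-trivial combination of a $0$- and an $n$-eigenvector of $L$. Since $G$ is connected, the $0$-eigenspace is $\operatorname{span}\{\ones_n\}$; and using the block form $L(G_1\vee G_2)=\bigl[\begin{smallmatrix} L(G_1)+n_2 I & -J\\ -J & L(G_2)+n_1 I\end{smallmatrix}\bigr]$, a direct multiplication shows that $\bigl[\begin{smallmatrix} n_2\ones_{n_1}\\ -n_1\ones_{n_2}\end{smallmatrix}\bigr]$ is an $n$-eigenvector. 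Because $n$ is a simple Laplacian eigenvalue of a join of exactly two factors (multiplicity $=c(\overline{G})-1=1$), the span characterization is as stated.

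For part (2), the reduction is the same, but the bound on $\lambda_{\max}(A)-\lambda_{\min}(A)$ requires the recently-established Gregory--Kirkland spread conjecture (Breen--Riasanovsky--Tait--Urschel): for $n\ge N$, the unique (up to isomorphism) $n$-vertex graph maximizing the adjacency spread is $O_{\lceil n/3\rceil}\vee K_{n-\lceil n/3\rceil}$. Applying equation~\eqref{Eq:transmatrixA} to this join with $(m,k)=(a,0)$ and $(n_H,\ell)=(n-a,n-a-1)$ shows that the extreme eigenvalues are $\lambda^{\pm}=\tfrac{1}{2}\bigl(n-a-1\pm\sqrt{(n-a-1)^2+4a(n-a)}\bigr)$, both simple (they differ from the interior eigenvalues $0,-1$ for $n$ large), with corresponding eigenvectors $\bigl[\begin{smallmatrix} -\lambda^{\mp}\ones_a\\ a\ones_{n-a}\end{smallmatrix}\bigr]$. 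So equality in the $\rho$ bound forces $G=O_a\vee K_{n-a}$ and $\bx\in\operatorname{span}\{\bu,\bv\}$ with nonzero components in both. The asymptotic $\rho\approx \pi\sqrt{3}/n$ is a one-line calculation using $a\sim n/3$ in $(n-a-1)^2+4a(n-a)\sim 4n^2/3$.

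\paragraph{Main obstacle.} The only non-routine input is the adjacency spread bound itself, which is a deep result that I would simply cite; everything else is a mechanical combination of Lemma~\ref{Lem:minperiod}, Proposition~\ref{Prop:S}, the block-Laplacian/block-adjacency formulas in \eqref{Eq:transmatrixL}--\eqref{Eq:transmatrixA}, and the classical identification of $n$-Laplacian-eigenvectors for joins. A small care-point is ensuring that the optimizing $\bx$ is not itself an eigenvector (otherwise it is fixed and has no minimum period), which is automatic once we require nontrivial projection onto both extremal eigenspaces.
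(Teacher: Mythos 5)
Your proposal is correct and follows essentially the same route as the paper's proof: both reduce the problem via Lemma~\ref{Lem:minperiod} to maximizing the spectral spread, invoke the classical fact that the Laplacian spread equals $n$ exactly for joins (with $\ones_n$ and the signed block vector spanning the extremal eigenspaces) and the Breen--Riasanovsky--Tait--Urschel theorem for the adjacency spread, and then identify the extremal eigenvectors from the join formulas. If anything, your write-up is slightly more explicit than the paper's in recording the inequality $\rho \ge 2\pi/(\lambda_{\max}(M)-\lambda_{\min}(M))$ and in noting that equality forces $|\sigma_{\bx}(M)|=2$ with support $\{\lambda_{\min},\lambda_{\max}\}$, which the paper leaves implicit in its appeal to Lemma~\ref{Lem:minperiod}.
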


\begin{proof}
Let $\lambda_1$ and $\lambda_2$ be the largest and smallest eigenvalues of $M$. We first prove 1. By assumption, $0$ is a simple eigenvalue of $L(G)$ with eigenvector $\ones_n$. Moreover, every eigenvalue $\lambda$ of $L(G)$ satisfies $\lambda\leq n$ with equality if and only if $G$ is a join graph. Thus, for any two eigenvalues $\lambda_1$ and $\lambda_2$ of $L(G)$, the Laplacian spread $\lambda_1-\lambda_2$ is maximum if and only if $\lambda_1=n$ and $\lambda_2=0$. 
Invoking Lemma \ref{Lem:minperiod}, the least minimum period is attained if and only if $G=G_1\vee G_2$ for some graphs $G_i$ on $n_i$ vertices, $i\in\{1,2\}$ and $\sigma_{\bx}(L)=\{0,n\}$, in which case $\rho=\frac{2\pi}{n}$ and $\begin{bmatrix}
n_2\ones_{n_1} \\ -n_1\ones_{n_2}
\end{bmatrix}$ is the eigenvector associated with $n$. To prove 2, we use a result due to Breen, Riasanovsky, Tait and Urschel \cite{BRTU} states that there is an $N>0$ such that if $n\ge N,$ the maximum adjacency spread $\lambda_1-\lambda_2$ over all connected $n$-vertex graphs is attained uniquely by the complete split graph $G=O_a\vee K_{n-a}.$ In this case, we have $\lambda_1=\lambda^+$, $\lambda_2=\lambda^-$ and $\lambda^{\pm}=\frac{1}{2}(n-a-1\pm\sqrt{(n-a-1)^2+4a(n-a)})$ so that $\lambda_{1}-\lambda_{2}=\sqrt{(n-a-1)^2+4a(n-a)} \approx \frac{2n}{\sqrt{3}}.$ The same argument used in the above case yields the desired conclusion.
\end{proof}

\begin{corollary}
\label{Cor:minpst}
Let $\bx,\by\in \R^n$.
\begin{enumerate}
\item Amongst all connected unweighted $n$-vertex graphs, $\bx$ and $\by$ attain the least minimum PST time relative to $L$ if and only if the conditions in Theorem \ref{Thm:minperjoin}(1) hold and
\begin{center}
$\by=\frac{1}{n}(\ones_{n}^\top \bx)\ones_{n}-\frac{1}{n_1n_2(n_1+n_2)}\left(\begin{bmatrix}
n_2\ones_{n_1} \\ -n_1\ones_{n_2}
\end{bmatrix}^\top\bx\right)\begin{bmatrix}
n_2\ones_{n_1} \\ -n_1\ones_{n_2}
\end{bmatrix}.$        
\end{center}
\item There exists an integer $N>0$ such that amongst all connected unweighted $n$-vertex graphs with $n\geq N$, $\bx$ and $\by$ attain the least minimum PST time relative to $A$ if and only if the conditions in Theorem \ref{Thm:minperjoin}(2) hold and $\by$ has the form below, where $D^{\pm}=\pm a(\lambda^{\pm })\sqrt{(n-a-1)^2+4a(n-a)}$.
\begin{center}
$\by=\frac{1}{D^-}\left(\begin{bmatrix}
-\lambda^-\ones_{a} \\ a\ones_{n-a}
\end{bmatrix}^\top\bx\right)\begin{bmatrix}
-\lambda^-\ones_{a} \\ a\ones_{n-a}
\end{bmatrix}-\frac{1}{D^+}\left(\begin{bmatrix}
-\lambda^+\ones_{a} \\ a\ones_{n-a}
\end{bmatrix}^\top\bx\right)\begin{bmatrix}
-\lambda^+\ones_{a} \\ a\ones_{n-a}
\end{bmatrix}$.
\end{center}
\end{enumerate} 
The minimum PST time in (1) is $\frac{\pi}{n}$, and (for all sufficiently large $n$) $\frac{\pi}{\sqrt{(n-a-1)^2+4a(n-a)}},$ otherwise.
\end{corollary}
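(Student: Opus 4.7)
The plan is to combine three ingredients already established: Lemma \ref{Lem:pstnec}(2), which says that the minimum PST time equals $\rho/2$ where $\rho$ is the minimum period of $\bx$; Theorem \ref{Thm:minperjoin}, which identifies the extremal pairs $(G,\bx)$ minimizing $\rho$; and Corollary \ref{Cor:size2}, which describes the unique PST partner $\by$ once $|\sigma_\bx(M)|=2$.

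First, I would observe that a triple $(G,\bx,\by)$ achieves the least minimum PST time amongst connected unweighted $n$-vertex graphs relative to $M$ if and only if $\bx$ attains the least minimum period $\rho$ in some such graph and $\by$ is a valid PST partner of $\bx$. Applying Theorem \ref{Thm:minperjoin}, case (1) forces $G=G_1\vee G_2$ and $\bx$ to lie in the two-dimensional span of $\ones_n$ and $\bv:=(n_2\ones_{n_1}^{\top},-n_1\ones_{n_2}^{\top})^\top$, giving $\rho=2\pi/n$. Case (2), for $n$ sufficiently large, forces $G=O_a\vee K_{n-a}$ with $a=\lceil n/3\rceil$ and $\bx$ to lie in the span of the two eigenvectors $\bv_\pm:=(-\lambda^\mp\ones_a^\top,a\ones_{n-a}^\top)^\top$ of $A$ for the eigenvalues $\lambda^\pm$, with $\rho=2\pi/\sqrt{(n-a-1)^2+4a(n-a)}$. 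In each case $|\sigma_\bx(M)|\in\{1,2\}$; the value $1$ would make $\bx$ a fixed state admitting no PST, so we must have $|\sigma_\bx(M)|=2$.

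Second, with $|\sigma_\bx(M)|=2$, Corollary \ref{Cor:size2} gives PST between $\bx$ and $\by$ if and only if $\by=\bu_1-\bu_2$, where $\bx=\bu_1+\bu_2$ is the decomposition of $\bx$ into its two eigenvector components. Since each relevant eigenspace is one-dimensional, $\bu_j$ is the orthogonal projection of $\bx$ onto the chosen basis vector of that eigenspace, that is, of the form $\frac{\bw^\top\bx}{\|\bw\|^2}\bw$ for the appropriate basis eigenvector $\bw$. Hence $\by$ is uniquely determined by $\bx$ via these two projections.

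Third, I would verify that the displayed formulas for $\by$ agree with $\bu_1-\bu_2$. For case (1), $\|\ones_n\|^2=n$ and $\|\bv\|^2=n_1 n_2^2+n_2 n_1^2=n_1n_2(n_1+n_2)$, which reproduces the stated expression (using $\ones_n^\top\bv=0$ and orthogonality of the two eigenvector components). For case (2), the key computation is
\[
\|\bv_\pm\|^2=a(\lambda^\mp)^2+a^2(n-a)=a\lambda^\mp(\lambda^\mp-\lambda^\pm)=\mp a\lambda^\mp\sqrt{(n-a-1)^2+4a(n-a)}=D^\mp,
\]
using $\lambda^+\lambda^-=-a(n-a)$ and $\lambda^+-\lambda^-=\sqrt{(n-a-1)^2+4a(n-a)}$, which follow from the quadratic whose roots are $\lambda^\pm$. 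Substituting these norms into the projection formula recovers the stated expression for $\by$, and the minimum PST time is $\rho/2$, equal to $\pi/n$ in case (1) and $\pi/\sqrt{(n-a-1)^2+4a(n-a)}$ in case (2). The main obstacle is the sign-and-norm bookkeeping in case (2), which is attention-demanding but mechanical once the quadratic identities for $\lambda^\pm$ are in hand.
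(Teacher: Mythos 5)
Your proposal is correct and takes essentially the same route as the paper, whose proof simply cites Theorem \ref{Thm:PSTchar}(1) (in the guise of Corollary \ref{Cor:size2}), Theorem \ref{Thm:minperjoin}, and the fact that $|\sigma_{\bx}(M)|=2$ --- exactly the chain you assemble via Lemma \ref{Lem:pstnec}(2). Your explicit sign-and-norm verification, in particular $\|\bv_{\pm}\|^2=a(\lambda^{\mp})^2+a^2(n-a)=D^{\mp}$ using $\lambda^+\lambda^-=-a(n-a)$ and $\lambda^+-\lambda^-=\sqrt{(n-a-1)^2+4a(n-a)}$, correctly fills in the bookkeeping the paper leaves implicit.
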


\begin{proof}
This follows from Theorems \ref{Thm:PSTchar}(1) and \ref{Thm:minperjoin}, and the fact that $|\sigma_{\bx}(M)|=2$.
\end{proof}

We close this section with the following remark.

\begin{remark}
As $n\rightarrow\infty$, the least minimum PST times in Corollary \ref{Cor:minpst} both tend to 0, in contrast to the least minimum PST time for paths which tend to $\frac{\pi}{4}$ by Remark \ref{pstimepath}. Thus, the join graphs in Corollary \ref{Cor:minpst} are desirable if a smaller minimum PST time is preferred.
\end{remark}

\section{Sensitivity with respect to  readout time}\label{Sec:sens}
 
Suppose PST occurs between real unit vectors $\bx, \by \in \R^n$ relative to $M$ at time $\tau$. Define $f:\R^+\rightarrow [0,1]$ as $$f(t) = |\by^\top e^{itM}\bx|^2.$$
If $\bx$ and $\by$ are not unit vectors, then we may define the above function as $f(t) =\frac{1}{\|\bx\|^2} |\by^\top e^{itM}\bx|^2$. This adds a constant factor of $\frac{1}{\|\bx\|^2}$ to the above function, and so in order to simplify our calculations, we shall only deal with real unit vectors in this section.

Note that $f(t)$ is the analogue of the fidelity of state transfer from $\bx$ to $\by$ at time $t$. Following the proof of Theorem 2.2 in \cite{Ksens}, we find that for each $k \in \mathbb{N},$ 
\begin{equation}
\label{Eq:f2}
\frac{d^kf}{dt^k}\Big|_{\tau} = \begin{cases} (-1)^{\frac{k \mod 4}{2}} \displaystyle\sum_{j=0}^k(-1)^j {{k}\choose{j}} (\by^\top M^j \by)( \by^\top M^{k-j}\by) & {\rm{if}} \  k \ \rm{is \ even}\\
0 & {\rm{if}} \  k \ \rm{is \ odd}. \end{cases}  
\end{equation}
Moreover, by Lemma \ref{Prop:cosp}, we have $\bx^\top M^k \bx= \by^\top M^k \by$ for all integers $k\geq 0$.

There is practical interest in determining how large or small  $ \frac{d^2f}{dt^2}\Big|_{\tau}$ can be, since that will give intuition on what features govern the sensitivity of the fidelity with respect to the readout time. Taking $k=2$ in (\ref{Eq:f2}) yields 
\begin{equation}
\label{Eq:der2} 
\begin{split}
\frac{d^2f}{dt^2}\Big|_{\tau}=-2(\by^\top M^2 \by - (\by^\top M \by)^2).
\end{split}
\end{equation}
Let $\sigma_{\by}(M)=\{\lambda_1, \ldots, \lambda_\ell\}$ and write $\by=\sum_{j=1}^\ell c_j\bv_j$ where each $\bv_j$ is a unit vector associated with $\lambda_j$, each scalar $c_j\in\R\backslash\{0\}$ and $\sum_{j=1}^\ell c_j^2=1$. Then
\begin{equation*}
\frac{d^2f}{dt^2}\Big|_{\tau} = 2((\by^\top M \by)^2 -\by^\top M^2 \by  ) = 2\left(\left(\sum_{j=1}^\ell a_j \lambda_j\right)^2 - \sum_{j=1}^\ell a_j \lambda_j^2 \right),
\end{equation*}
where $a_j=c_j^2$ for each $j=1, \ldots, \ell.$ Observe that from the Cauchy-Scwarz inequality,
\begin{equation*}
\left(\sum_{j=1}^\ell a_j \lambda_j\right)^2 = \left(\sum_{j=1}^\ell (c_j \lambda_j) c_j\right)^2 \le \left(\sum_{j=1}^\ell c_j^2 \lambda_j^2\right) \left(\sum_{j=1}^\ell c_j^2\right) = \sum_{j=1}^\ell a_j \lambda_j^2.
\end{equation*}
We thus deduce that $\frac{d^2f}{dt^2}\Big|_{\tau} \le 0.$ This upper bound may be approached arbitrarily closely when 
$\bx$ is of the form $\epsilon\bu + \sqrt{1-\epsilon^2} \bv,$ where $\bu$ and $\bv$ 
are unit  eigenvectors corresponding to different eigenvalues, and $|\epsilon|>0$ is small. Furthermore, from (\ref{Eq:der2}), we obtain $\frac{d^2f}{dt^2}\Big|_{\tau}=0$ if and only if $\by$ is a unit eigenvector for $M$. However, this implies that $\by$ is fixed by Proposition \ref{Prop:fixedstatechar}. That is, $\by$ cannot exhibit strong cospectrality, a contradiction to the fact that $\by$ is involved in PST. Therefore, $\frac{d^2f}{dt^2}\Big|_{\tau}<0$.

Next we seek  the minimum value of  $(\sum_{j=1}^\ell a_j \lambda_j)^2-\sum_{j=1}^\ell a_j \lambda_j^2 $ subject to the constraint that $\sum_{j=1}^\ell a_j=1$ and $a_j \ge 0$ for all $j=1, \ldots, \ell.$ 
Suppose that we have distinct indices $j_1, j_2$ such that $a_{j_1}, a_{j_2}>0.$ For each $j=1, \ldots, \ell$, consider the coefficients $b_j$ given by $b_j=a_j$ whenever $j\ne j_1,j_2, b_{j_1}=a_{j_1}+\epsilon$, and $b_{j_2}=a_{j_2}-\epsilon,$ where $\epsilon$ is sufficiently small. It is straightforward to show that 
\begin{equation}
\label{der3}
\begin{split}
\left(\sum_{j=1}^\ell b_j \lambda_j\right)^2 -\sum_{j=1}^\ell b_j \lambda_j^2 &=
\left(\sum_{j=1}^\ell a_j \lambda_j\right)^2 - 
\sum_{j=1}^\ell a_j \lambda_j^2 +\epsilon(\lambda_{j_1}-\lambda_{j_2})\left(2\sum_{j=1}^\ell a_j \lambda_j -(\lambda_{j_1}+ \lambda_{j_2} )  \right) +\epsilon^2(\lambda_{j_1}-\lambda_{j_2})^2.        
\end{split}
\end{equation}
Observe that if there is a third index $j_3 \ne j_1, j_2$ such that $a_{j_3}>0,$ then because $\lambda_{j_2}\neq \lambda_{j_3}$, we get that one of $2\sum_{j=1}^\ell a_j \lambda_j -(\lambda_{j_1}+ \lambda_{j_2} )$ and $2\sum_{j=1}^\ell a_j \lambda_j -(\lambda_{j_1}+ \lambda_{j_3} )$ is nonzero. In particular, if $2\sum_{j=1}^\ell a_j \lambda_j -(\lambda_{j_1}+ \lambda_{j_2})\neq 0$, then by virtue of (\ref{der3}), we can choose a value of $\epsilon$ 
so that
\begin{equation*}
\left(\sum_{j=1}^\ell b_j \lambda_j\right)^2 -\sum_{j=1}^\ell b_j \lambda_j^2  < \left(\sum_{j=1}^\ell a_j \lambda_j\right)^2 - \sum_{j=1}^\ell a_j \lambda_j^2.
\end{equation*}
In this case, $(\sum_{j=1}^\ell a_j \lambda_j)^2 - \sum_{j=1}^\ell a_j \lambda_j^2$ cannot attain the minimum value. So in order to find the minimum value, it suffices to focus on expressions of the form $(a \lambda_{j_1} + (1-a) \lambda_{j_2})^2 -( a \lambda_{j_1}^2 + (1-a) \lambda_{j_2}^2 )$ where $a \in [0,1]$ 
and $\lambda_{j_1}, \lambda_{j_2} \in \{\lambda_1, \ldots, \lambda_\ell\}.$ Let $\lambda_{\max}, \lambda_{\min}$ denote the largest and smallest elements of $\{\lambda_1, \ldots, \lambda_\ell\},$ respectively.
Since
\begin{equation*}
(a \lambda_{j_1} + (1-a) \lambda_{j_2})^2 -( a \lambda_{j_1}^2 + (1-a) \lambda_{j_2}^2 ) = -a(1-a)(\lambda_{j_1}-\lambda_{j_2})^2,
\end{equation*} 
we find that the minimum value for $\frac{d^2f}{dt^2}\Big|_{\tau} $ is given by $-\frac{1}{2}(\lambda_{\max} - \lambda_{\min} )^2$, which is attained by $\bx = \frac{1}{\sqrt{2}}(\bu +\bv)$ and $\by =\frac{1}{\sqrt{2}}(\bu -\bv),$ where $\bu, \bv$ are unit eigenvectors corresponding to $\lambda_{\max} $ and $\lambda_{\min},$ respectively. 

We summarize the above discussion as follows. 

\begin{theorem}
\label{thm:sens}
Suppose perfect state transfer occurs between unit vectors $\bx$ and $\by$ at time $\tau$ relative to $M$. Let $\lambda_{\max} = \max \sigma_{\bx}(M), \lambda_{\min} = \min \sigma_{\bx}(M).$ Then $$0 > \
\frac{d^2f}{dt^2}\bigg|_{\tau}\ \ge 
 \ -\frac{1}{2}(\lambda_{\max} - \lambda_{\min} )^2.$$
\end{theorem}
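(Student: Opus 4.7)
The plan is to reduce the two-sided bound on $\frac{d^2f}{dt^2}\big|_\tau$ to an optimization problem on the probability simplex generated by $\sigma_{\by}(M)$, then use Cauchy--Schwarz for the upper bound and a first-order mass-transport argument for the lower bound.

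First I would start from the $k=2$ case of (\ref{Eq:f2}) to obtain
$$\frac{d^2f}{dt^2}\bigg|_\tau = -2\bigl(\by^\top M^2\by - (\by^\top M\by)^2\bigr).$$
Expanding $\by = \sum_{j=1}^{\ell} c_j\bv_j$ in an orthonormal eigenbasis indexed by $\sigma_{\by}(M) = \{\lambda_1,\ldots,\lambda_\ell\}$ and setting $a_j := c_j^2$ (so that $a_j \geq 0$ and $\sum_j a_j = 1$), this becomes $2\bigl((\sum_j a_j\lambda_j)^2 - \sum_j a_j\lambda_j^2\bigr)$, reducing the problem to bounding $g(a) := (\sum_j a_j\lambda_j)^2 - \sum_j a_j\lambda_j^2$ above and below on the probability simplex.

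For the upper bound, I would apply Cauchy--Schwarz to the sequences $(c_j\lambda_j)_j$ and $(c_j)_j$ to get $g(a) \leq 0$, i.e.\ $\frac{d^2f}{dt^2}\big|_\tau \leq 0$. Strictness is the subtle point: equality in Cauchy--Schwarz forces $\by$ to be a scalar multiple of a single eigenvector of $M$, hence a fixed state by Proposition~\ref{Prop:fixedstatechar}; but a fixed state cannot be involved in strong cospectrality, contradicting Lemma~\ref{Lem:pstnec}(1). So $g(a) < 0$ and the inequality is strict.

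For the lower bound, I would show $g$ attains its minimum on a \emph{two-atom} configuration and then evaluate. Transferring mass $\epsilon$ from $a_{j_2}$ to $a_{j_1}$ yields a first-order change in $g$ proportional to $(\lambda_{j_1}-\lambda_{j_2})\bigl(2\sum_k a_k\lambda_k - (\lambda_{j_1}+\lambda_{j_2})\bigr)$. If three coefficients $a_{j_1}, a_{j_2}, a_{j_3}$ are positive, the three first-order coefficients across the three pairs cannot all vanish, since any two vanishing conditions would force two of $\lambda_{j_1}, \lambda_{j_2}, \lambda_{j_3}$ to coincide (contradicting distinctness in $\sigma_{\by}(M)$). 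Choosing the sign of $\epsilon$ appropriately and taking $|\epsilon|$ small produces a strict decrease in $g$, so the minimum lies on a two-atom configuration. On such a configuration $g = -a(1-a)(\lambda_{j_1}-\lambda_{j_2})^2$, minimized at $a = \tfrac{1}{2}$ and $\{\lambda_{j_1},\lambda_{j_2}\} = \{\lambda_{\max},\lambda_{\min}\}$, producing the value $-\tfrac{1}{4}(\lambda_{\max}-\lambda_{\min})^2$ and hence $\frac{d^2f}{dt^2}\big|_\tau \geq -\tfrac{1}{2}(\lambda_{\max}-\lambda_{\min})^2$.

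The main obstacle I anticipate is cleanly executing the three-atom perturbation step. The subtlety is not the decrease itself but ensuring that some pair of indices always yields a nonzero first-order coefficient; the pigeonhole observation on the three quantities $2\sum_k a_k\lambda_k - (\lambda_{j_i}+\lambda_{j_k})$ handles this, and feasibility of the perturbation is guaranteed by choosing $|\epsilon|$ smaller than $\min(a_{j_1}, a_{j_2})$.
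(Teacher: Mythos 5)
Your proposal is correct and follows essentially the same route as the paper's own proof: the $k=2$ case of the derivative formula giving $\frac{d^2f}{dt^2}\big|_{\tau} = -2(\by^\top M^2\by - (\by^\top M\by)^2)$, the eigenbasis expansion with $a_j = c_j^2$ on the simplex, Cauchy--Schwarz for the upper bound with strictness via the fixed-state/strong-cospectrality contradiction, and the same $\epsilon$-mass-transfer perturbation reducing the minimization to two-atom configurations with value $-a(1-a)(\lambda_{j_1}-\lambda_{j_2})^2$. Your pigeonhole phrasing of why some first-order coefficient must be nonzero when three atoms are positive is precisely the paper's observation that $\lambda_{j_2}\neq\lambda_{j_3}$ forces one of the two relevant quantities to be nonzero, so there is no substantive difference.
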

 
\begin{example}
Consider the Petersen graph with adjacency matrix $A,$ which has eigenvalues $3, 1, -2$. Let $\bv_1, \bv_2, \bv_3$ be unit eigenvectors corresponding to $3, 1$ and $-2,$ respectively and form $\bx\in \R^{10}$ as $\bx=c_1\bv_1 + c_2 \bv_2+c_3\bv_3$ where $c_1, c_2, c_3\in \R\backslash\{0\}, c_1^2+c_2^2+c_3^2=1.$ Setting $\by=-c_1\bv_1 - c_2 \bv_2+c_3\bv_3$, we find that there is PST from $\bx\bx^\top$ to $\by\by^\top$ at time $\pi.$ According to Theorem \ref{thm:sens}, $\frac{d^2f}{dt^2}\Big|_{\pi}$ is bounded above by $0$ and below by $-\frac{25}{2}.$
\end{example} 

Combining Theorem \ref{thm:sens} with the proof of Corollary \ref{Cor:minpst}, we obtain the connected unweighted $n$-vertex graphs and the unit vectors admitting PST that attain the least value of $\frac{d^2f}{dt^2}\bigg|_{\tau}$. These unit vectors have the most sensitive fidelity with respect to the PST time, and these are precisely those that minimize the PST time amongst all connected unweighted $n$-vertex graphs.

\begin{corollary}
\label{cor:sens1} 
Amongst all connected unweighted $n$-vertex graphs that admit perfect state transfer between unit vectors at time $\tau$ relative to $L$ (respectively, relative to $A$), the least value of $\frac{d^2f}{dt^2}\bigg|_{\tau}$ is attained by the join graph in Theorem \ref{Thm:minperjoin}(1) (respectively, Theorem \ref{Thm:minperjoin}(2)) and unit vectors $\bx$ and $\by$, where $\by$ is given in Corollary \ref{Cor:minpst}(1) (respectively, Corollary \ref{Cor:minpst}(2)).
\end{corollary}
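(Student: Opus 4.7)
The plan is to combine the lower bound on $\frac{d^2f}{dt^2}\big|_\tau$ from Theorem \ref{thm:sens} with the spread-maximization results from Theorem \ref{Thm:minperjoin}, which already pin down both the graph and the eigenstructure. First, by Theorem \ref{thm:sens}, whenever PST occurs between unit vectors $\bx,\by$ at time $\tau$ relative to $M\in\{A,L\}$, we have $\frac{d^2f}{dt^2}\big|_\tau \geq -\tfrac{1}{2}(\lambda_{\max}-\lambda_{\min})^2$, where the extremes are taken over $\sigma_{\bx}(M)$. Since $\sigma_{\bx}(M)$ is contained in the spectrum of $M$, this bound is at least $-\tfrac{1}{2}\operatorname{spread}(M)^2$, so the least attainable value of $\frac{d^2f}{dt^2}\big|_\tau$ over all $n$-vertex connected unweighted graphs is bounded below by $-\tfrac{1}{2}\max_G \operatorname{spread}(M(G))^2$.

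Next I would invoke Theorem \ref{Thm:minperjoin}, which identifies exactly the $n$-vertex connected unweighted graphs maximizing $\operatorname{spread}(M(G))$: for $M=L$ these are the joins $G_1\vee G_2$ (with spread $n$), and for $M=A$ and all sufficiently large $n$ the unique maximizer is $O_a\vee K_{n-a}$ with $a=\lceil n/3\rceil$ (with spread $\sqrt{(n-a-1)^2+4a(n-a)}$). The eigenvectors for the extreme eigenvalues are precisely those exhibited in Theorem \ref{Thm:minperjoin}. Thus the lower bound on $\frac{d^2f}{dt^2}\big|_\tau$ becomes $-\tfrac{n^2}{2}$ (Laplacian case) and $-\tfrac{1}{2}\bigl((n-a-1)^2+4a(n-a)\bigr)$ (adjacency case).

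For equality in Theorem \ref{thm:sens}, the discussion preceding that theorem shows that one must take $\bx=\tfrac{1}{\sqrt{2}}(\bu+\bv)$ and $\by=\tfrac{1}{\sqrt{2}}(\bu-\bv)$ where $\bu,\bv$ are unit eigenvectors for $\lambda_{\max},\lambda_{\min}$. In our extremal graphs these are (normalizations of) the pair $\ones_n,\; \bigl[\begin{smallmatrix} n_2\ones_{n_1}\\ -n_1\ones_{n_2}\end{smallmatrix}\bigr]$ for the Laplacian case, and the corresponding pair in Theorem \ref{Thm:minperjoin}(2) for the adjacency case. For such $\bx$ we have $|\sigma_{\bx}(M)|=2$ consisting exactly of the spectral extremes, so PST between $\bx$ and $\by$ is guaranteed by Theorem \ref{Thm:PSTchar}(1) (equivalently, Corollary \ref{Cor:size2}), and $\by$ is uniquely determined by Lemma \ref{Lem:pstnec}(3). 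Rewriting $\by=\bx-2\bv\bv^\top\bx/\|\bv\|^2$ (or equivalently $\by=2\bu\bu^\top\bx/\|\bu\|^2-\bx$) yields exactly the closed form for $\by$ recorded in Corollary \ref{Cor:minpst}(1) and (2), respectively.

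The only non-routine step is verifying that the explicit formula for $\by$ in Corollary \ref{Cor:minpst} really coincides with $\tfrac{1}{\sqrt{2}}(\bu-\bv)$ after properly normalizing $\bu$ and $\bv$. For the Laplacian case this is immediate from $\|\ones_n\|^2=n$ and $\bigl\|\bigl[\begin{smallmatrix}n_2\ones_{n_1}\\ -n_1\ones_{n_2}\end{smallmatrix}\bigr]\bigr\|^2=n_1n_2(n_1+n_2)$; for the adjacency case it is a short computation using $\bigl\|\bigl[\begin{smallmatrix}-\lambda^\pm\ones_a\\ a\ones_{n-a}\end{smallmatrix}\bigr]\bigr\|^2=a(\lambda^\pm)^2+a^2(n-a)$ together with the identity $\lambda^+\lambda^-=-a(n-a)$, which rearranges the normalizing constants into the factors $D^\pm$ appearing in Corollary \ref{Cor:minpst}(2). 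Once these are checked, the proof is complete.
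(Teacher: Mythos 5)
Your proposal is correct and follows essentially the same route as the paper, whose proof is the one-line observation that Theorem \ref{thm:sens} combines with the proof of Corollary \ref{Cor:minpst}: the bound $-\tfrac{1}{2}(\lambda_{\max}-\lambda_{\min})^2$ is globally minimized by maximizing the spread of $M$, Theorem \ref{Thm:minperjoin} identifies the spread-maximizing graphs, and equality forces the balanced states $\bx=\tfrac{1}{\sqrt{2}}(\bu+\bv)$, $\by=\tfrac{1}{\sqrt{2}}(\bu-\bv)$ on the extreme eigenvectors, with PST guaranteed by Corollary \ref{Cor:size2}. Your explicit normalization checks (via $\|\ones_n\|^2=n$, $\bigl\|\bigl[\begin{smallmatrix}n_2\ones_{n_1}\\ -n_1\ones_{n_2}\end{smallmatrix}\bigr]\bigr\|^2=n_1n_2(n_1+n_2)$, and $\lambda^+\lambda^-=-a(n-a)$ giving the constants $D^{\pm}$) are accurate and, if anything, slightly more careful than the paper in pinning down that only the balanced combination attains the extremal value.
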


\bigskip

\noindent \textbf{Acknowledgement.}
C.\ Godsil is supported by NSERC grant no.\ RGPIN-9439. S.\ Kirkland is supported by NSERC grant no.\ RGPIN-2025-05547. H.\ Monterde is supported by the University of Manitoba Faculty of Science and Faculty of Graduate Studies. We thank the Department of Mathematics and the Graduate Mathematics Society at the University of Manitoba for supporting the research visit of C.\ Godsil, where we all started to work on this project.

\bibliographystyle{alpha}
\bibliography{mybibfile}
\end{document}